\let\cite\citep % \cite is then \citep of natbib
\renewcommand*\env@matrix[1][\arraystretch]{%
  \edef\arraystretch{#1}%
  \hskip -\arraycolsep
  \let\@ifnextchar\new@ifnextchar
  \array{*\c@MaxMatrixCols c}}
\renewcommand\P{\mathcal{P}}
\newcommand\M{\mathcal{M}}
\newcommand\doubleR{\mathbb{R}}
\newcommand\doubleC{\mathbb{C}}
\renewcommand\1{\textbf{1}}
\newcommand\id{\textit{id}}
\newcommand\T{\mathcal{T}}
\renewcommand\H{\mathcal{H}}
\newcommand\E{\mathcal{E}}
\renewcommand\S{\mathcal{S}}
\newcommand\U{\mathcal{U}}
\newcommand\SO{\mathcal{SO}}
\newcommand\K{\mathcal{K}}
\newcommand\J{\mathcal{J}}
\newcommand\W{\mathcal{W}}
\newcommand\vphi{\varphi}
\renewcommand\u{u_{\text{\tiny L}}}
\renewcommand\l{\text{\tiny{L}}}
\newcommand\ww{\text{\tiny{W}}}
\renewcommand\ss{\text{s}}
\newcommand\n{\text{\tiny{N}}}
\newcommand\sP{\mathsf{P}}
\newcommand\sC{\mathsf{C}}
\newcommand\sT{\mathsf{T}}
\newcommand\sW{\mathsf{W}}
\newcommand\sS{\mathsf{S}}
\newcommand\sE{\mathsf{E}}
\renewcommand\epsilon{\varepsilon}
\newcommand\rarrow{\rightarrow}
\newcommand\LieG{\mathfrak{g}}
\newcommand\LieH{\mathfrak{h}}
\newcommand\co{\mathfrak{co}}
\newcommand\LieK{\mathfrak{k}}
\newcommand\LieJ{\mathfrak{j}}
\renewcommand\t{\widetilde}
\newcommand\h{\widehat}
\renewcommand\b{\bar }
\newcommand\w{\wedge}
\renewcommand\d{\partial}
\newcommand\s{\sigma}
\renewcommand\-{^{-1}}
\newcommand\Ad{\text{Ad}}
\newcommand\ad{\text{ad}}
\renewcommand\id{\text{id}}
\renewcommand\1{\mathds{1}}
\DeclareMathOperator{\Aut}{Aut}
\DeclareMathOperator{\Tr}{Tr}
\newtheorem{thm}{Theorem}
\newtheorem{cor}[thm]{Corollary}
\newtheorem{prop}[thm]{Proposition}
\theoremstyle{definition}
\begin{document}

%%%%%%%%%%%%%%%%%%%%%%%%%%
% standard LaTeX
\title{Tractors and Twistors from conformal Cartan geometry: a gauge theoretic approach \\ 
\center I. Tractors}
\author{J. Attard,${\,}^a$ \hskip 0.7mm J. François${\,}^b$ }
\date{}

\maketitle
\begin{center}

\vskip -0.5cm
${}^a$ Aix Marseille Univ, Université de Toulon, CNRS, CPT, Marseille, France\\
${}^b$ Corresponding author: jordanfrancois54@gmail.com
\end{center}
%%%%%%%%%%%%%%%%%%%%%%%%%%

\begin{abstract}
Tractors and Twistors bundles both provide natural conformally covariant calculi on $4D$-Riemannian manifolds. They have different origins but are closely related, and usually constructed bottom-up from prolongation of defining differential equations. We propose alternative top-down gauge theoretic constructions starting from the conformal Cartan bundle $\P$ and its vectorial $E$ and spinorial $\sE$ associated bundles.
Our key ingredient is the dressing field method of gauge symmetry reduction, which allows to exhibit tractors and twistors and their associated connections as gauge fields of a non-standard kind as far as Weyl rescaling symmetry is concerned. By which we mean that they implement the gauge principle but are of a different geometric nature than the well known differential geometric objects usually underlying gauge theories.   
We provide the corresponding BRST treatment.  The present paper deals with the case of tractors while  a companion paper deals with twistors. 
\end{abstract}

\textit{Keywords} : tractors, twistors,  Cartan connections, conformal symmetry, gauge field theories, BRST algebra.

\vspace{1mm}

%PACS numbers:  02.40.Hw, 11.15.-q, 11.25.Hf.

\vspace{1.5mm}

%%%%%%%%%%% MAIN TEXT %%%%%%%%%%%%%%%%%

\tableofcontents

\section{Introduction}  %%%%%%%%%%%%%%%%%%%%%%%%%%%%%%%

Following Einstein's General Relativity (GR), the interest of local scale symmetry for physics began with Weyl's $1918$ unified theory of gravity and electromagnetism \cite{Weyl1918, Weyl1919, ORaif} which is also the root of the very idea of gauge symmetry. In $1921$ Bach proposed  what is now called  Weyl or conformal gravity \cite{Bach1921}. 
 Shortly after the inception of Yang-Mills' $1954$ non-abelian theory, Utiyama pioneered the  systematic gauging of global symmetry groups \cite{Utiyama1956, ORaif} and opened the way for a gauge theoretic formulation of gravitation.
  Some twenty years later, in the wake of the interests in supersymmetry, several authors investigated formulations of gravity and supergravity, gauging Poincaré and deSitter groups and their supersymmetric extensions \cite{Cham-West1977,Townsend1977,McDowell-Mansouri,West1978,Stelle-West1979}. Some others addressed the gauging of the full $15$-parameters conformal group $SO(2, 4)$,\footnote{First introduced to physics, according to \cite{Fulton_et_al1962}, by \cite{Cunningham} and \cite{Bateman1909, Bateman1910} in connection with Special Relativity and the invariance of Maxwell's equations.}
   which extends the $10$-parameters Poincaré group by $4$ special conformal transformations - also known as inversions or conformal boosts - and Weyl rescalings  \cite{CrispimRomao1977, Ferber-Freund1977, Kaku_et_al1977, Kaku-et-al1978, Harnad-Pettitt1, Harnad-Pettitt2}.  Since the mid $80$'s to this day conformal symmetry is central to subjects generating a vast literature such as phase transitions in statistical physics, conformal field theory (CFT) and string theory. 

In the midst of this story, in the $60$'s and early $70$'s, Penrose followed its own original path and devised twistor theory as an alternative framework for physics, and quantum gravity, in which conformal symmetry is pivotal \cite{Penrose1960, Penrose1967, Penrose1968, Penrose-McCallum72}. In twistor theory spinors quantities takes over the role of tensors and the - conformally compactified - Minkowski space is seen as secondary, emerging from a more fundamental twistor space hoped to be more fit for quantization.  A generalization to arbitrary pseudo-riemannian manifolds gave rise to the concept of local twistors, which provides a conformal spinorial calculus. 
Twistor theory remains an active area of research in physics. 
\smallskip

Parallel progresses unfolded in mathematics. The renewal of differential geometry  through the development of the theory of connections sparked by Einstein's GR and the current of ideas aiming at its improvement by unified field theories, is well documented (see e.g \cite{Goenner2004}). It started with the theory of parallel transport of Levi-Civita and Schouten in $1917$ and $1918$ and came to a climax in $1922$-$1926$ with Cartan's notion of moving frames and $\emph{espace généralisés}$. Those are manifolds with torsion in addition to curvature and classic examples are manifolds endowed with projective and conformal connections. The next important step was achieved in the late $30$'s when Whitney gave the first definition of fibered manifolds. Then, in the late $40$'s and early $50$'s Ehresman, student of Cartan, proposed a notion of connection on fibered manifolds, superseding that of Cartan connection by its abstractness and generality. During the following twenty years, the geometry of Ehresmann connections on fiber bundles came to be recognized as the mathematical underpinning of Yang-Mills theories. After being largely overlooked, Cartan connections were more recently acknowledged as the adequate framework for gravitational theories, see e.g \cite{Wise09, Wise10}.

 Early in this story, in $1925$-$26$, Thomas independently developed a calculus on conformal (and projective) manifolds, analogous to Ricci calculus on Riemannian manifolds, alternative to Cartan's viewpoint. His work was rediscovered and expanded in \cite{Bailey-et-al94} where it was given its modern guise as a vector bundle  called \emph{standard tractor bundle} endowed with a linear connection, the \emph{tractor connection}. In recent years this conformal tractor calculus has been of interest for physicists, see e.g \cite{Gover-Shaukat-Waldron09, Gover-Shaukat-Waldron09-2}
\medskip

Tractors and twistors are closely linked. It has been noticed that both define vector bundles associated to the conformal Cartan principal bundle $\P(\M, H)$, with $H$ the parabolic subgroup of $SO(2, 4)$\footnote{In the case of tractors this generalizes to $SO(r+1, s+1)$ if $\M$ is a Lorentzian $(r, s)$-manifold. For twistors $\M$ must be $4$-dimensional.} comprising Lorentz, Weyl and conformal boost symmetries, and that tractor and twistor connections are induced by the so-called \emph{normal Cartan connection} $\varpi_\n$ on $\P$. See e.g \cite{Friedrich77} for the twistor case. 
 
In standard presentations however, both tractors and local twistors are constructed through the prolongation of defining differential equations defined on a Riemannian manifold $(\M, g)$: the Almost Einstein (AE) equation and twistor equation respectively. The systems thus obtained are linear and closed, so that they can be rewritten as  linear operators acting on multiplets of variables called parallel tractors and global twistors respectively. The  behavior of the latter under Weyl rescaling of the underlying metric is given by definition and commutes with the actions of their associated  linear operators, which are then respectively called tractor and twistor connections. The multiplets are then seen as parallel sections  of  vectors bundles, the tractor and local twistor bundles, endowed with their linear connections. 
For the procedure in the tractor case see \cite{Bailey-et-al94}, or \cite{Curry-Gover2015} for a more recent and detailed review. For the twistor case see the classic \cite{Penrose-Rindler-vol2}, or even \cite{Bailey-Eastwood91} which generalizes the twistor construction to paraconformal (PCF) manifolds. 

This constructive procedure via prolongation has been deemed more explicit \cite{Bailey-Eastwood91}, more intuitive and direct \cite{Bailey-et-al94} than the viewpoint in terms of vector bundles associated to $\P(\M, H)$. Since it starts from $(\M, g)$ to built  a gauge structure on top of it - vector bundles endowed with connections - we may call-it a ``bottom-up'' approach.  
\medskip

Here and in a companion paper, we would like to put forward a ``top-down'' approach to tractors and twistors that relies on a gauge theoretic method of gauge symmetry reduction: the dressing field method. Given a gauge structure (fiber bundles with connections) on $\M$, this method allows to systematically construct partially gauge-invariant composite fields built from the usual gauge fields and a so-called \emph{dressing field}. According to the transformations of the latter under the residual gauge symmetry, the composite fields  display interesting properties. In a noticeable case they are actually gauge fields of non-standard kind, meaning that they implement the gauge principle but are not of the same geometric nature as  the usual Yang-Mills fields. The dressing field method fits in the BRST framework. 

When the gauge structure on $\M$ is the conformal Cartan bundle $\P(\M, H)$ endowed with a Cartan connection and two vector bundles $E$ and $\sE$ associated to the defining representation $\doubleR^6$ and spin representation $\doubleC^4$ of $SO(2, 4)$, the dressing field method allows to erase the conformal boost gauge symmetry. The composite fields obtained by dressing of the sections of $E$ and $\sE$ are exactly tractors and twistors, while the dressed Cartan connection straightforwardly induces generalized tractor and twistor covariant derivatives. If the normal Cartan connection is dressed, these reduce exactly to the tractor and twistor derivatives. We stress that tractors and twistors thus obtained, while being genuine standard gauge fields with respect to (w.r.t) Lorentz gauge symmetry, are examples of non-standard gauge field alluded to above w.r.t Weyl gauge symmetry. This, we think, is a new consideration worth emphasizing. 
\medskip

Twistors will be dealt with in  a companion paper. The present one focuses on tractors and is organized as follows. In section \ref{The geometry of gauge fields} we review the basics of differential geometry underlying gauge theories, including Cartan geometry, as well as the BRST formalism, so as to fix notations and define useful concepts. 

In section \ref{Reduction of gauge symmetries: the dressing field method} we review in great details the dressing field method of gauge symmetry reduction. We prove a number of general propositions for subsequent use, with special emphasis on the emergence of the composite fields as gauge fields of a non-standard kind. We also cover the local aspects, necessary for physics, and provide the BRST treatment. This is the most comprehensive presentation of the method yet, besides  \cite{Francois2014}.

Then in section \ref{Tractors from conformal Cartan geometry via dressing} we put this material to use: after a brief review of the ``bottom-up'' procedure for tractors, we construct tractors and tractor connection ``top-down'' from the conformal Cartan bundle through two successive dressing operations. Residual Lorentz and Weyl gauge symmetries are analyzed both at the finite and BRST level. 
We summarize our main results and gather our comments  in our conclusion \ref{Conclusion}.

\section{The geometry of gauge fields}   %%%%%%%%%%%%%%%%%%%%%%%%%%%%%%%%%%%%%%%%%%%%%%%%
\label{The geometry of gauge fields}   %%%%%%%%%%%%%%%%%%%%%%%%%%%%%%%%%%%%%%%%%%%%%%%%

Gauge theories are a cornerstone of modern physics built on the principle that the fundamental interactions originate from local symmetries called gauge symmetries. 
The mathematics underlying classical gauge theories is now widely known to be the differential geometry of fiber bundles and connections supplemented by the differential algebraic BRST approach.
Of these we briefly recall the basic features in this section, if only to fix notations, before exposing the dressing field method in the next.

\subsection{Basic differential geometry} %%%%%%%%%%%%%%%%%%%%%%%%%%%%%%%%%%%%%%%%%%%%%%%
\label{Basic differential geometry} %%%%%%%%%%%%%%%%%%%%%%%%%%%%%%%%%%%%%%%%%%%%%%%

Let $\P(\M, H)$ be a principal fiber bundle over a smooth $n$-dimensional manifold $\M$, with structure Lie group $H$ and projection map $\pi: \P \rarrow \M$.  Given a representation $(\rho, V)$ for $H$ we have the associated bundle $E:=P\times_\rho V$, whose sections are in bijective correspondence with $\rho$-equivariant maps on $\P$: $\t\vphi \in \Gamma(E) \leftrightarrow \vphi \in \Lambda^0(\P, \rho)$.

Given the right-action  $R_hp=ph$ of $H$ on $\P$, a  $V$-valued $n$-form $\beta$ is  said $\rho$-equivariant iff $R^*_h\beta=\rho(h\-)\beta$. Let $X^v\in V\P\subset T\P$ be a vertical vector field induced by the infinitesimal action of $X \in \LieH=$ Lie$H$ on $\P$. A form $\beta$ is said horizontal if $\beta(X^v, \ldots)=0$. A form $\beta$ is said $(\rho, V)$-tensorial if it is both horizontal and $\rho$-equivariant. 
\smallskip

Let $\omega \in \Lambda^1(\P, \LieH)$ be a choice of connection on $\P$: it is $\Ad$-equivariant and satisfies $\omega(X^v)=X$. The horizontal subbundle $H\P \subset T\P$, the non-canonical complement of $V\P$, is defined by $\ker \omega$. Given $Y^h \in H\P$ the horizontal projection of a vector field $Y \in T\P$, the covariant derivative of a $p$-form $\alpha$ is defined by $D\alpha:=d\alpha(Y_1^h, \ldots, Y_p^h)$. 

The connection's curvature form $\Omega \in \Lambda^2(\P, \LieH)$  is defined as its covariant derivative, but is algebraically given by the Cartan structure equation $\Omega=d\omega+\tfrac{1}{2}[\omega, \omega]$. 
Given a $(\rho, V)$-tensorial $p$-forms $\beta$ on $\P$,  its covariant derivative  is a $(\rho, V)$-tensorial $(p+1)$-form  algebraically given by $D\beta= d\beta +\rho_*(\omega)\beta$.  Furthermore  $D^2\beta=\rho_*(\Omega)\beta$. 

Since $\vphi$ is a $(\rho, V)$-tensorial $0$-form, its covariant derivative is the $(\rho, V)$-tensorial $1$-form $D\vphi:=d\vphi + \rho_*(\omega)\vphi$. The section $\vphi$ is said parallel if $D\vphi=0$. One can show that the curvature $\Omega$ is a $(\Ad, \LieH)$-tensorial $2$-form, so its covariant derivative is $D\Omega=d\Omega +\ad(\omega)\Omega=d\Omega +[\omega, \Omega]$. Given the Cartan structure equation, this vanishes identically and provides the Bianchi identity $D\Omega=0$.

Given a local section $\s: \U \subset \M \rarrow \P$, we have that $\s^*\omega \in \Lambda^1(\U, \LieH)$ is a Yang-Mills gauge potential, $\s^*\Omega \in \Lambda^2(\U, \LieH)$ is the Yang-Mills field strength and $\s^*\vphi$ is a matter field, while $\s^*D\vphi=d\s^*\vphi+\rho_*(\s^*\omega)\s^*\vphi$ is the minimal coupling of the matter field to the gauge potential. 
\smallskip

The group of vertical automorphisms of $\P$, $\Aut_v(\P):=\left\{\Phi:\P \rarrow \P \ |\  h\in H, \Phi(ph)=\Phi(p)h \text{ and } \pi \circ \Phi= \Phi \right\}$ is isomorphic to  the gauge group $\H:=\left\{ \gamma :\P \rarrow H\ | \  R^*_h\gamma(p)=h\- \gamma(p) h  \right\}$, the isomorphism being $\Phi(p)=p\gamma(p)$. The composition law of $\Aut_v(\P)$, $\Phi_2^*\Phi_1:=\Phi_1 \circ \Phi_2$, implies that the gauge group acts on itself by $\gamma_1^{\gamma_2}:=\gamma_2^{-1} \gamma_1 \gamma_2$. 

The gauge group $\H \simeq \Aut_v(\P)$ acts on the connection, curvature and $(\rho, V)$-tensorial forms as,
\begin{align}
\label{ActiveGT}
&\omega^\gamma:=\Phi^*\omega=\gamma\-\omega\gamma + \gamma\- d\gamma, \quad \Omega^\gamma:=\Phi^*\Omega=\gamma\-\Omega \gamma, \\ 
&\vphi^\gamma:= \Phi^*\vphi=\rho(\gamma\-)\vphi, \quad  \text{and} \quad (D\vphi)^\gamma=D^\gamma \vphi^\gamma=\Phi^*D\vphi=\rho(\gamma\-)D\vphi.\notag
\end{align}
These are \emph{active} gauge transformations, formally identical but to be conceptually distinguished from \emph{passive} gauge transformations relating two local descriptions of the same global objects. Given two local sections related via  $\s_2=\s_1 h$, either over the same open set $\U$ of $\M$ or over the overlap of two open sets $\U_1 \cap \U_2$, one finds
\begin{align}
\label{PassiveGT}
&\s_2^*\omega=h\-\s_1^*\omega\  h + h\- dh, \quad \s_2^*\Omega=h\-\s_1^*\Omega\  h, \\ 
&\s_2^*\vphi= \rho(h\-)\s_1^*\vphi, \quad  \text{and} \quad \s_2^*D\vphi=\rho(h\-)\s_1^*D\vphi.\notag
\end{align}
This distinction between active vs passive gauge transformations is reminiscent of the distinction diffeomorphism vs coordinate transformations in General Relativity. 
\smallskip

If the manifold is equipped with a $(r, s)$-Lorentzian metric allowing for a Hodge star operator, and if $V$ is equipped with an inner product $\langle\ , \rangle$, then the prototypical Yang-Mills Lagrangian $m$-form  for a gauge theory is 
\begin{align*}
L(\s^*\omega, \s^*\vphi)=\tfrac{1}{2}\Tr[\s^*\Omega \w * (\s^*\Omega) ]+ \langle \s^*D\vphi, *\s^*D\vphi\rangle - U(\s^*\vphi),
\end{align*}
where $U$ is a potential term for the matter field, as is necessary for the spontaneous symmetry breaking (SSB) mechanism in the electroweak sector of the Standard Model. 

\subsection{Cartan geometry} %%%%%%%%%%%%%%%%%%%%%%%%%%%%%%%%%%%%%%%%%%%%%%%
\label{Cartan geometry} %%%%%%%%%%%%%%%%%%%%%%%%%%%%%%%%%%%%%%%%%%%%%%%

Connections $\omega$ on $\P$ such as described, known as Ehresmann or principal connections, are well suited to describe Yang-Mills fields of gauge theory. They are the heirs of another notion of connection, best suited  to describe gravity in a gauge theoretical way: Cartan connections. 
A Cartan connection $\varpi$ on a principal bundle $\P(M, H)$, beside satisfying the two defining properties of a principal connection,  defines an absolute parallelism on $\P$. A bundle equipped with a Cartan connection is a Cartan geometry, noted $(\P, \varpi)$. 

Explicitly, given a Lie algebra $\LieG \supset \LieH$ with dim $\LieG=$ dim $T_p\P$ for which a group is not necessarily chosen, a Cartan connection is $\varpi \in \Lambda^1(\P, \LieG)$ satisfying: $\varpi (X^v)=X$, $R^*_h\varpi =\Ad_{h\-}\varpi$ and $\varpi_p : T_p\P \rarrow \LieG$ is a linear isomorphism $\forall p \in \P$. This last defining property implies that the geometry of the bundle $\P$ is much more intimately related to the geometry of the base spacetime manifold $\M$, hence the fitness of Cartan geometry to describe gravity in the spirit of Einstein's insight. Concretely one can show that $T\M \simeq \P\times_H \LieG/\LieH$, and the image of $\varpi$ under the projection $\tau: \LieG \rarrow \LieG/\LieH$ defines a generalized soldering form, $\theta:=\tau(\varpi)$. The latter, more commonly known as the vielbein in the physics literature, implements (a version of) the equivalence principle and accounts for the specificities of gravity among other gauge interactions.  The $(\Ad, \LieG)$-tensorial curvature $2$-form $\b\Omega$ of $\varpi$ is defined through the Cartan structure equation: $\b\Omega=d\varpi + \tfrac{1}{2}[\varpi, \varpi]$. Its $\LieG/\LieH$-part is the torsion $2$-form $\Theta:=\tau(\b\Omega)$.

Given a $\Ad_H$-invariant bilinear form $\eta$ of signature $(r, s)$  on $\LieG/\LieH$, a $(r, s)$-metric $g$  on $\M$ is induced  via  $\varpi$ according to $g(X, Y):=\eta\left( \s^*\theta(X), \s^*\theta(Y)\right)$, for $X, Y \in T\M$ and $\s :\U\subset \M \rarrow \P$ a trivializing section. 

In the case $\LieG$ admits a $\Ad_H$-invariant splitting $\LieH + \LieG/\LieH$, the Cartan geometry is said reductive. Then one has $\varpi=\omega+\theta$, where $\omega$ is a principal $H$-connection, and $\b\Omega=\Omega + \Theta$ with $\Omega$ the curvature of $\omega$. 
As an example, the Cartan geometry with $(\LieG,\LieH)$ the Poincaré and Lorentz Lie algebras is Riemann geometry with torsion.

Given a group $G$ and a closed subgroup $H$, $G/H$ is a homogeneous manifold and $G \xrightarrow{\pi} G/H$ is a $H$-principal bundle. The Maurer-Cartan form $\varpi_\text{\tiny G}$ on $G$ is a flat Cartan connection. So $(G, \varpi_\text{\tiny G})$ is a flat Cartan geometry, sometimes referred to as the Klein model for the geometry $(\P, \varpi)$,  which is thus said to be of type $(G, H)$. 

Let $V$ be a $(\LieG, H)$-module, i.e it supports a $\LieG$-action $\rho_*$ and a $H$-representation $\rho$ whose differential coincides with the restriction of the  $\LieG$-action  to $\LieH$. The Cartan connection defines a covariant derivative on $(\rho, V)$-tensorial forms. On sections of associated bundles, i.e on $\rho$-equivariant maps $\vphi$, we have: $D\vphi:=d\vphi + \rho_*(\varpi)\vphi$. As usual $D^2 \vphi=\rho_*(\b\Omega)\vphi$. On the curvature it gives the Bianchi identity: $D\b\Omega=d\b\Omega+[\varpi, \b\Omega]=0$.

The gauge group $\H \simeq \Aut_v(\P)$ acts on $\varpi$ and $\b\Omega$ as it does on $\omega$ and $\Omega$ in \eqref{ActiveGT}.  The definition of local representatives via sections of $\P$, local gauge transformations and gluing properties thereof proceeds as in the standard case.

\subsection{The BRST framework}  %%%%%%%%%%%%%%%%%%%%%%%%%%%%%%%%%%%%%%%%%%%%
\label{The BRST framework}  %%%%%%%%%%%%%%%%%%%%%%%%%%%%%%%%%%%%%%%%%%%%

The infinitesimal version of \eqref{ActiveGT} can be captured by the so-called BRST differential algebra. %\cite{BRS-74, BRS-75, BRS-76}.
 Abstractly \cite{Dubois-Violette1987} it is a   bigraded  differential algebra generated by  $\{\omega, \Omega, v, \chi\}$ where $v$ is the so-called ghost and the generators are respectively of degrees $(1, 0)$, $(2, 0)$, $(0, 1)$ and $(1, 1)$.  It is endowed with two nilpotent antiderivations $d$ and $s$, homogeneous of degrees $(1, 0)$ and $(0, 1)$ respectively, with vanishing anticommutator: $d^2=0=s^2$, $sd+ds=0$.  The algebra is equipped with a bigraded commutator $[\alpha, \beta]:=\alpha\beta -(-)^{\text{deg}[\alpha]\text{deg}[\beta]}\beta\alpha$. Notice that if the commutator vanishes identically, the BRST algebra is a  bigraded commutative differential  algebra.  
The action of $d$ is defined on the generators by: $d\omega=\Omega -\tfrac{1}{2}[\omega, \omega]$ (Cartan structure equation), $d\Omega=[\Omega, \omega]$ (Bianchi identity), $dv=\chi$ and $d\chi=0$. The action of the BRST operator on the generators gives the usual defining relations of the BRST algebra,
\begin{align}
\label{BRST}
s\omega=-dv -[\omega, v], \quad s\Omega=[\Omega, v], \quad  \text{ and } \quad sv=-\tfrac{1}{2}[v, v]. 
\end{align}
 Defining the degree $(1, 1)$ homogeneous antiderivation $\t d:=d+s$ and so-called algebraic connection $\t \omega:=\omega + v$,  \eqref{BRST} can be compactly rewritten as $\t \Omega:= \t d \t \omega + \tfrac{1}{2}[\t \omega, \t \omega]=\Omega$.  This is known as the ``russian formula''  \cite{Stora1984, Manes-Stora-Zumino1985} or ``horizontality condition'' \cite{Baulieu-TMieg1984, Baulieu-Bellon1986}. 
One is free to supplement this algebra with an element $\vphi$ of degrees $(0, 0)$  supporting a linear representation $\rho_*$ of the algebra as well as the action of the antiderivations,  so that upon defining $D:=d + \rho_*(\omega)$ one has consistently $D^2\vphi=\rho_*(\Omega)\vphi$ and 
\begin{align}
\label{BRST2}
s\vphi=-\rho_*(v)\vphi, \quad \text{ and } \quad sD\vphi=-\rho_*(v)D\vphi.
\end{align}

When the abstract BRST algebra is realized in the above differential geometric setup, the bigrading is according to the de Rham form degree and  ghost degree, $d$ is the de Rham differential on $\P$ (or $\M$) and $s$ is the de Rham operator on $\H$. The ghost is the Maurer-Cartan form on $\H$ so that $v \in \Lambda^1(\H, \text{Lie}\H)$, and given $\xi \in T\H$, $v(\xi) :\P \rarrow \LieH  \in \text{Lie}\H$ \cite{Bonora-Cotta-Ramusino}.  So in practice the ghost can be seen as a map $v:  \P\rarrow \LieH \in \text{Lie}\H$, a place holder that takes over the role of the infinitesimal gauge parameter. Thus the first two  relations of \eqref{BRST} and \eqref{BRST2} reproduce the infinitesimal gauge transformations of the gauge fields \eqref{ActiveGT}, while the third equation in \eqref{BRST} is the Maurer-Cartan structure equation for the gauge group $\H$. 

The BRST framework provides an algebraic way to characterize relevant quantities in gauge theories, such as admissible Lagrangian forms, observables and anomalies. Quantities of degree $(r, g)$ that are $s$-closed, that is $s$-cocycles $\in Z^{r,g}(s):=\ker s$, are gauge invariant. Quantities of degree $(r, g)$ that  are $s$-exact are $s$-coboundaries $\in B^{r,g}(s):=\text{Im } s$. Since $s^2=0$ obviously $B^{r,g}(s) \subset Z^{r,g}(s)$ and one defines the $s$-cohomology group $H^{r, g}(s):=Z^{r,g}(s)/B^{r,g}(s)$, elements of which  differing only by a coboundary, $c'=c+sb$, define the same cohomology class. Non-trivial Lagrangians and observables  must belong to $H^{n, *}(s)$.\footnote{If suitable boundary conditions are imposed on the fields of the theory or if the spacetime manifold is boundaryless,  the requirement of quasi-invariance of the Lagrangian, $sL=d\alpha$, is enough to ensure the invariance of the action, $\S=\int L$. So that one may consider $H^{r,g}(s|d)$, the $s$-modulo-$d$-cohomology instead of the strict $s$-cohomology.} For example, given a properly gauge invariant Yang-Mills Lagrangian $L$, $sL=0$, the prototypical Faddeev-Popov gauge-fixed Lagrangian is $L'=L+sb$, where $b$ is of degree $(n, -1)$ (since it involves an antighost, not treated here), and both belong to the same $s$-cohomology class in $H^{n, 0}(s)$. Wess-Zumino consistent gauge anomalies $\mathsf A$ - quantum gauge symmetry breaking of the quantum action $W=e^{iS}$, $sW=\mathsf A$ - belong to $H^{n,1}(s)$.

\section{Reduction of gauge symmetries: the dressing field method}   %%%%%%%%%%%%%%%%%%%%%%%%%%%%%%%%%%%%%%%
\label{Reduction of gauge symmetries: the dressing field method}   %%%%%%%%%%%%%%%%%%%%%%%%%%%%%%%%%%%%%%%

As insightful as the gauge principle is, gauge theories suffer from \emph{prima facie} problems such as an ill-defined quantization procedure due to the divergence of their path integral, and the masslessness of the interaction mediating fields (at odds with the phenomenology of the weak interaction). These drawbacks are rooted in the very thing that is the prime appeal of gauge theories: the gauge symmetry. Hence the necessity to come-up with strategies to reduce it. Broadly, two standard strategies to do so, addressing either problems respectively, are gauge fixings and SSB mechanisms. 
Furthermore, similarly to what happens in General Relativity (GR), it may not be straightforward to extract physical observables in gauge theories. In GR, observables must be diffeomorphism-invariant. In gauge theories, observables must be gauge-invariant, e.g: the abelian (Maxwell-Faraday) field strength or Wilson loops.

The dressing field approach is a third way, besides gauge fixing and SSB, to systematically reduce gauge symmetries. As such it may dispense to fix a gauge, can be a substitute to SSB (see \cite{Masson-Wallet,GaugeInvCompFields, Francois2014}) and provides candidate physical observables. 

\subsection{Composite fields}  %%%%%%%%%%%%%%%%%%%%%%%%%%%%%%%%%%%%%%%%%%%%%%
\label{Composite fields}  %%%%%%%%%%%%%%%%%%%%%%%%%%%%%%%%%%%%%%%%%%%%%%

Let $\P(\M, H)$ be a principal bundle equipped with a connection $\omega$ with curvature $\Omega$, and let $\vphi$ be a $\rho$-equivariant map on $\P$ to be considered as a section of the associated vector bundle $E=\P \times_H V$.  The gauge group is $\H \simeq \Aut_v(\P)$. The main content of the dressing field approach as a gauge symmetry reduction scheme is in the following

\begin{prop} %%%%%%%%%%%%%%%%%%%%%%%%%%%%%%%%%%%%%%%%%%%%%
\label{P1}
 If $K$ and $G$ are subgroups of $H$ such that $K\subseteq G \subset H$. Note $\K\subset \H$  the gauge subgroup associated with $K$. Suppose there exists a map
\begin{align} 
\label{DF}
u:\P \rarrow G \quad \text{ defined by its $K$-equivariance property }\quad  R_k^*u=k\-u,
\end{align}
  This map $u$, that we will call a \emph{dressing field}, allows to construct through $f: \P \rarrow \P$ given by $f(p)=pu(p)$, the following \emph{composite fields}
 \begin{align}
\label{CompFields}
\omega^u:&=f^*\omega=u\-\omega u+u\-du, \qquad \Omega^u:=f^*\Omega=u\-\Omega u =d\omega^u+\tfrac{1}{2}[\omega^u, \omega^u],\notag\\[1mm]
\vphi^u:&=f^*\vphi= \rho(u\-)\vphi \qquad \text{and} \qquad D^u\vphi^u:=f^*D\vphi=\rho(u\-)D\vphi=d\vphi^u+\rho_*(\omega^u)\vphi^u. 
\end{align}
 which are $\K$-invariant, $K$-horizontal and thus project on the quotient subbundle $\P/K \subset \P$.
 \end{prop}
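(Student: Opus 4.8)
The plan is to prove the three asserted properties in turn, with a single equivariance computation for the map $f$ doing most of the work. First I would pin down how the dressing field responds to the active action of $\K$. For $\gamma \in \K$ the associated vertical automorphism is $\Phi(p) = p\gamma(p)$ with $\gamma(p) \in K$, so that
\begin{align*}
u^\gamma(p) := (\Phi^*u)(p) = u\big(p\,\gamma(p)\big) = \gamma(p)\- u(p),
\end{align*}
the last equality being exactly the defining $K$-equivariance \eqref{DF} applied at $k = \gamma(p) \in K$. I would emphasise that this law $u^\gamma = \gamma\- u$ is legitimate only for $\gamma \in \K$, which is the structural reason the procedure reduces the $K$-symmetry and nothing larger; and that $K \subseteq G$ guarantees $\gamma\- u$ is again $G$-valued, so $u^\gamma$ is still an admissible dressing field.

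The $\K$-invariance then falls out by substituting $u \to u^\gamma = \gamma\- u$ and the standard laws \eqref{ActiveGT} for $\omega, \Omega, \vphi, D\vphi$ into the defining formulas \eqref{CompFields}. For the connection,
\begin{align*}
(u^\gamma)\-\,\omega^\gamma\, u^\gamma + (u^\gamma)\-\,d(u^\gamma) = u\-\big(\gamma\,\omega^\gamma\,\gamma\-\big)u + u\-\gamma\, d(\gamma\- u),
\end{align*}
and expanding $\omega^\gamma = \gamma\-\omega\gamma + \gamma\- d\gamma$ and using $\gamma\, d\gamma\- = -\,d\gamma\,\gamma\-$ the inhomogeneous pieces cancel, leaving $u\-\omega u + u\-du = \omega^u$. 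The tensorial cases are quicker still: the conjugations and $\rho$-factors telescope, giving $(\Omega^u)^\gamma = \Omega^u$, $(\vphi^u)^\gamma = \vphi^u$ and $(D^u\vphi^u)^\gamma = D^u\vphi^u$. I would display only the connection calculation and leave the rest as routine.

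For $K$-horizontality and the descent to $\P/K$ I would exploit one feature of $f$ itself: for $k \in K$,
\begin{align*}
f(pk) = pk\,u(pk) = pk\,k\- u(p) = p\,u(p) = f(p),
\end{align*}
again by \eqref{DF}. Hence $f$ is constant along the orbits of the right $K$-action. Infinitesimally this gives $f_*X^v = 0$ for every vertical field $X^v$ with $X \in \LieK$, so that any composite field, being some $f^*\alpha$, satisfies $(f^*\alpha)(X^v, \cdots) = \alpha(f_*X^v, \cdots) = 0$ and is $K$-horizontal. Globally, $f$ factors through the canonical projection $\pi_K : \P \rarrow \P/K$ as $f = \b f \circ \pi_K$, so each composite field is a pullback $f^*\alpha = \pi_K^*(\b f^*\alpha)$ from $\P/K$; equivalently it is $R_K^*$-invariant, and together with $K$-horizontality this makes it $K$-basic, hence a well-defined object on the quotient subbundle $\P/K \subset \P$.

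The only genuinely delicate step is the first one: one must verify that the gauge law $u^\gamma = \gamma\- u$ is both derived from and consistent with the defining equivariance of $u$, and that it is available precisely for $\gamma$ valued in $K$. Once that is in hand, the $\K$-invariance is a mere telescoping of conjugations, while both horizontality and descent are immediate consequences of $f$ being constant on the right $K$-orbits. I would also remark that $u$ taking values in $G \subseteq H$ is what keeps the conjugations $u\-\omega u$ and $u\-\Omega u$ inside $\LieH$, ensuring the composite fields are of the same geometric type as their undressed counterparts.
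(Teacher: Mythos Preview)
Your argument is correct and close in spirit to the paper's, but the emphasis and one of the key steps differ.

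For $\K$-invariance you take the algebraic route, feeding $u^\gamma=\gamma\- u$ and the standard laws \eqref{ActiveGT} into the explicit formulas \eqref{CompFields}. The paper instead first argues geometrically: from $f(pk)=f(p)$ it concludes $f\circ\Phi=f$ for $\Phi\in\Aut_v(\P)$ attached to $\gamma\in\K$, hence $\Phi^*f^*=f^*$ and all composite fields, being pullbacks by $f$, are $\K$-invariant in one stroke. Your algebraic check appears in the paper only as a remark (the ``NB'') after the geometric proof. Both are valid; the geometric version is shorter and uniform across all fields, while yours has the pedagogical merit of showing explicitly how the inhomogeneous piece of the connection's transformation cancels.

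For $K$-horizontality your route is genuinely different and, arguably, cleaner: you observe that $f(pk)=f(p)$ implies $f_*X^v=0$ for $X\in\LieK$, so every $f^*\alpha$ annihilates such $X^v$ automatically. The paper does not use this; it verifies $\omega^u(X^v)=0$ and $(D^u\vphi^u)(X^v)=0$ by direct computation from the explicit formulas, expanding $u\- du(X^v)$ via the equivariance of $u$. Your argument is more conceptual and handles all four fields at once; the paper's has the virtue of being self-contained once the explicit formulas are established.

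One thing you skip that the paper does carry out: the derivation of the explicit expressions themselves, i.e.\ that $f^*\omega=u\-\omega u+u\- du$, $f^*\Omega=u\-\Omega u$, etc. These equalities are part of the statement of the proposition, and the paper obtains them by computing the pushforward $f_*X_p=R_{u(p)*}X_p+\{[u\- du]_p(X_p)\}^v_{pu(p)}$ and evaluating $\omega$, $\Omega$, $\vphi$ on it. Since your invariance argument uses the right-hand sides while your horizontality argument uses the left-hand sides, you are implicitly relying on this identification; it would be worth a line to justify it, even if the computation is standard.
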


 \begin{proof}  %%%%%%%%%%%%%%%%%%%%%%%%%%%%%%%%%%%%%%%%%%%%%%
 The $\K$-invariance of the composite fields  \eqref{CompFields} is most readily proven. Indeed from the definition \eqref{DF} one has $f(pk)=f(p)$ so that  $f: \P \rarrow \P/K \subset \P$ and given $\Phi(p)=p\gamma(p)$ with $\gamma \in \K \subset \H$, one has $\Phi^*f^*=(f\circ \Phi)^*=f^*$. 
 Before proving the $K$-horizontality, let us work out the expressions of the composite fields. 
 \smallskip 
 
 Let $X \in TP$ be a vector field  with flow $\phi_t$, $t\in \doubleR$ and $\phi_0=p$. The pushforward of $X_p \in T_p\P$  under $f$ is
$ f_*X_p:=\tfrac{d}{dt}f( \phi_t )|_{t=0}=\tfrac{d}{dt} \phi_t u( \phi_t )|_{t=0}=\tfrac{d}{dt} \phi_t|_{t=0} u(p) + p \tfrac{d}{dt}u(\phi_t)|_{t=0}= R_{u(p) *} X_p + p \tfrac{d}{dt}u(\phi_t)|_{t=0} $.
Now  the $\LieG=$ Lie$G$-valued $1$-form $u\-du$ is such that: $ [u\-du]_p(X_p)=u(p)\- (X\cdot u)(p) = u(p)\-\tfrac{d}{dt}  u(\phi_t)|_{t=0} \in \LieG \subset \LieH$. The associated vertical vector field at the point $f(p)$ is: 
$\big\{ [u\-du]_p(X_p)  \big\}^v_{pu(p)}:=\tfrac{d}{ds}  pu(p) e^{s[u\-du]_p(X_p)}|_{s=0}=p\tfrac{d}{dt} u(\phi_t)|_{t=0}$. Hence the final expression for the pushforward, $ f_*X_p=R_{u(p)*} X_p  +\big\{ [u\-du]_p(X_p)  \big\}^v_{pu(p)}$.
This allows to easily find the pullback of the connection and its curvature:
\begin{align*}
(f^*\omega)_p(X_p)&=\omega_{pu(p)}(f_*X_p)=\omega_{pu(p)}\big( R_{u(p)*}X_p + \big\{ [u\-du]_p(X_p)  \big\}^v_{pu(p)}\big)=R^*_{u(p)}\omega_{pu(p)}(X_p) + [u\-du]_p(X_p), \\
						&=\big(  \Ad_{u\-}\omega + u\-du  \big)_p ( X_p).\\
(f^*\Omega)_p(X_p, Y_p)&=\Omega_{pu(p)}(f_*X_p, f_*Y_p)= \Omega_{pu(p)}( R_{u(p)*}X_p,  R_{u(p)*}Y_p )=R^*_{u(p)}\Omega_p(X_p, Y_p)%, \\
						=\big(  \Ad_{u\-}\Omega \big)_p(X_p, Y_p).
\end{align*}
Clearly the Cartan structure equation holds between $f^*\Omega$ and $f^*\omega$. The pullback of $\vphi$ by $f$ is easily found to be $(f^*\vphi)(p)=\vphi(f(p))=\vphi(pu(p))=\rho(u(p)\-)\vphi(p)=(\rho(u\-)\vphi)(p)$. The result for $f^*D\vphi$ goes similarly. 
\medskip

\noindent \textbf{NB}:  The dressing field can be \emph{equally defined} by its $\K$-gauge transformation: $u^\gamma=\gamma\-u$, with $\gamma \in \K\subset \H$. Indeed, given $\Phi$ associated to  $\gamma \in \K$ and  \eqref{DF} : $(u^\gamma)(p):=\Phi^*u(p)=u(\Phi(p))=u(p\gamma(p))=\gamma(p)^{-1}u(p)=(\gamma^{-1}u)(p)$.  This together with \eqref{ActiveGT} makes easy to check algebraically that the composite fields \eqref{CompFields} are $\K$-invariant indeed,  according to $(\chi^u)^\gamma=(\chi^\gamma)^{u^\gamma}=(\chi^\gamma)^{\gamma\-u}=\chi^u$.\footnote{We use $\chi=\{\omega, \Omega, \vphi, \ldots\} $ to denote a generic variable when performing an operation that applies equally well to any specific one.}
\medskip

The $K$-horizontality of $\vphi^u$ as a $0$-form is trivial. So it is for $\Omega^u$ since $\Omega$ is tensorial. To prove it for $\omega^u$ and $D^u\vphi^u$ requires some writing. First, given $X^v \in V\P$ generated by $X \in \LieK=$ Lie$K \subset \LieH$,
\begin{align*}
\omega^u(X^v)&=u\-\omega(X^v)u+u\-du(X^v)=u\-Xu+u\-X^v\cdot u = u\-Xu+u\-\tfrac{d}{dt}u\big( p e^{tX} \big)|_{t=0},\\
				&= u\-Xu+u\-\tfrac{d}{dt}e^{-tX}u( p  )|_{t=0}= u\-Xu+u\-(-X)u=0.
\end{align*}
 Then, $D^u\vphi^u(X^v)= d\vphi^u(X^v)+ \rho_*\big( \omega^u(X^v) \big)\vphi^u=\tfrac{d}{dt} \vphi^u\left(pe^{tX}\right)|_{t=0}=\tfrac{d}{dt} \vphi^u(p)|_{t=0}=0$.
The composites fields \eqref{CompFields} are then $\K$-invariant and $K$-horizontal and project on the quotient subbundle $\P/K \subset \P$.
\end{proof}

Several comments are in order. First, in the event that $G\supset H$ then one has to assume that the $H$-bundle $\P$ is a subbundle of a $G$-bundle, and \emph{mutatis mutandis} the proposition still holds. Such a situation occurs when $\P$ is a reduction of a frame bundle  (of unspecified order) as the main object of this paper will illustrate. 

Second, if $K=H$ then the composited fields \eqref{CompFields} are $\H$-invariant, the gauge symmetry is fully reduced, and they live on $\P/H \simeq \M$. This shows that the existence of a global dressing field is a strong constraint on the topology of the bundle $\P$: a $K$-dressing field means that the bundle is trivial along the $K$-subgroup, $\P \simeq \P/K \times K$, while a $H$-dressing field means its triviality, $\P\simeq \M \times H$. 

Notice that despite the formal similarity with \eqref{ActiveGT} (or \eqref{PassiveGT}), the composite fields \eqref{CompFields} are not gauge transformed fields. Indeed the defining equivariance property \eqref{DF} of the dressing field implies $u\notin \H$, and $f\notin \Aut_v(\P)$. As a consequence, in general the composite fields do not belong to the gauge orbits of the original fields: $\chi^u \notin \mathcal{O}(\chi)$. The dressing field method then shouldn't be mistaken for a mere gauge fixing. 
\medskip

\subsection{Residual gauge symmetry}  %%%%%%%%%%%%%%%%%%%%%%%%%%%%%%%%%%%%%%%%%%%%%%%
\label{Residual gauge symmetry}  %%%%%%%%%%%%%%%%%%%%%%%%%%%%%%%%%%%%%%%%

 Since in general $H/K$ is a coset, its action on the dressing field $u$ is left unspecified and may depend heavily on specifics of the situation at hand. Then in general nothing can be said of the transformation properties of the composite fields under $H/K$. But  interesting things happen if $K$ is a normal subgroup, $K\mathrel{\unlhd} H$, so that $H/K$ is a group that we note $J$ for convenience. The quotient bundle $\P/K$ is then a $J$-principal bundle noted $\P'=\P'(\M, J)$. 
We discuss two most important such cases in the following  subsections.
 
 \subsubsection{The composite fields as genuine gauge fields}  %%%%%%%%%%%%%%%%%%%%%%%%%%%%%%%%%%
 \label{The composite fields as genuine gauge fields}  %%%%%%%%%%%%%%%%%%%%%%%%%%%%%%%%%%%%%

\begin{prop}  %%%%%%%%%%%%%%%%%%%%%%%%%%%%%%%%%%%%%%%%%
\label{Prop2}
Let $u$ be a $K$-dressing field on $\P$. Suppose its $J$-equivariance  is given by 
\begin{align}
\label{CompCond}
R^*_ju=\Ad_{j\-}u, \qquad \text{ with } j \in J. 
\end{align}
Then the dressed connection $\omega^u$ is  a $J$-principal connection on $\P'$. That is, for $ X\in \LieJ$ and  $j\in J$, $\omega^u$ satisfies: $\omega^u(X^v)=X$ and $R^*_j\omega^u=\Ad_{j\-}\omega^u$. Its curvature is given by $\Omega^u$. 

\noindent Also, $\vphi^u$ is a $(\rho, V)$-tensorial map on $\P'$ and can be seen as a section of the associated bundle $E'=\P' \times_{J} V$. The covariant derivative on such sections is given by $D^u=d + \rho(\omega^u)$. 
\end{prop}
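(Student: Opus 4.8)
The plan is to verify, one axiom at a time, that $\omega^u$ satisfies the two defining properties of a $J$-principal connection on $\P'$ and that $\vphi^u$ is $(\rho,V)$-tensorial, drawing on Proposition~\ref{P1} for everything concerning the $K$-directions and on the new hypothesis \eqref{CompCond} for everything concerning the residual $J=H/K$ directions. First I would recall that, by Proposition~\ref{P1}, the composite fields $\omega^u$, $\Omega^u$ and $\vphi^u$ are $\K$-invariant and $K$-horizontal, hence descend to well-defined forms on the quotient $J$-bundle $\P'=\P/K$. It then only remains to control their behaviour under the $J$-action, and since $u$ is $K$-equivariant (not $K$-invariant) these computations must be carried out on $\P$: I would lift $j\in J$ to a representative $h\in H$ and combine the connection axiom $R_h^*\omega=\Ad_{h\-}\omega$ on $\P$ with the hypothesis $R_h^*u=\Ad_{h\-}u=h\-uh$.

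For the equivariance of $\omega^u$, I would start from $\omega^u=\Ad_{u\-}\omega+u\-du$ and pull back by $R_h$. Using that $h$ is constant (so $d(h\-uh)=h\-(du)h$) and the identity $\Ad_{h\-u\-h}\Ad_{h\-}=\Ad_{h\-u\-}$, the intermediate factors collapse and leave $R_h^*\omega^u=\Ad_{h\-}\omega^u$; upon projecting the values by the canonical map $\pi_\LieJ:\LieH\to\LieH/\LieK=\LieJ$, which is $H$-equivariant and descends $\Ad_H$ to $\Ad_J$ precisely because $K\mathrel{\unlhd}H$, this becomes $R_j^*\omega^u=\Ad_{j\-}\omega^u$ on $\P'$. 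For the normalisation I would mimic the $K$-horizontality computation of Proposition~\ref{P1}: for $X\in\LieJ$ with lift $\bar X\in\LieH$, the infinitesimal form of \eqref{CompCond} reads $\bar X^v\cdot u=-\bar Xu+u\bar X$, whence $\omega^u(\bar X^v)=u\-\bar Xu+u\-(-\bar Xu+u\bar X)=\bar X$, i.e. $\omega^u(X^v)=X$ after projection to $\LieJ$. The curvature statement is then immediate: Proposition~\ref{P1} already supplies the Cartan structure equation $\Omega^u=d\omega^u+\tfrac12[\omega^u,\omega^u]$, so $\Omega^u$ is the curvature of $\omega^u$, and it is tensorial on $\P'$ either as the curvature of a principal connection or by the same collapsing computation applied to $\Omega^u=\Ad_{u\-}\Omega$.

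For $\vphi^u$, horizontality is trivial since it is a $0$-form, and its $J$-equivariance follows from the identical pattern: $R_h^*\vphi^u=\rho\big((h\-uh)\-\big)\rho(h\-)\vphi=\rho(h\-u\-)\vphi=\rho(h\-)\vphi^u$, which descends to $R_j^*\vphi^u=\rho(j\-)\vphi^u$. Hence $\vphi^u$ is $(\rho,V)$-tensorial on $\P'$ and represents a section of $E'=\P'\times_J V$, while the formula $D^u=d+\rho_*(\omega^u)$ is just the restatement of the identity $f^*D\vphi=d\vphi^u+\rho_*(\omega^u)\vphi^u$ already established in Proposition~\ref{P1}.

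The step I expect to be the main obstacle is the bookkeeping of target spaces and of independence from the chosen lift $h$ of $j$, rather than any of the algebraic cancellations, which are routine. The raw composite $\omega^u$ is a priori only $\LieH$-valued, so the genuine content of the claim is that, after projection to $\LieJ$, the $K$-horizontality of Proposition~\ref{P1} together with the normality $K\mathrel{\unlhd}H$ make the descended object a bona fide $\LieJ$-valued connection whose transformation law and normalisation are independent of the representative $h$. Equivalently, one must check that the $\K$-invariance of the composite fields is exactly what guarantees that the lifted computations on $\P$ push down consistently to $\P'$; once this descent is correctly set up, the remaining verifications are precisely the telescoping identities sketched above.
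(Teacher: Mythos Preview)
Your telescoping computations for the equivariance $R^*_j\omega^u=\Ad_{j\-}\omega^u$, the normalisation $\omega^u(X^v)=X$, and the tensoriality of $\vphi^u$ are line for line those of the paper. Where you diverge is in the scaffolding: the paper neither lifts $j\in J$ to a representative $h\in H$ nor projects values by $\pi_\LieJ:\LieH\to\LieH/\LieK$. It simply lets $j\in J$ act directly on $\P$ and computes with $\LieH$-valued objects throughout. This is legitimate because in the paper's framework (and explicitly in the conformal application of Section~\ref{Top-down gauge theoretic approach via the Cartan bundle}, where $H=K_0K_1$, $K=K_1$, $J=K_0$) the quotient $J$ is realised as a subgroup of $H$ complementary to $K$, so $R_j$, $\Ad_{j\-}$ on $\LieH$, and $X^v$ for $X\in\LieJ\subset\LieH$ all make sense without passing to any quotient. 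The dressed form $\omega^u$ remains $\LieH$-valued; the two axioms are verified for it as is.

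Your proposed projection to $\LieJ$ is therefore not merely absent from the paper but would damage the last clause of the proposition. The covariant derivative $D^u=d+\rho_*(\omega^u)$ feeds the full $\LieH$-values of $\omega^u$ into the representation $\rho_*:\LieH\to\gl(V)$, and nothing in the hypotheses forces $\rho$ to factor through $J=H/K$. Replacing $\omega^u$ by $\pi_\LieJ\circ\omega^u$ would sever the identity $D^u\vphi^u=f^*D\vphi$ established in Proposition~\ref{P1}, and $\Omega^u=u\-\Omega u$ would no longer be the curvature of the projected object. In short, the ``main obstacle'' you anticipate concerning target spaces and independence of the lift is self-inflicted: drop the lift-and-project layer, let $j$ act as an element of $H$, and the proof is exactly the paper's.
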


\begin{proof} %%%%%%%%%%%%%%%%%%%%%%%%%%%%%%%%%%%%%%%%%%%
Proving $\omega^u$ to be  a connection involves quite straightforward calculations. First,
\begin{align*}
\omega_p^u(X^v_p)	&=u(p)\-\omega_p(X^v_p)u(p) + u(p)\-du_p(X^v_p)= u(p)\-Xu(p) + u(p)\- (X^v\cdot u)(p)\\
					&= u(p)\-Xu(p) + u(p)\- \tfrac{d}{dt} u\left(pe^{tX}\right)|_{t=0}= u(p)\-Xu(p) + u(p)\- \tfrac{d}{dt} \left(e^{-tX}u(p)e^{tX} \right) |_{t=0}\\
					& =u(p)\-Xu(p) + u(p)\-(-X)u(p) + X=X. 
\end{align*}
 Then one finds, 
\begin{align*}
(R^*_j\omega^u)_p(X^v_p)	&=\omega^u_{pj}(R_{j*}X^v_p)=
%\omega^u_{pj}(\{\Ad_{j\-}X\}^v_{pj})
u(pj)\- \omega^u_{pj}(R_{j*}X^v_p)\ u(pj) + u(pj)\- du_{pj}(R_{j*}X^v_p)\\
							& = u(pj)\- \Ad_{j\-}X\  u(pj) + j\-u(p)\-j \tfrac{d}{dt} u\left(p e^{tX}j\right)|_{t=0}\\
							&=j\-u(p)\- Xu(p)j + j\-u(p)\-j \tfrac{d}{dt} \left(j\-e^{-tX}u(p) e^{tX}j\right)|_{t=0}\\
							&=j\-u(p)\- Xu(p)j + j\-u(p)(-X)u(p)j + j\-u(p)\-j\ j\-u(p)Xj= j\-Xj=(\Ad_{j\-}\omega^u)_p(X^v_p).
\end{align*}
Which allows to conclude. Now since in any event the Cartan structure equation holds between $\Omega^u$ and $\omega^u$, if the latter is a genuine connection the former is its curvature.  

As for $\vphi^u$ we have 
\begin{align*}
(R^*_j\vphi^u)(p)  	&=\vphi^u(pj)=\big(\rho(u\-)\vphi\big)(pj)=\rho(u(pj)\-)\vphi(pj)=\rho(j\-u(p)\-j) \rho(j\-)\vphi(p)=\rho(j\-)\rho(u\-)\vphi(p)\\
					&=\left(\rho(j\-)\vphi^u)\right)(p).
\end{align*}
So that $\vphi^u$ is indeed a $(\rho, V)$-equivariant map on $\P'$ and a section of the associated bundle $E'=\P'\times_J V$. The covariant derivative being $D^u=d + \rho(\omega^u)$ is then standard. 
\end{proof}

From this we immediately deduce the following

\begin{cor}%%%%%%%%%%%%%%%%%%%%%%%%%%%%%%%%%
\label{Cor3}
The transformation of the composite fields under the residual $\J$-gauge symmetry is found in the usual way to be
\begin{align}
\label{GTCompFields}
(\omega^u)^{\gamma'}:&={\Phi'}^*\omega^u={\gamma'}\- \omega^u \gamma' + {\gamma'}\-d\gamma',  \qquad (\Omega^u)^{\gamma'}:={\Phi'}^*\Omega^u={\gamma'}\- \Omega^u
 \gamma', \notag\\
 (\vphi^u)^{\gamma'}:&={\Phi'}^*\vphi^u=\rho({\gamma'}\- )\vphi^u, \qquad \text{ and } \qquad  (D^u\vphi^u)^{\gamma'}:={\Phi'}^*D^u\vphi^u=\rho({\gamma'}\-) D^u\vphi^u,
\end{align}
with $\Phi' \in \Aut(\P') \simeq \J \ni \gamma'$. 
\end{cor}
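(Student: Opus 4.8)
The plan is to observe that essentially all of the work has already been done in Proposition \ref{Prop2}: once the dressed fields have been recognised as \emph{standard} geometric objects on the $J$-principal bundle $\P'=\P/K$, the corollary is nothing but the transcription of the universal transformation laws \eqref{ActiveGT} with the pair $(H,\P)$ replaced by $(J,\P')$. Concretely, Proposition \ref{Prop2} tells us that $\omega^u$ is a principal $J$-connection on $\P'$, that $\Omega^u$ is its curvature obeying the Cartan structure equation, that $\vphi^u$ is a $(\rho,V)$-tensorial $0$-form on $\P'$, and that $D^u=d+\rho(\omega^u)$ is the induced covariant derivative. These are exactly the data for which the active gauge transformations \eqref{ActiveGT} were derived in Section \ref{Basic differential geometry}, so I expect \eqref{GTCompFields} to follow with no genuinely new computation.

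First I would fix the residual gauge group: since $K\mathrel{\unlhd} H$, the quotient $\P'=\P/K$ is a principal $J$-bundle, and its group of vertical automorphisms $\Aut_v(\P')$ is isomorphic to the residual gauge group $\J$ via $\Phi'(p')=p'\gamma'(p')$, with $\gamma'\in\J$ satisfying $R^*_{j}\gamma'=j\-\gamma' j$. The active transformation of any field on $\P'$ is by definition its pullback along $\Phi'$. Second, I would repeat verbatim the pushforward computation carried out in the proof of Proposition \ref{P1}, now for $\Phi'\in\Aut_v(\P')$ in place of $f$: writing $\Phi'_*X_{p'}=R_{\gamma'(p')*}X_{p'}+\big\{[{\gamma'}\-d\gamma']_{p'}(X_{p'})\big\}^v$ and feeding this into $\omega^u$ and $\Omega^u$, the defining properties $\omega^u(X^v)=X$, $R^*_{j}\omega^u=\Ad_{j\-}\omega^u$ and $R^*_{j}\Omega^u=\Ad_{j\-}\Omega^u$ established in Proposition \ref{Prop2} yield the inhomogeneous affine law for $\omega^u$ and the conjugation law for $\Omega^u$. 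Third, for the tensorial objects I would simply use $\Phi'^*\vphi^u(p')=\vphi^u(p'\gamma'(p'))=\rho(\gamma'(p')\-)\vphi^u(p')$, which is just the $\rho$-equivariance of $\vphi^u$ on $\P'$, and the analogous identity for $D^u\vphi^u$, which transforms tensorially because $D^u$ is a bona fide covariant derivative. Collecting these four computations gives precisely \eqref{GTCompFields}.

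The only point that requires care — and which is the conceptual heart rather than a computational obstacle — is to be certain that the pullback by $\Phi'$ is even well defined on the composite fields, i.e.\ that $\omega^u,\Omega^u,\vphi^u,D^u\vphi^u$ genuinely live on $\P'$ and not merely on $\P$. This is guaranteed by Proposition \ref{P1}, which shows they are $\K$-invariant and $K$-horizontal and hence descend to $\P/K=\P'$, together with Proposition \ref{Prop2}, which upgrades them to honest gauge fields there. Granting this, there is no real obstacle left: the corollary is \emph{immediate}, precisely because the dressed fields have been shown to be standard gauge fields with respect to the residual $\J$-symmetry, so that the ``usual'' derivation of \eqref{ActiveGT} transcribes word for word onto $\P'$.
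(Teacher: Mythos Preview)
Your proof is correct. You take the geometric route: having established in Proposition~\ref{Prop2} that $\omega^u$ is a genuine $J$-connection on $\P'$ and $\vphi^u$ a $(\rho,V)$-tensorial map, you simply rerun the derivation of \eqref{ActiveGT} on $\P'$. The paper acknowledges this route in one line (``the result follows in the usual geometric way'') but then writes out a different, algebraic derivation: from \eqref{CompCond} one gets $u^{\gamma'}={\gamma'}^{-1}u\gamma'$, and then for any gauge field $\chi$ one computes $(\chi^u)^{\gamma'}=(\chi^{\gamma'})^{u^{\gamma'}}=(\chi^{\gamma'})^{{\gamma'}^{-1}u\gamma'}=\chi^{u\gamma'}$, which unpacks to \eqref{GTCompFields}. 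Your approach makes the corollary a direct consequence of the standard connection formalism and is conceptually cleaner once Proposition~\ref{Prop2} is in hand; the paper's algebraic shortcut bypasses the pushforward computation entirely and has the advantage of generalising verbatim to the non-standard case of Proposition~\ref{P5}, where the geometric argument must be redone with the cocycle $C$.
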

\begin{proof}%%%%%%%%%%%%%%%%%%%%%%%%%%%%%%%%ù
As said, the result follows in the usual geometric way. But there is a more algebraic derivation. For $\Phi' \in \Aut_v(\P')$ one has, using \eqref{CompCond},  $\left(u^{\gamma'}\right)(p):=({\Phi'}^*u)(p)=u(\Phi'(p))=u(p\gamma'(p))={\gamma'}(p)\-u(p)\gamma'(p)=({\gamma'}\- u \gamma')(p)$. 
 So, using again the generic variable $\chi$ one finds that $(\chi^u)^{\gamma'}=(\chi^{\gamma'})^{u^{\gamma'}}=(\chi^{\gamma'})^{{\gamma'}\-u \gamma'}=\chi^{u\gamma'}$. Which proves \eqref{GTCompFields}. 
\end{proof}

\noindent \textbf{NB}: The relation $u^{\gamma'}={\gamma'}\- u \gamma'$ can be taken as an alternative to \eqref{CompCond} as a  condition on the dressing field $u$.

\paragraph{Further dressing operations} In the case where \eqref{CompCond} holds so that the composite/dressed fields \eqref{CompFields} are $\K$-invariant but genuine $\J$-gauge fields with residual gauge transformation given by \eqref{GTCompFields}, the question stands as to the possibility to perform a further dressing operation. 

Suppose a second dressing field $u'$ for the residual symmetry is available. It would be defined by ${u'}^{\gamma'}={\gamma'}\-u'$ for $\gamma'\in \J$. But in order to not spoil the $\K$-invariance obtained from the first dressing field $u$, the second dressing field should satisfy the \emph{compatibility condition} 
\begin{align}
\label{CompCond12}
R^*_ku'=u', \quad \text{ for } k\in K. \quad \text{ Or altenatively: } \quad {u'}^\gamma=u',\quad \text{ for } \gamma \in \K.
\end{align}
In this case indeed: 
\begin{align*}
\left(\chi^{uu'}\right)^\gamma&=\left(\chi^\gamma\right)^{u^\gamma {u'}^\gamma}=\left(\chi^\gamma\right)^{\gamma\-u u'}=\chi^{uu'}, \quad \gamma\in \K.\\
\left(\chi^{uu'}\right)^{\gamma'}&=\left(\chi^{\gamma'}\right)^{u^{\gamma'} {u'}^{\gamma'}}=\left(\chi^{\gamma'}\right)^{{\gamma'}\-u\gamma' \ {\gamma'}\-u'}=\chi^{uu'}, \quad \gamma' \in \J .
\end{align*}
We see that the defining properties of the dressing fields $u$ and $u'$, together with their compatibility conditions \eqref{CompCond} and \eqref{CompCond2} implies that $uu'$ can be treated as a single dressing for $\H$: 
\begin{align*}
\left(uu'\right)^{\gamma\gamma'}=\left(\left(uu'\right)^\gamma\right)^{\gamma'}=(\gamma\- uu')^{\gamma'}=\left(\gamma^{\gamma'}\right)\-\ {\gamma'}\-u\gamma'\ {\gamma'}\-u'={\gamma'}\-\gamma\- uu'= (\gamma\gamma')\- uu'.
\end{align*}
The extension of this scheme to any number of dressing field is straightforward, the details can be found in \cite{Francois2014}.

 \subsubsection{The composite fields as a new kind of gauge fields}  %%%%%%%%%%%%%%%%%%%%%%%%%%%%%%%%%%
 \label{The composite fields as  a new kind of gauge fields}  %%%%%%%%%%%%%%%%%%%%%%%%%%%%%%%%%%%%%
 
 Before turning to this next case we need to introduce some definitions. Let $G' \supset G$ be a Lie group for which representations $(\rho, V)$ of $G$ are also representations of $G'$.
 Let there be a $C^\infty$-map $C :P \times J \rarrow G'$, $(p, j) \mapsto C_p(j)$, satisfying
\begin{align}
\label{PropDefC}
C_p(jj')=C_p(j)C_{pj}(j').
\end{align} 
From this we have that $C_p(e)=e$, for $e$ the identity in both $J$ and $G'$, and $C_p(j)\-=C_{pj}(j\-)$. Its differential is 
\begin{align*}
dC_{(p, j)}=dC(j)_p + dC_{p| j}: T_p\P \oplus T_jJ \rarrow T_{C_p(j)}G',
\end{align*}
where $\ker dC(j)=T_jJ$ and $\ker dC_p=T_p\P$ with by definition
\begin{align*}
dC(j)_p(X_p)&=\tfrac{d}{dt}C_{\phi_t}(j) |_{t=0}, \qquad \phi_t \text{ the flow of } X\in T\P \text{ and } \phi_{t=0}=p,\\
dC_{p|j}(Y_j)&=\tfrac{d}{dt} C_p(\vphi_t)|_{t=0},\qquad \vphi_t \text{ the flow of } Y\in TJ \text{ and } \vphi_{t=0}=j.
\end{align*}
Notice that $C_p(j)\-dC_{(p,j)} : T_p\P \oplus T_jJ \rarrow  T_eG'=\LieG'$.
\medskip

We are now ready to state our next result as the following
\begin{prop}%%%%%%%%%%%%%%%%%%%%%%%%%%%%%%%%%
\label{P4}
Let $u$ be a $K$-dressing field on $\P$. Suppose its $J$-equivariance is given by
\begin{align}
\label{CompCond2}
(R_j^*u)(p)=j\-u(p)C_p(j),\quad \text{with } \quad j\in J \text{ and  $C$ a map as above}. 
\end{align}
Then $\omega^u$ satisfies
\begin{enumerate}
\item $\omega^u_p(X^v_p)=c_p(X):=\tfrac{d}{dt}C_p(e^{tX})|_{t=0}$, $\quad$ for $X\in \LieJ$ and $X^v_p \in V_p\P'$.
\item $R^*_j\omega^u=C(j)\- \omega^u C(j)+ C(j)\-dC(j)$.
\end{enumerate}
So $\omega^u$ is  a kind of generalized connection $1$-form. Its  curvature $\Omega^u$ is $J$-horizontal and satisfies $R^*_j\Omega^u=C(j)\-\Omega^uC(j)$. 
Also, $\vphi^u$ is a $\rho(C)$-equivariant map, $R^*_j \vphi^u=\rho\left(C(j)\right)\- \vphi^u$. The first order differential operator $D^u:=d + \rho_*(\omega^u)$ is a natural covariant derivative on such $\vphi^u$ so that $D^u\vphi^u$ is a $(\rho(C), V)$-tensorial form: 
$R^*_jD^u\vphi^u=\rho\left(C(j)\right)\- D^u\vphi^u$ and $(D^u\vphi^u)_p(X_p^v)=0$. 
\end{prop}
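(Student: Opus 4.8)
The plan is to distill the hypothesis \eqref{CompCond2} into a single equivariance identity for the dressing map $f(p)=pu(p)$, and then obtain all four items by feeding that identity into $\omega$, $\Omega$, $\vphi$ and $D\vphi$ in turn, exactly as in the proofs of Propositions \ref{P1} and \ref{Prop2}. Since $\P$ is here assumed to sit inside a $G'$-bundle on which $\omega,\Omega,\vphi$ extend with their usual $G'$-equivariance ($\rho$ being a $G'$-representation), let $\b R$ denote the ambient $G'$-action. For $j\in J$ the hypothesis gives
\[
f(pj)=pj\,u(pj)=pj\,j\-u(p)C_p(j)=p\,u(p)\,C_p(j)=\b R_{C_p(j)}f(p),
\]
i.e. $f\circ R_j=\b R_{C_{(\cdot)}(j)}\circ f$. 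This is the one relation that drives everything, and the cocycle property \eqref{PropDefC} is exactly what makes it consistent under iterated right translations.

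Differentiating this identity yields items 1 and 2. For the vertical value, take $X\in\LieJ$ with fundamental field $X^v$ on $\P'$; then $f_*X^v_p=\tfrac{d}{dt}\b R_{C_p(e^{tX})}f(p)|_{t=0}=\{c_p(X)\}^v_{f(p)}$ is the fundamental field of $c_p(X)\in\LieG'$ at $f(p)$, so $\omega^u_p(X^v_p)=\omega_{f(p)}(\{c_p(X)\}^v)=c_p(X)$ by $\omega(Y^v)=Y$. For the finite equivariance I would compute $(f\circ R_j)_*X_p$ for arbitrary $X_p$ by the same Leibniz rule used in Proposition \ref{P1}, now with two varying factors $f(\phi_t)$ and $C_{\phi_t}(j)$:
\[
(f\circ R_j)_*X_p=\b R_{C_p(j)*}(f_*X_p)+\big\{[C(j)\-dC(j)]_p(X_p)\big\}^v_{f(p)C_p(j)},
\]
where $[C(j)\-dC(j)]_p(X_p)=C_p(j)\-dC(j)_p(X_p)$ with $dC(j)_p$ the differential defined in the excerpt. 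Applying $\omega$, using $\b R^*_{g'}\omega=\Ad_{g'\-}\omega$ on the first term and $\omega(Y^v)=Y$ on the second, gives item 2.

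Items 3 and 4 follow by feeding the same pushforward formula into $\Omega$, $\vphi$ and $D\vphi$. Horizontality of $\Omega^u$ is immediate since $f_*X^v_p=\{c_p(X)\}^v$ is $\b R$-vertical and $\Omega$ is horizontal; its $J$-equivariance then reads off from the first term as $\Ad_{C(j)\-}\Omega^u$, the inhomogeneous term being killed by horizontality. For $\vphi^u=f^*\vphi$ and $D^u\vphi^u=f^*D\vphi$, the $G'$-equivariance $\b R^*_{g'}\vphi=\rho(g'\-)\vphi$ gives at once $R^*_j\vphi^u=\rho(C(j))\-\vphi^u$ and likewise $R^*_jD^u\vphi^u=\rho(C(j))\-D^u\vphi^u$; the horizontality $(D^u\vphi^u)_p(X^v_p)=0$ follows either from $f_*X^v_p$ being $\b R$-vertical with $D\vphi$ horizontal, or by the one-line check that $d\vphi^u(X^v)=-\rho_*(c_p(X))\vphi^u$ cancels $\rho_*(\omega^u(X^v))\vphi^u=\rho_*(c_p(X))\vphi^u$.

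These computations are routine analogues of Propositions \ref{P1} and \ref{Prop2}; the one genuinely new point, and the step I expect to demand the most care, is the appearance and correct identification of the inhomogeneous term $C(j)\-dC(j)$. Unlike in Proposition \ref{Prop2}, where $C_p(j)=j$ is $p$-independent and $\omega^u$ stays a bona fide connection, here $C_p(j)$ depends on $p$, so $\b R_{C_p(j)}$ is a \emph{point-dependent} right translation and the Leibniz rule produces a genuine Maurer--Cartan-type contribution. Keeping the $G'$-embedding, the ambient action $\b R$, and the two pieces $dC(j)_p$ versus $dC_p$ of the differential of $C$ properly distinguished is the crux; once the identity $f\circ R_j=\b R_{C_{(\cdot)}(j)}\circ f$ is secured, each of the four assertions reduces to a single application of it.
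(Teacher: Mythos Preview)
Your proof is correct and takes a genuinely different, more conceptual route than the paper. The paper proves each item by direct computation with the explicit formula $\omega^u=u\-\omega u+u\-du$: it evaluates $\omega^u_p(X^v_p)$ by differentiating $u(pe^{tX})=e^{-tX}u(p)C_p(e^{tX})$ term by term, proves the $R^*_j$-equivariance by expanding $u(pj)\-\omega_{pj}(R_{j*}X_p)u(pj)+u(pj)\-\tfrac{d}{dt}u(\phi_t j)|_{t=0}$, and then does analogous line-by-line checks for $\vphi^u$ and $D^u\vphi^u$ (equivariance and horizontality separately). No ambient $G'$-bundle is invoked; everything happens at the level of the algebraic dressing formulas on $\P$.

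By contrast, you package the hypothesis \eqref{CompCond2} into the single geometric identity $f\circ R_j=\b R_{C_{(\cdot)}(j)}\circ f$ and then read off all four items as pullbacks through $f$, using the ambient $G'$-equivariance of $\omega$, $\Omega$, $\vphi$, $D\vphi$. The gain is uniformity and economy: one Leibniz computation of $(f\circ R_j)_*$ replaces four separate expansions, and the origin of the inhomogeneous term $C(j)\-dC(j)$ is transparent as the Maurer--Cartan contribution of a point-dependent right translation. The cost is the extra structural assumption that $\P$ embeds in a $G'$-bundle to which $\omega$ and $\vphi$ extend with the expected equivariance; the paper's hands-on computations avoid committing to this (though it is implicitly needed to make sense of $\Ad_{C(j)}$ acting on the values of $\omega^u$ anyway). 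Both approaches are valid; yours is closer in spirit to how Propositions~\ref{P1} and~\ref{Prop2} were proved, while the paper's is self-contained at the level of formulas.
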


\begin{proof}%%%%%%%%%%%%%%%%%%%%%%%%%%%%%%%%%%%%%%%%
Consider $X^v \in V\P'$,
\begin{align*}
\omega^u_p(X^v_p)	&=u(p)\- \omega_p(X^v_p)u(p) + u(p)\-\tfrac{d}{dt} u(pe^{tX})|_{t=0}=u(p)\-Xu(p) + u(p)\-\tfrac{d}{dt} e^{-tX}u(p)C_p(e^{tX})|_{t=0}, \\
					&= u(p)\-Xu(p)+ \left( u(p)\-(-X)e^{tX}u(p)C_p(e^{tX}) + u(p)\-e^{tX}u(p)\tfrac{d}{dt}C_p(e^{tX}) \right)\big|_{t=0},\\
					&=\tfrac{d}{dt}C_p(e^{tX})|_{t=0}=:c_p(X).
\end{align*}
Take now $X_p \in T_p\P'$ with flow $\phi_t$, 
\begin{align*}
(R^*_j\omega^u)_p(X_p)&=\omega^u_{pj}(R_{j*}X_p)= u(pj)\- \omega_{pj}(R_{j*}X_p)) u(pj) + u(pj)\-\tfrac{d}{dt} u(\phi_t j) |_{t=0}, \\
						&=C_p(j)\-u(p)\-j \Ad_{j\-}\omega_p(X_p) j\-u(p)C_p(j) +u(pj)\-\tfrac{d}{dt} j\- u(\phi_t)C_{\phi_t}(j) |_{t=0},\\
						&=C_p(j)\-u(p)\omega_p(X_p) u(p)C_p(j) +C_p(j)\-u(p)\-\tfrac{d}{dt} u(\phi_t)|_{t=0}C_{p}(j)  + C_p(j)\-\tfrac{d}{dt} C_{\phi_t}(j) |_{t=0},\\
						&=C_p(j)\-u(p)\omega_p(X_p) u(p)C_p(j) +C_p(j)\-u(p)\-du_p(X_p) C_p(j) + C_p(j)\- dC(j)_p(X_p),\\
						&=\left( C(j)\- \omega^u C(j) +C(j)\-dC(j) \right)_p(X_p).
\end{align*}
The $J$-horizontality of $\Omega^u$ follows from that of $\Omega$. Its equivariance property is proved either in a way similar as above, or from the Cartan structure equation and using the just proved equivariance of $\omega^u$. The equivariance of $\vphi^u$ is found in a one line calculation:
$(R^*_j\vphi^u)(p)=\vphi^u(pj)=\rho(u(pj))\-\vphi(pj)=\rho\left( C_p(j)\- u(p)\-j \right)\rho(j\-)\vphi(p)=\rho( C_p(j)\-)\vphi^u(p)$. From this and the two properties of $\omega^u$  we easily prove that $D^u$ is indeed a covariant derivative for $\vphi^u$. On the one hand the equivariance:
\begin{align*}
(R^*_jD^u\vphi^u)_p(X_p) &=d\vphi_{pj}(R_{j*}X_p) + \rho_*( \omega^u_{pj})(R_{j*}X_p)\vphi^u(pj),\\
						&=\tfrac{d}{dt} \vphi^u(\phi_tj) |_{t=0}+ \rho_*(R^*_j\omega^u_p)(X_p) \rho\left(C_p(j)\-\right) \vphi^u(p),\\
						&=\tfrac{d}{dt} \rho\left(C_{\phi_t}(j) \- \right) \vphi^u(\phi_t) |_{t=0}+ \rho_*\left( C_p(j)\-\omega^u_pC_p(j) + C_p(j)\- dC(j)_p \right)(X_p)\rho\left(C_p(j)\-\right) \vphi^u(p),\\
						=\rho_*\left(d{C(j)\-}_p\right) &(X_p) \vphi^u(p) + \rho\left(C_p(j)\-\right)d\vphi^u_p(X_p) + \rho\left( C_p(j)\-\right)\rho_*( \omega^u_p) (X_p)\vphi^u(p) - \rho_*\left( d{C(j)\-}_p\right)(X_p)\vphi^u(p) ,\\
						&=\left(\rho\left(C(j)\-\right) D^u\vphi^u \right)_p(X_p).
\end{align*}	
On the other hand, the horizontality:
\begin{align*}
(D^u\vphi^u)_p(X_p^v)&=d\vphi^u_p(X^v_p) + \rho_*(\omega^u_p)(X^v_p)\vphi^u(p)=\tfrac{d}{dt} \vphi^u\left(pe^{tX}\right)|_{t=0} +\rho_*\left(c_p(X)\right)\vphi^u(p),\\
					=\tfrac{d}{dt}& \rho\left(C_p\left(e^{tX}\right)\-\right)\big|_{t=0} \vphi^u(p) + \tfrac{d}{dt} \rho\left(C_p\left(e^{tX}\right)\right)\big|_{t=0} \vphi^u(p)=\tfrac{d}{dt} \rho\left(C_p\left(e^{tX}\right)\-C_p\left(e^{tX} \right) \right)\big|_{t=0} \vphi^u(p)=0.
\end{align*}
\end{proof}

From this we can find the transformations of the composite fields under the residual gauge group $\J \simeq \Aut_v(\P')$. But first, we again need some preliminary results. Consider $\Phi' \in \Aut_v(\P')\simeq  \gamma' \in \J$, the residual gauge transformation of the dressing field is
\begin{align}
\label{GT-CDressField}
\left(u^{\gamma'} \right)(p):=(\Phi'^*u)(p)&=u(p\gamma'(p))=\gamma'(p)\- u(p) C_p\left(\gamma'(p)\right)=\left( {\gamma'}\- u C(\gamma') \right) (p).
\end{align}
\textbf{NB}: This relation can be taken as an alternative to \eqref{CompCond2} as a  condition on the dressing field $u$.\\
We witness the introduction of the map $C(\gamma'):\P' \rarrow G'$, $p \mapsto C_p\left(\gamma'(p)\right)$. It is given by the composition serie
\begin{align*}
&\P' \xrightarrow{\Delta} \P' \times \P' \xrightarrow{\id \times \gamma'} \P' \times J \xrightarrow{\phantom{b}C\phantom{b}} G', \\
& p\xmapsto{\phantom {bbbb}} (p, p) \xmapsto{\phantom{bbbb}}  \left(p, \gamma'(p)\right) \xmapsto{\phantom{bb}} C_p\left(\gamma'(p)\right).
\end{align*}
Its differential  $dC(\gamma') : T_p\P' \rarrow T_{C_p\left(\gamma'(p)\right)}G'$ is given by $dC(\gamma')=dC \circ \left( \id \oplus d\gamma' \right) \circ d\Delta$. So given  $X_p \in T_p\P'$ with flow $\phi_t$ and $d\gamma'_p(X_p)\in T_{\gamma(p)}J$, we have explicitly
\begin{align}
\label{DiffC(gamma)}
dC(\gamma')_p(X_p)&=dC\left(\gamma'(p)\right)_{|p} \oplus dC_{p | \gamma'(p)} \left(X_p + d\gamma'_p(X_p) \right)=dC\left(\gamma'(p)\right)_{|p}(X_p) + dC_{p | \gamma'(p)} \left(d\gamma'_p(X_p) \right),\notag\\
				&= dC\left(\gamma'(p)\right)_{|p}(X_p) + dC_p(\gamma')_{|p}(X_p) = \tfrac{d}{dt} C_{\phi_t}\left( \gamma'(p) \right) |_{t=0} + \tfrac{d}{dt} C_p\left( \gamma'(\phi_t) \right) |_{t=0} .
\end{align}
Notice that $C_p\left(\gamma'(p) \right)\-dC(\gamma')_p : T_p\P' \rarrow T_eG'=\LieG'$. 

\medskip We are now ready to give the transformations of the composite fields under the residual gauge group. 
\begin{prop} %%%%%%%%%%%%%%%%%%%%%%%%%%%%%%%%%%%%
\label{P5}
Given  $\Phi' \in \Aut_v(\P')\simeq  \gamma' \in \J$, the residual gauge transformations of the composite fields are
\begin{align}
\label{GTCompFields2}
(\omega^u)^{\gamma'}:&= \Phi'^*\omega^u = C(\gamma')\- \omega^u C(\gamma') + C(\gamma')\-dC(\gamma'), \qquad (\Omega^u)^{\gamma'}:=\Phi'^*\Omega^u= C(\gamma') \- \Omega^u C(\gamma'), \notag \\
(\vphi^u)^{\gamma'}:&= \Phi'^*\vphi^u = \rho\left( C(\gamma') \- \right) \vphi^u \qquad \text{ and } \qquad (D^u\vphi^u)^{\gamma'}= \Phi'^* D^u\vphi^u = \rho\left( C(\gamma') \- \right) D^u\vphi^u.
\end{align}
So, the composite fields \eqref{CompFields} behave as gauge fields of a new kind, and implement the \emph{gauge principle} - or principle of \emph{local symmetry} - of field theory in Physics. 
\end{prop}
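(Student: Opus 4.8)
The plan is to follow the algebraic route already used for Corollary \ref{Cor3}, now feeding in the $C$-twisted transformation \eqref{GT-CDressField} of the dressing field in place of the simpler conjugation law. Two structural facts should be isolated first. First, the dressing formulas \eqref{CompFields} are \emph{algebraically} identical to the active gauge transformations \eqref{ActiveGT}; hence gauge-transforming a composite field amounts to simultaneously replacing $\chi \to \chi^{\gamma'}$ and $u \to u^{\gamma'}$ inside the dressing expression, i.e. $(\chi^u)^{\gamma'}=(\chi^{\gamma'})^{u^{\gamma'}}$. Second, the dressing operation composes like group multiplication, $\chi^{ab}=(\chi^a)^b$ for group-valued maps $a,b$ (for the connection the inhomogeneous term reorganizes precisely as $b\-(a\-\omega a+a\-da)\,b+b\-db$). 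This is the same one-line manipulation displayed in the \emph{Further dressing operations} paragraph.

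With these in hand the computation collapses. Since $\chi^{\gamma'}$ is computed by the very same formula as a dressing by $\gamma'$, the composition rule lets me fuse the two operations, $(\chi^{\gamma'})^{u^{\gamma'}}=\chi^{\gamma'\,u^{\gamma'}}$. Substituting $u^{\gamma'}={\gamma'}\- u\,C(\gamma')$ from \eqref{GT-CDressField}, the $\gamma'$ factor cancels, $\gamma'\,u^{\gamma'}=u\,C(\gamma')$, so that
\begin{align*}
(\chi^u)^{\gamma'}=\chi^{u\,C(\gamma')}=(\chi^u)^{C(\gamma')}.
\end{align*}
In words: the residual gauge action on a composite field is exactly a dressing of $\chi^u$ by the $G'$-valued map $C(\gamma')$. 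Reading off the three instances of \eqref{CompFields} for $\chi^u$ dressed by $C(\gamma')$ --- conjugation plus the inhomogeneous $C(\gamma')\-dC(\gamma')$ term for $\omega^u$, pure conjugation for $\Omega^u$, and $\rho\big(C(\gamma')\-\big)$ acting on $\vphi^u$ and on $D^u\vphi^u$ --- reproduces \eqref{GTCompFields2} line by line.

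A purely geometric derivation is also available: start from $(\chi^u)^{\gamma'}={\Phi'}^*\chi^u$ with $\Phi'(p)=p\gamma'(p)$ and unpack the pullback via the equivariance and generalized verticality properties of $\omega^u,\Omega^u,\vphi^u$ proved in Proposition \ref{P4}, promoting $C(j)$ to $C(\gamma')$ along the diagonal map displayed before \eqref{DiffC(gamma)}. I expect the only genuinely delicate point to be the bookkeeping of $dC(\gamma')$: it is the \emph{total} differential \eqref{DiffC(gamma)} combining the $\P'$-leg and the $J$-leg of $C$, not merely $dC(j)$ at fixed $j$, and this is precisely what makes the inhomogeneous term $C(\gamma')\-dC(\gamma')$ come out correctly. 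The algebraic route hides this subtlety inside $u^{\gamma'}$, which is why I would take it as the main proof and relegate the geometric verification to a remark.
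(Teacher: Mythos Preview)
Your proposal is correct. The algebraic computation $(\chi^u)^{\gamma'}=(\chi^{\gamma'})^{u^{\gamma'}}=(\chi^{\gamma'})^{{\gamma'}\-uC(\gamma')}=\chi^{uC(\gamma')}=(\chi^u)^{C(\gamma')}$ is exactly what the paper records, and your identification of the total differential $dC(\gamma')$ as the delicate point of the geometric route is spot on.

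The only difference is one of emphasis: the paper takes the \emph{geometric} derivation as the main proof --- computing ${\Phi'}^*\omega^u$ explicitly from the pushforward $\Phi'_*X_p=R_{\gamma'(p)*}X_p+\{[{\gamma'}\-d\gamma']_p(X_p)\}^v_{\Phi'(p)}$, then invoking the equivariance and verticality properties of Proposition~\ref{P4}, and finally assembling the two pieces of $dC(\gamma')$ via \eqref{DiffC(gamma)} --- and relegates the algebraic argument to a closing Remark, with the comment that the geometric proof ``secures'' the algebraic one. You do the reverse. The paper's ordering has the virtue of showing explicitly where the cocycle structure \eqref{PropDefC} enters (in the evaluation of $c_{p\gamma'(p)}([{\gamma'}\-d\gamma']_p(X_p))$), and of making clear that $C(\gamma')$ need not be $H$-valued, so that the ``dressing by $C(\gamma')$'' in your last step is a formal extension of \eqref{CompFields} rather than a literal instance of Proposition~\ref{P1}. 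Your algebraic route is shorter and perfectly valid once one accepts that the composition rule $\chi^{ab}=(\chi^a)^b$ holds for arbitrary $G'$-valued maps, which it does as a purely algebraic identity.
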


\begin{proof} %%%%%%%%%%%%%%%%%%%%%%%%%%%%%%%%%%%
The pushforward of $X_p \in T_p\P'$ under $\Phi' \in \Aut_v(\P')\simeq  \gamma' \in \J$ is  $\Phi'_*X_p=R_{\gamma'(p)*}X_p + \left\{ [{\gamma'}\-d\gamma']_p(X_p) \right\}\big|^v_{\Phi'(p)}$. So the pullback of $\omega^u$ is
\begin{align*}
(\Phi'^*\omega^u)_p(X_p)=\omega^u_{\Phi'(p)}(\Phi'_*X_p)&=\omega^u_{\Phi'(p)}\left(R_{\gamma'(p)*}X_p + \left\{ [{\gamma'}\-d\gamma']_p(X_p) \right\}\big|^v_{\Phi'(p)}\right),\\
													&=\left(R^*_{\gamma'(p)}\omega^u\right)_p(X_p) + c_{p\gamma'(p)}\left(  [{\gamma'}\-d\gamma']_p(X_p)  \right).
\end{align*}
Now by definition of $c_p$ and using \eqref{PropDefC}
\begin{align*}
c_{p\gamma'(p)}\left(  [{\gamma'}\-d\gamma']_p(X_p)  \right):&=\tfrac{d}{dt} C_{p\gamma'(p)}\left( \gamma'(p)\- \gamma'(\phi_t) \right) \big|_{t=0}=\tfrac{d}{dt} C_p(\gamma'(p))\- C_p\left(  \gamma'(\phi_t) \right) \big|_{t=0},\\
														&= C_p(\gamma'(p))\- \tfrac{d}{dt} C_p\left(  \gamma'(\phi_t) \right) \big|_{t=0}= C_p(\gamma'(p))\-dC_p(\gamma')_{|p}(X_p).
\end{align*}
Hence, using \eqref{DiffC(gamma)} to conclude, we get
\begin{align*}
(\Phi'^*\omega^u)_p(X_p)&=\left( C\left(\gamma'(p)\right)\- \omega^u C\left(\gamma'(p)\right) +C\left(\gamma'(p)\right)\-dC(\gamma'(p)) \right)_p(X_p) + C_p(\gamma'(p))\-dC_p(\gamma')_{|p}(X_p),\\
						&=\left( C\left(\gamma'\right)\- \omega^u C\left(\gamma'\right) +C\left(\gamma'\right)\-dC(\gamma') \right)_p(X_p).
\end{align*}
The pullback of $\Omega^u$ is
\begin{align*}
(\Phi'^*\Omega^u)_p(X_p, Y_p)&=\Omega^u_{\Phi'(p)}(\phi_*X_p, \phi_*Y_p)=\Omega^u_{\Phi'(p)}(R_{\gamma'(p)*}X_p, R_{\gamma'(p)*}Y_p)=\left(R^*_{\gamma'(p)}\Omega^u\right)_p(X_p, Y_p),\\
&= \left( C(\gamma'(p))\- \Omega^u C(\gamma'(p))  \right)_p(X_p, Y_p)= \left( C(\gamma')\- \Omega^u C(\gamma')  \right)_p(X_p, Y_p).
\end{align*}
The tensoriality of $\Omega^u$ has been used. The pullback of $\vphi^u$ is easily found to be
\begin{align*}
\left(\Phi'^*\vphi^u \right)(p)=\vphi^u(p\gamma'(p))=\rho\left( C_p(\gamma'(p))\- \right)\vphi^u(p)=\left( \rho\left( C(\gamma')\- \right)\vphi^u\right)(p).
\end{align*}
The proof for $D^u\vphi^u$ goes similarly. 

\noindent \textbf{Remark}: Given the residual gauge transformation of the dressing field \eqref{GT-CDressField} and the usual $\J$-gauge transformations for the standard gauge fields $\chi$, the above geometrical proof secures the more direct algebraic calculation: $(\chi^u)^{\gamma'}=(\chi^{\gamma'})^{u^{\gamma'}}=(\chi^{\gamma'})^{{{\gamma'}\-}uC(\gamma')}=\chi^{uC(\gamma')}$.
\end{proof}

\noindent \textbf{NB}: Under a further gauge transformation $\Psi \in \Aut_v(\P') \simeq \eta\in \J$, the dressing field behaves as
\begin{align*}
\left(\Psi^*(\Phi^*u)\right)(p)&= \left( (\Phi \circ \Psi)^*u \right)(p)=u\left(\Phi(p\eta(p)\right)=u\left(\Phi(p)\eta(p)\right)=u\left(p\gamma(p)\eta(p)\right)= \eta(p)\-\gamma(p)\- u(p)C_p\left(\gamma(p)\eta(p)\right),\\
							&=\left(\eta\-\gamma\-\ u\ C\left(\gamma\eta \right)\right)(p).\\
\text{or }  \left(\Psi^*(\Phi^*u)\right)(p)&= \left( \gamma\- u C(\gamma) \right) (\Psi(p))=\gamma\left(p\eta(p) \right)\- u\left(p\eta(p)\right) C_{p\eta(p)}\left(\gamma(p\eta(p) \right),\\
							&= \eta(p)\-\gamma(p)\-\eta(p) \cdot \eta(p)\-u(p) C_p\left( \eta(p)\right) \cdot C_{p\eta(p)}\left(\eta(p)\- \gamma(p) \eta(p) \right),\\
							&=\eta(p)\-\gamma(p)\-\ u(p) C_p\left( \eta(p)\right) \cdot C_{p\eta(p)}\left(\eta(p)\- \right) C_p\left(\gamma(p) \eta(p) \right) = \eta(p)\-\gamma(p)\- u(p)C_p\left(\gamma(p)\eta(p)\right).
\end{align*}
This secures the fact that the action  \eqref{GTCompFields2} of the residual gauge symmetry on the composites fields is well-behaved as a representation.

\paragraph{Further dressing operations} In the case where \eqref{CompCond2} holds, the composite/dressed fields \eqref{CompFields} are $\K$-invariant but  $\J$-gauge fields of a new kind with  gauge transformations given by \eqref{GTCompFields2}. As such they implement the gauge principle of field theory in Physics, so the dressing field philosophy applies, as shown in the following

\begin{prop}%%%%%%%%%%%%%%%%%%%%%%%%%%%%%%%%%%
Suppose a $J$-dressing field $u':\P' \rarrow J$ is available. Then the map $C(u') :\P' \rarrow G'$, $p\mapsto C_p(u'(p))$ is a $C(J)$-dressing field
and its $\J$-gauge transformation is  $C(u')^{\gamma'}=C(\gamma')\- C(u')$
\end{prop}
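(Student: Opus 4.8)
The statement packages two assertions: that $C(u')$ satisfies the (twisted) equivariance characterizing a $C(J)$-dressing field, namely $R_j^*\,C(u')=C(j)\-\,C(u')$ where $C(j):\P'\rarrow G'$ is the map $p\mapsto C_p(j)$, and that its residual gauge transformation is $C(u')^{\gamma'}=C(\gamma')\-\,C(u')$. My plan is to obtain both by direct computation from the definition $C(u')(p)=C_p(u'(p))$, feeding in the defining properties of the $J$-dressing field $u'$ — namely $R_j^*u'=j\-u'$ (the analogue of \eqref{DF}) and equivalently ${u'}^{\gamma'}={\gamma'}\-u'$ — and then collapsing the resulting expression with the single structural identity \eqref{PropDefC}.

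For the equivariance, I would write $(R_j^*C(u'))(p)=C(u')(pj)=C_{pj}\big(u'(pj)\big)$ and substitute $u'(pj)=j\-u'(p)$. The crux is then to rewrite the shifted-base-point factor $C_{pj}(j\-u'(p))$ in terms of $C_p$, and this is exactly what \eqref{PropDefC} provides: factoring $u'(p)=j\cdot\big(j\-u'(p)\big)$ and applying $C_p(jj')=C_p(j)C_{pj}(j')$ at the base point $p$ gives $C_p(u'(p))=C_p(j)\,C_{pj}(j\-u'(p))$, hence $C_{pj}(j\-u'(p))=C_p(j)\-C_p(u'(p))=C_p(j)\-C(u')(p)$, which is precisely $(C(j)\-C(u'))(p)$. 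For the gauge transformation I would run the identical computation along $\Phi'(p)=p\gamma'(p)$: from $(C(u')^{\gamma'})(p)=C(u')(p\gamma'(p))=C_{p\gamma'(p)}\big(u'(p\gamma'(p))\big)$, using ${u'}^{\gamma'}={\gamma'}\-u'$ to replace $u'(p\gamma'(p))=\gamma'(p)\-u'(p)$, and then applying \eqref{PropDefC} with $j$ replaced by $\gamma'(p)$, the expression collapses to $C_p(\gamma'(p))\-C_p(u'(p))$, which equals $\big(C(\gamma')\-C(u')\big)(p)$ since $C(\gamma')(p)=C_p(\gamma'(p))$ by definition \eqref{GT-CDressField}.

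There is no deep obstacle here; the content is entirely in \eqref{PropDefC}, and the only thing that must be handled with care is the base-point bookkeeping. Because $C_p$ depends genuinely on $p$, one cannot treat $C_{pj}$ and $C_p$ as interchangeable — the cocycle identity \eqref{PropDefC} is the precise tool that converts the shifted base point $pj$ (respectively $p\gamma'(p)$) back to $p$, once the argument is split as $j\cdot(j\-u'(p))$ (respectively $\gamma'(p)\cdot(\gamma'(p)\-u'(p))$). As a consistency check and to motivate the result, I would note that, combining this with Proposition~\ref{P5}, the product $u\,C(u')$ transforms under $\J$ as ${\gamma'}\-\,u\,C(u')$, i.e. as an ordinary single dressing field for the residual symmetry; this is the analogue, for the new kind of gauge fields, of the composition-of-dressings statement following Corollary~\ref{Cor3}, and it is what makes the iterated reduction meaningful.
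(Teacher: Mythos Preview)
Your proof is correct and follows essentially the same route as the paper: both compute $(R_j^*C(u'))(p)=C_{pj}(u'(pj))$, invoke the dressing property $u'(pj)=j\-u'(p)$, and then collapse the base-point shift via the cocycle identity \eqref{PropDefC} applied to the factorization $u'(p)=j\cdot(j\-u'(p))$; the gauge-transformation part is then just this equivariance with $j$ replaced by $\gamma'(p)$. The paper's version is more compressed (one display line each), but the argument is identical.
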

\begin{proof} %%%%%%%%%%%%%%%%%%%%%%%%%%%%%%%%%%%%%%
The $J$-dressing field $u'$ is defined by $R^*_ju'=j\- u'$ for $j\in J$. So the equivariance of $C(u')$ is
\begin{align*}
\left(R^*_jC(u')\right)(p)=C_{pj} \left( u'(pj)\right)=C_p(j)\- C_p(ju'(pj))=C_p(j)\-C_p(u'(p)) = \left(C(j)\-C(u') \right)(p).
\end{align*}
This is indeed the defining property of a $C(J)$-dressing field. Given $\Phi'\in \Aut_v(P') \simeq \gamma' \in \J$, its gauge transformation is
\begin{align*}
\left(C(u')^{\gamma'}\right)(p):=\left({\Phi'}^*C(u')\right)(p)=\left( R^*_{\gamma'(p)} C(u')\right)(p)= C_p(\gamma'(p))\- C_p(u'(p))=\left(C(\gamma')\-C(u')\right)(p).
\end{align*}
\end{proof}
 In order to not spoil the $\K$-invariance obtained from the first dressing field $u$, in addition for $u'$ to satisfy \eqref{CompCond12}, the $K$-equivariance of the map $C_p$ should be trivial: $R^*_kC_p=C_p$. 
In this case indeed the $C(J)$-dressing is $K$-invariant: $\left(R^*_kC(u')\right)(p) =C_{pk}(u'(pk))=C_p(u'(p))=\left( C(u')\right)(p)$. So one has
\begin{align*}
\left(\chi^{uC(u')}\right)^\gamma&=\left(\chi^\gamma\right)^{u^\gamma C(u')^\gamma}=\left(\chi^\gamma\right)^{\gamma\-u C(u')}=\chi^{uC(u')}, \quad \gamma\in \K.\\
\left(\chi^{uC(u')}\right)^{\gamma'}&=\left(\chi^{\gamma'}\right)^{u^{\gamma'} C(u')^{\gamma'}}=\left(\chi^{\gamma'}\right)^{{\gamma'}\-uC(\gamma') \ C(\gamma')\-C(u')}=\chi^{uC(u')}, \quad \gamma' \in \J .
\end{align*}
The properties  of the dressing fields $u$ and $C(u')$ implies that $uC(u')$ can be treated as a single dressing for $\H$: 
\begin{align*}
\left(uC(u')\right)^{\gamma\gamma'}=\left(\left(uC(u')\right)^\gamma\right)^{\gamma'}=(\gamma\- uC(u'))^{\gamma'}=\left(\gamma^{\gamma'}\right)\-\ {\gamma'}\-uC(\gamma')\ C(\gamma')\-C(u')={\gamma'}\-\gamma\- uC(u').
\end{align*}

\paragraph{The case of  $1$-$\alpha$-cocycles} Suppose $C_p : J \rarrow G'$ is defined by $C_p(jj')=C_p(j)\ \alpha_j[C_p(j')]$, for $\alpha: J \rarrow \Aut(G')$ a continuous group morphism. Such objects appear in the representation theory of crossed products of $C^*$-algebras and is known as a \emph{$1$-$\alpha$-cocycle} (see \cite{Pedersen79,Williams07}).\footnote{In the general theory the group $G'$ is replaced by a $C^*$-algebra $A$.} Its defining property is an example of \eqref{PropDefC}, and everything that has been said in this section -and will be said in the following - applies when $C_p$ is a $1$-$\alpha$-cocycle. 
\medskip

As a particular case, consider the following
\begin{prop}%%%%%%%%%%%%%%%%%%%%%%%%%%%%%%%%%%%%%%
\label{alpha_cocycle}
 Suppose $J$ is abelian and let $A_p, B : J\rarrow GL_n$  be group morphisms where $R^*_jA_p(j')=B(j)\- A_p(j') B(j)$. Then $C_p:=A_pB : J\rarrow GL_n$ is a $1$-$\alpha$-cocyle where the morphism $\alpha: J \rarrow \Aut(GL_n)$ is the conjugate action through the morphism $B$: $\alpha_j [g] =B(j)\-[g] B(j)$, with $g \in GL_n$.
 \end{prop}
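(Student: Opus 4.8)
The plan is to verify directly the defining identity of a $1$-$\alpha$-cocycle, namely $C_p(jj')=C_p(j)\,\alpha_j\big[C_p(j')\big]$, for the pointwise product $C_p(j):=A_p(j)B(j)$ with $\alpha_j[g]:=B(j)\- g\,B(j)$. The computation uses only the three hypotheses: that $A_p$ and $B$ are group morphisms, the intertwining relation $R^*_jA_p(j')=A_{pj}(j')=B(j)\- A_p(j')B(j)$, and that $J$ is abelian. Since the defining identity of a $1$-$\alpha$-cocycle is an instance of the general cocycle property \eqref{PropDefC}, establishing it proves the claim.

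First I would record that $\alpha$ is a genuine group morphism $J\rarrow\Aut(GL_n)$: each $\alpha_j$ is conjugation by $B(j)\-$, hence an automorphism of $GL_n$, and since $B$ is a morphism $\alpha_{jj'}[g]=B(jj')\- g\,B(jj')=B(j')\- B(j)\- g\,B(j)B(j')=\big(\alpha_{j'}\circ\alpha_j\big)[g]$; because $J$ is abelian this equals $\big(\alpha_j\circ\alpha_{j'}\big)[g]$ as well, so $\alpha$ is well defined as a morphism.

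The core step is the expansion of the right-hand side. Substituting the definitions and cancelling the inner $B(j)\- B(j)$ gives
\[
C_p(j)\,\alpha_j\big[C_p(j')\big]=A_p(j)B(j)\cdot B(j)\-\big[A_p(j')B(j')\big]B(j)=A_p(j)A_p(j')\,B(j')B(j).
\]
Applying the morphism property of $A_p$ and then the abelianness of $J$, which yields $B(j')B(j)=B(j'j)=B(jj')$, the right-hand side collapses to $A_p(jj')B(jj')=C_p(jj')$, which is exactly the sought identity. For completeness I would note that the same manipulation, reorganised as $C_{pj}(j')=A_{pj}(j')B(j')=B(j)\- A_p(j')B(j)B(j')=\alpha_j\big[C_p(j')\big]$ (again using commutativity of the $B$-factors), displays $C_p$ directly in the form $C_p(jj')=C_p(j)\,C_{pj}(j')$ of \eqref{PropDefC}.

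I expect no genuine obstacle here; the statement is a bookkeeping verification. The single point deserving attention is tracking where the hypothesis that $J$ is abelian is actually invoked: it enters solely to commute the two factors $B(j)$ and $B(j')$, and this is simultaneously what makes the conjugation $\alpha$ an honest morphism into $\Aut(GL_n)$ and what makes the cocycle identity close up.
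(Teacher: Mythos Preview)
Your proof is correct and follows essentially the same route as the paper: a direct one-line verification of the cocycle identity using the morphism property of $A_p$ and $B$ together with the commutativity of $J$ to swap the factors $B(j)$ and $B(j')$. The paper expands $C_p(jj')$ forward while you expand $C_p(j)\,\alpha_j[C_p(j')]$ backward, but the algebra and the use of hypotheses are identical; your additional remarks on $\alpha$ being a morphism and on the form $C_p(jj')=C_p(j)\,C_{pj}(j')$ via the equivariance of $A_p$ are welcome clarifications not present in the paper's terse argument.
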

\begin{proof}%%%%%%%%%%%%%%%%%%%%%%%%%%%%%%%%%%%%%
Using the commutativity of $J$ the proposition is proven in a one line calculation:
\begin{align*}
C_p(jj')=A_p(jj')B(jj')=A_p(j)A_p(j')B(j)B(j')=A_p(j)B(j)\ B(j)\-[ A_p(j')B(j')] B(j)=C_p(j)\ B(j)\-[C_p(j')]B(j). 
\end{align*}
Notice by the way that we have $C_p(jj')=C_p(j'j)=C_p(j')\ B(j')\-[C_p(j)]B(j')$, as is easily seen. 
\end{proof}
As a matter of fact  in the case soon to be discussed of the conformal Cartan geometry and the associated Tractors and Twistors, $1$-$\alpha$-cocycles of this type - where $J$ is the Weyl group of rescalings - are involved.

\subsection{Application to the BRST framework} %%%%%%%%%%%%%%%%%%%%%%%%%%%%%%%%%%%%%%%%%%%%%%%
\label{Application to the BRST framework} %%%%%%%%%%%%%%%%%%%%%%%%%%%%%%%%%%%%%%%%%%%%%%%

The BRST algebra encodes the infinitesimal gauge symmetry. It is to be expected that the dressing field method modifies it. To see how, let us first consider the following 
\begin{prop}    %%%%%%%%%%%%%%%%%%%%%%%%%%%%%%%%%%%%%%%%%%%
Given the BRST algebra \eqref{BRST}-\eqref{BRST2} on the initial gauge variables and the ghost $v\in Lie\H$. The composite fields \eqref{CompFields} satisfy the  \emph{modified BRST algebra}:
\begin{align}
\label{NewBRST}
&s\omega^u=-D^uv^u=-dv^u-[\omega^u, v^u], \quad s\Omega^u=[\Omega^u, v^u], \quad s\vphi^u=-\rho_*(v^u)\vphi^u,\quad \text{and } \quad sv^u=-\tfrac{1}{2}[v^u, v^u]\\[2mm]
 &\text{with the \emph{dressed ghost} } \quad v^u=u\-vu + u\-su.  \notag
\end{align}
This result does not rest on the assumption that $u$ is a dressing field. Furthermore one defines the \emph{dressed algebraic connection} as
\begin{align*}
\t{\omega^u}=\omega^u + v^u=u\- \t\omega u + u\-\t du.
\end{align*}
\end{prop}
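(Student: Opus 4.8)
The plan is to exploit the fact that $\t d := d + s$ is, exactly like $d$, a nilpotent antiderivation of total degree $1$ (indeed $d^2 = s^2 = 0$ and $ds + sd = 0$), and that the algebraic connection $\t\omega := \omega + v$ together with $\t\Omega := \Omega$ satisfies the russian formula $\t\Omega = \t d\,\t\omega + \tfrac{1}{2}[\t\omega, \t\omega]$, which is formally identical to the Cartan structure equation relating $\Omega$ and $\omega$ through $d$. The essential observation is that every \emph{purely algebraic} identity established in the proof of Proposition \ref{P1} — namely that $\omega^u = u\-\omega u + u\- du$ and $\Omega^u = u\-\Omega u$ satisfy the structure equation, and that $D^u\vphi^u = d\vphi^u + \rho_*(\omega^u)\vphi^u$ equals $\rho(u\-)D\vphi$ — used \emph{only} these two structural facts about $(d,\omega,\Omega)$ and never any equivariance of $u$. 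Hence those identities transpose verbatim under the substitution $d \to \t d$, $\omega \to \t\omega$, $\Omega \to \t\Omega$, $D \to \t D := \t d + \rho_*(\t\omega)$, with $u$ carried along as an arbitrary $G$-valued $0$-form. This is also exactly why the result cannot rest on $u$ being a dressing field.

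First I would set the dressed algebraic connection and curvature $\t{\omega^u} := u\-\t\omega u + u\-\t d u$ and $\t{\Omega^u} := u\-\t\Omega u$. Splitting $\t d = d + s$ and $\t\omega = \omega + v$ and sorting by bidegree, the $(1,0)$-part of $\t{\omega^u}$ reproduces $\omega^u$ while its $(0,1)$-part is precisely $v^u = u\- v u + u\- su$; this both \emph{derives} the announced dressed ghost and gives the compact identity $\t{\omega^u} = \omega^u + v^u$ recorded in the statement. By the russian formula one likewise gets $\t{\Omega^u} = u\-\t\Omega u = u\-\Omega u = \Omega^u$.

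The core step is to apply the transposed identities and project onto homogeneous bidegrees. The transposed structure equation $\t{\Omega^u} = \t d\,\t{\omega^u} + \tfrac{1}{2}[\t{\omega^u}, \t{\omega^u}]$, expanded with $\t{\omega^u} = \omega^u + v^u$, $\t d = d + s$, $\t{\Omega^u} = \Omega^u$, splits into a $(2,0)$-part (the structure equation for $(\omega^u,\Omega^u)$, automatic), a $(1,1)$-part giving $s\omega^u = -dv^u - [\omega^u, v^u] = -D^uv^u$, and a $(0,2)$-part giving $sv^u = -\tfrac{1}{2}[v^u, v^u]$. Applying $\t d$ once more yields the transposed Bianchi identity $\t d\,\t{\Omega^u} = [\t{\Omega^u}, \t{\omega^u}]$, whose $(2,1)$-part is $s\Omega^u = [\Omega^u, v^u]$. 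Finally, since $s\vphi = -\rho_*(v)\vphi$ kills the $(0,1)$-part of $\t D\vphi$, one has $\t D\vphi = D\vphi$, so the transposed matter identity reads $\t d\vphi^u + \rho_*(\t{\omega^u})\vphi^u = \rho(u\-)\t D\vphi = D^u\vphi^u$; its $(0,1)$-part is $s\vphi^u = -\rho_*(v^u)\vphi^u$. This exhausts \eqref{NewBRST}, and the dressed algebraic connection formula is just the definition fixed in the previous paragraph.

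I expect the main obstacle to be bookkeeping of the bigraded signs rather than any conceptual difficulty. Because $\omega^u$ and $v^u$ are both of total degree $1$, the bracket $[\omega^u, v^u]$ is a graded \emph{anti}commutator and is symmetric, so the two cross terms in $\tfrac{1}{2}[\t{\omega^u}, \t{\omega^u}]$ combine into a single $[\omega^u, v^u]$ with the correct coefficient; one must also use $s(u\-) = -u\-(su)u\-$ and $sd = -ds$ consistently. I would verify each bidegree projection explicitly, and as a safeguard note that every relation in \eqref{NewBRST} can alternatively be obtained by a direct, if lengthier, computation of $s\omega^u$, $s\Omega^u$, $s\vphi^u$, $sv^u$ straight from the definitions \eqref{CompFields} together with the undressed relations \eqref{BRST}--\eqref{BRST2}. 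Throughout, the arbitrariness of $u$ remains manifest, which is exactly the claim that the modified algebra does not presuppose that $u$ is a dressing field.
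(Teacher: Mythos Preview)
Your proof is correct, and it takes a genuinely different route from the paper's. The paper's proof sketch proceeds by \emph{inverting} the dressing: one writes $\omega = u\,\omega^u u\- - du\,u\-$, $\Omega = u\,\Omega^u u\-$, $\vphi = \rho(u)\vphi^u$, substitutes these into the undressed BRST relations \eqref{BRST}--\eqref{BRST2}, and after rearrangement reads off the relations \eqref{NewBRST} with the dressed ghost emerging as whatever combination of $v$, $u$, $su$ is left over. In contrast, you work entirely on the tilded side: you observe that the russian formula makes $(\t d,\t\omega,\t\Omega)$ a formal clone of $(d,\omega,\Omega)$, dress the algebraic connection directly, and then project the dressed structure and Bianchi equations onto bidegrees. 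Your approach is more conceptual and makes two things transparent that the paper only states: first, the identity $\t{\omega^u}=\omega^u+v^u$ is not an afterthought but the very definition from which $v^u$ is read off; second, the irrelevance of the dressing-field hypothesis is manifest because only the algebraic portion of Proposition~\ref{P1} is invoked. The paper's approach, on the other hand, has the virtue of being self-contained (it does not require the reader to accept that the bundle-theoretic proof of Proposition~\ref{P1} admits a purely algebraic core transposable to $\t d$) and yields each relation by a single substitution rather than by assembling bidegree components. Both arguments are short; yours packages the four relations into one dressed horizontality condition, which is arguably the cleaner statement.
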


\begin{proof} %%%%%%%%%%%%%%%%%%%%%%%%%%%%%%%%%%%%%%%%%%%%%%
The result is easily found by expressing the initial gauge variable $\chi=\{ \omega, \Omega, \vphi\}$ in terms of the dressed fields $\chi^u$ and the dressing field $u$, and re-injecting in the initial BRST algebra  \eqref{BRST}-\eqref{BRST2}. At no point of the derivation does $su$ need to be explicitly known. It then holds regardless if $u$ is a dressing field or not. 
\end{proof}

If the ghost $v$ encodes the infinitesimal initial $\H$-gauge symmetry, the dressed ghost $v^u$ encodes the infinitesimal residual gauge symmetry. Its concrete expression depends on the BRST transformation of $u$. 

Under the hypothesis $K\subset H$, the ghost decomposes as $v=v_\LieK + v_{\LieH/\LieK}$, and the BRST operator splits accordingly: $s=s_\LieK + s_{\LieH/\LieK}$. If $u$ is a dressing field its BRST transformation is the infinitesimal version of its defining transformation property: $s_\LieK u=-v_\LieK u$. So the dressed ghost is
\begin{align*}
v^u=u\-vu+u\-su=u\-(v_\LieK + v_{\LieH/\LieK})u+u\-(-v_\LieK u+ s_{\LieH/\LieK}u)=u\- v_{\LieH/\LieK} u+u\- s_{\LieH/\LieK}u. 
\end{align*}
We see that the Lie$\K$ part of the ghost, $v_\LieK$, has disappeared. This means that $s_\LieK \chi^u=0$, which expresses the $\K$-invariance of the composite fields \eqref{CompFields}. 

\paragraph{Residual BRST symmetry}  %%%%%%%%%%%%%%%%%%%%%%%%%%%%%%%%%%%%%%%%%%%%%%%%%

In general $\LieH/\LieK$ is simply a vector space, so $s_{\LieH/\LieK} u$ is left unspecified and nothing can be said in general of $v^u$ and of the form of the modified BRST algebra \eqref{NewBRST}. But following  section \ref{Residual gauge symmetry}, if $\K\mathrel{\unlhd}H$ then $H/K=J$ is a group with lie algebra $\LieH/\LieK=\LieJ$. We here provide the BRST treatment of the two cases detailed in this section. 
\medskip

Suppose the dressing field satisfies the condition \eqref{CompCond}, whose BRST version is: $s_{\LieJ} u=[u, v_{\LieJ}]$. The dressed ghost is then
\begin{align}
\label{NewGhost2}
v^u=u\- v_{\LieJ} u+u\- s_{\LieJ}u=u\- v_{\LieJ} u+u\- (u v_{\LieJ} -  v_{\LieJ}u)=v_{\LieJ}. 
\end{align}
This in turn implies that the new BRST algebra is
\begin{align}
\label{NewBRST2}
s\omega^u=-D^uv_{\LieJ}=-dv_{\LieJ}-[\omega^u, v_{\LieJ}], \quad s\Omega^u=[\Omega^u, v_{\LieJ}], \quad s\vphi^u=-\rho_*(v_{\LieJ})\vphi^u,\quad \text{and } \quad sv_{\LieJ}=-\tfrac{1}{2}[v_{\LieJ}, v_{\LieJ}].
\end{align}
This is the BRST version of \eqref{GTCompFields}, and reflects the fact that the composites fields \eqref{CompFields} are genuine $\J$-gauge fields, in particular that $\omega^u$ is a $J$-connection. 

A further dressing field $u'$ would be defined by $s_{\LieJ} u'=-v_{\LieJ} u'$, and the necessary compatibility condition it needs to satisfy is $s_\LieK u'=0$. The combined dressing $uu'$ is such that $suu'=-vuu'$, so that $v^u=0$ and $s\chi^{uu'}=0$. Again the straightforward extension of the scheme to any number of dressing fields can be found in \cite{Francois2014}. 
\medskip

Suppose now that the dressing field satisfies the  condition \eqref{CompCond2}, whose BRST version is: $s_{\LieJ} u=-v_{\LieJ}u + uc_p(v_{\LieJ})$. The dressed ghost is then
\begin{align}
\label{NewGhost3}
v^u=u\- v_{\LieJ} u+u\- s_{\LieJ}u=u\- v_{\LieJ} u+u\- \left(-v_{\LieJ}u + uc_p(v_{\LieJ})\right)=c_p(v_{\LieJ}). 
\end{align}
This in turn implies that the new BRST algebra is
\begin{align}
\label{NewBRST3}
&s\omega^u=-dc_p(v_{\LieJ})-[\omega^u, c_p(v_{\LieJ})], \quad s\Omega^u=[\Omega^u, c_p(v_{\LieJ})], \quad s\vphi^u=-\rho_*(c_p(v_{\LieJ}))\vphi^u,\\[1mm]
\quad \text{and } \quad &sc_p(v_{\LieJ})=-\tfrac{1}{2}[c_p(v_{\LieJ}), c_p(v_{\LieJ})]. \notag
\end{align}
This is the BRST version of \eqref{GTCompFields2}, and reflects the fact that the composites fields \eqref{CompFields} instantiate the gauge principle in a satisfactory way.

A further dressing field $C(u')$ would be defined by $s_{\LieJ} C(u')=-c(v_{\LieJ}) C(u')$, and the  compatibility condition it needs to satisfy is $s_\LieK C(u')=0$. The combined dressing $uC(u')$ is such that $s\left( uC(u')\right)=-v\left( uC(u')\right)$, so that $v^u=0$ and $s\chi^{uC(u')}=0$.
\bigskip

\subsection{Local aspects and Physics} %%%%%%%%%%%%%%%%%%%%%%%%%%%%%%%%%%%%%%%%%%%%%%%%%%%%%
\label{Local aspects and Physics} %%%%%%%%%%%%%%%%%%%%%%%%%%%%%%%%%%%%%%%%%%%%%%%%%%%%%

 Up until now we have exposed in great details the global aspects of the dressing approach on the bundle $\P$  to emphasize the geometric nature of the composites fields obtained, according to the given equivariance properties displayed by the dressing field. Most notably we showed that  the composite field can behave as a new kind of gauge fields. 
 
But to do Physics we need the local representatives on an open subset $\U\subset \M$ of global dressing and composite fields. These are obtained in the usual way from a local section $\sigma:\U \rarrow \P$ of the bundle. The  important properties they thus retain is their gauge invariance and residual gauge transformations. 

If it happens that a dressing field is defined locally on $\U$ first, and not directly on $\P$, then the local composite fields $\chi^u$ are defined in terms of the local dressing field $u$ and local gauge fields $\chi$ by \eqref{CompFields}. The gauge invariance and residual gauge transformations of these local composite fields are derived from the gauge transformations of the local dressing field under the various subgroups of the local gauge group $\H_\text{\tiny{loc}}$ according to $(\chi^u)^\gamma=(\chi^\gamma)^{u^\gamma}$. The BRST treatment for the local objects mirrors exactly the one given for the global objects.  

This being said, note $A=\sigma^*\omega$, $F=\sigma^*\Omega$ for definiteness but keep $u$ and $\vphi$ to denote the local dressing field and section. We state the final proposition of this section, dealing with gauge theory.
\begin{prop} %%%%%%%%%%%%%%%%%%%%%%%%%%%%%%%%%%%%%%%%%%%%%%%%%%%%%
\label{Prop-Lagrangian}
Given the geometry defined by a bundle $\P(\M, H)$ endowed with $\omega$ and the associated bundle $E$, suppose we have a gauge theory given by the prototypical $\H_\text{\tiny{loc}}$-invariant Yang-Mills Lagrangian
\begin{align*}
L(A, \vphi)=\tfrac{1}{2}\Tr(F \w * F) + \langle D\vphi, \ *D\vphi\rangle - U(||\vphi||), %\quad \text{with } L(A^\gamma, \vphi^\gamma)=L(A, \vphi)
\end{align*}
where $ ||\vphi||:=|\langle \vphi \rangle |^{\sfrac{1}{2}}$. If there is a local dressing field $u: \U \rarrow G  \subset H$ with $\K_\text{\tiny{loc}}$-gauge transformation $u^\gamma=\gamma\-u$, then the above Lagrangian is actually  a $\H_\text{\tiny{loc}}/\K_\text{\tiny{loc}}$-gauge theory defined in terms of  $\K_\text{\tiny{loc}}$-invariant variables since we have
\begin{align*}
L(A, \vphi)=L(A^u, \vphi^u)=\tfrac{1}{2}\Tr(F^u \w * F^u) + \langle D^u\vphi^u, \ *D^u\vphi^u\rangle - U(||\vphi^u||)
\end{align*}
by a mere change of variables. 
\end{prop}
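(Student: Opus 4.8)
The plan is to prove the identity term by term, exploiting the fact that the local dressed fields obey exactly the algebraic relations
\[
F^u = u\- F u, \qquad \vphi^u = \rho(u\-)\vphi, \qquad D^u\vphi^u = \rho(u\-) D\vphi,
\]
which are the local shadows of the composite-field formulas \eqref{CompFields}. Since these are formally identical to a gauge transformation by the group element $u$, the whole argument reduces to observing that the three pieces of $L$ are invariant under such a formal substitution. The only two structural inputs I need are (i) that the trace $\Tr$ is $\Ad_G$-invariant and the inner product $\langle\ ,\ \rangle$ on $V$ is $\rho(G)$-invariant, and (ii) that $u$ is a Lie-group-valued $0$-form, so the Hodge star $*$ --- which touches only the de Rham form indices --- commutes with left and right multiplication by $u$ and with the action $\rho(u\-)$.

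First I would treat the Yang--Mills term. Using $*(u\- F u) = u\-(*F)u$ and the cyclicity of the trace,
\[
\tfrac{1}{2}\Tr(F^u \w * F^u) = \tfrac{1}{2}\Tr\!\big(u\- F u \w u\-(*F)u\big) = \tfrac{1}{2}\Tr\!\big(u\- (F \w *F) u\big) = \tfrac{1}{2}\Tr(F \w *F).
\]
Next, for the matter kinetic term I would pull the $0$-form action through the star, $*\,\rho(u\-)D\vphi = \rho(u\-)*D\vphi$, and then invoke the $\rho(G)$-invariance of $\langle\ ,\ \rangle$ to obtain $\langle D^u\vphi^u, *D^u\vphi^u\rangle = \langle \rho(u\-)D\vphi, \rho(u\-)*D\vphi\rangle = \langle D\vphi, *D\vphi\rangle$. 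Finally the potential term is immediate: $\rho(G)$-invariance gives $\langle \vphi^u\rangle = \langle \vphi\rangle$, hence $||\vphi^u|| = ||\vphi||$ and $U(||\vphi^u||) = U(||\vphi||)$. Summing the three equalities yields $L(A^u,\vphi^u) = L(A,\vphi)$.

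The argument is computationally light; the one point deserving care is the justification of the two invariance properties in (i). These are not extra hypotheses but follow from the stated $\H_\text{\tiny{loc}}$-invariance of $L$ together with $G \subseteq H$: the Lagrangian is gauge invariant precisely because $\Tr$ and $\langle\ ,\ \rangle$ are $\Ad_H$- and $\rho(H)$-invariant, and restriction to the subgroup $G$ preserves these properties. The genuinely conceptual obstacle --- interpretive rather than technical --- is that although the manipulation reads exactly like a gauge transformation, $u \notin \H_\text{\tiny{loc}}$ and $f \notin \Aut_v(\P)$, so this is a bona fide change of variables and not a symmetry operation. The equality of $L(A,\vphi)$ and $L(A^u,\vphi^u)$ is thus an \emph{identity} re-expressing the same Lagrangian in the $\K_\text{\tiny{loc}}$-invariant variables, which is exactly what exhibits the theory as a $\H_\text{\tiny{loc}}/\K_\text{\tiny{loc}}$-gauge theory.
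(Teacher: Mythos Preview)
Your proof is correct and takes essentially the same approach as the paper: both rest on the observation that the $\H_\text{\tiny{loc}}$-invariance of $L$ is a \emph{formal} algebraic property valid for any $H$-valued map, so it applies verbatim with $u:\U\rarrow G\subset H$ in place of $\gamma$. You simply unpack this term by term --- tracing the $\Ad$- and $\rho$-invariance through each piece --- whereas the paper states the formal property in one line; your final paragraph, noting that $u\notin\H_\text{\tiny{loc}}$ so this is a change of variables rather than a gauge transformation, matches the paper's own emphasis.
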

\begin{proof}%%%%%%%%%%%%%%%%%%%%%%%%%%%%%%%%%%%%%%%%%%%%%%%%%%%%%%%
The result follows straightforwardly from the $\H_\text{\tiny{loc}}$-invariance of the initial Lagrangian. Since $L(A^\gamma, \vphi^\gamma)=L(A, \vphi)$ for $\gamma:\U \rarrow H$,
 holds as a formal property of $L$, it follows that $L(A^u, \vphi^u)=L(A, \vphi)$ for $u:\U \rarrow G \subset H$.
\end{proof}

Notice  that since $u$ is a dressing field, $u \notin \H_\text{\tiny{loc}}$ so the dressed Lagrangian $L(A^u, \vphi^u)$ ought not to be confused with a gauge-fixed Lagrangian $L(A^\gamma, \vphi^\gamma)$ for some chosen $\gamma \in \H_\text{\tiny{loc}}$, even if it may happen that $\gamma=u$.\footnote{Remember indeed the comments at the end of section \ref{Composite fields}.}
 A fact that might go unnoticed. As we've stressed in the opening of section \ref{Reduction of gauge symmetries: the dressing field method}, the dressing field approach is distinct from both gauge-fixing and spontaneous symmetry breaking as a means to reduce gauge symmetries. 

 Let us highlight the fact that a dressing field can often be constructed  by requiring the gauge invariance of a prescribed ``gauge-like condition''. 
  Such a condition is given when a local gauge field $\chi$ (often the gauge potential)  transformed by a field $u$ with value in the symmetry group $H$, or one of its subgroups, is required to satisfy a functional constraint: $\Sigma(\chi^u)=0$.  Explicitly solved, this makes $u$ a function of $\chi$, $u(\chi)$, thus sometimes called \emph{field dependent gauge transformation}. However this terminology is valid if and only if  $u(\chi)$ transforms under the action of $\gamma\in \H_\text{\tiny{loc}}$ as $u(\chi)^\gamma:=u(\chi^\gamma)=\gamma\- u(\chi) \gamma$, in which case $u(\chi) \in \H_\text{\tiny{loc}}$.  But if the functional constraint  still holds under the action of $\H_\text{\tiny{loc}}$, or of a subgoup thereof, it follows  that $(\chi^\gamma)^{u^\gamma}=\chi^u$ (or equivalently that $s\chi^u=0$). This in turn imposes that $u^\gamma=\gamma\- u$ (or $su=-vu$) so that $u \notin \H_\text{\tiny{loc}}$ but is indeed a dressing field.

 This and the above proposition generalizes the pioneering idea of Dirac \cite{Dirac55, Dirac58}  aiming at quantizing QED by rewriting the classical theory in terms of gauge-invariant variables. The idea was rediscovered several times, early by Higgs himself \cite{Higgs66} and Kibble \cite{Kibble67}. The invariant variables were sometimes termed \emph{Dirac variables} \cite{Pervushin, Lantsman} and reappeared in various contexts in gauge theory, such as  QED \cite{Lavelle-McMullan93},  quarks theory in QCD \cite{McMullan-Lavelle97}, the proton spin decomposition controversy  \cite{LorceGeomApproach, Leader-Lorce, FLM2015_I} and most notably  in electroweak theory and Higgs mechanism \cite{Frohlich-Morchio-Strocchi81, McMullan-Lavelle95, Chernodub2008, Faddeev2009, Masson-Wallet, Ilderton-Lavelle-McMullan2010, Struyve2011, vanDam2011}. Indeed,  proposition \ref{Prop-Lagrangian} applies to the electroweak sector of the Standard Model and thus provides an alternative  to the usual textbook interpretation of the Higgs mechanism in terms of spontaneous symmetry breaking, see \cite{GaugeInvCompFields, Francois2014} for the explicit dressing field treatment. 
  
The dressing field approach thus gives a unifying and clarifying framework for these works, and others concerning the BRST treatment of anomalies in QFT \cite{Manes-Stora-Zumino1985, Garajeu-Grimm-Lazzarini}, Polyakov's ``partial gauge fixing'' for $2D$-quantum gravity \cite{Polyakov1989, Lazzarini2008} or the construction of the Wezz-Zumino functionnal \cite{Attard-Lazz2016}.
 It is the aim of this paper and  its companion to show that both tractors and twistors can also be encompassed by this approach, which furthermore highlights their nature as gauge fields of a non-standard kind. The case of tractors is dealt with in the next section. Twistors will be treated in the companion paper.

\section{Tractors from conformal Cartan geometry via dressing} %%%%%%%%%%%%%%%%%%%%%%%%%%%%%%%%%%%%%
\label{Tractors from conformal Cartan geometry via dressing} %%%%%%%%%%%%%%%%%%%%%%%%%%%%%%%%%%%%%%

 Due to the tremendous progress of the last twenty years in the mathematics of parabolic geometries, the term \emph{tractor} is now more general than it used to. Given a Cartan geometry $(\P, \varpi)$ of type $(G, H)$, the reference text \cite{Cap-Slovak09}, section $1. 5.7$, defines  tractor bundles as the class of natural\footnote{\emph{Natural} is taken in the precise technical sense of \cite{Cap-Slovak09} section $1.5.5$, essentially as being associated to of higher-order frame bundles.}
  vector bundles associated to $\P$  where the action of $H$ is a restriction of an action by $G$. Connections naturally induced by the Cartan connection $\varpi$ on tractor bundles are called tractor connections. An example  is the \emph{adjoint tractor bundle} $\P\times_H \LieG$, where $H$ acts on $\LieG$ by the restriction of the adjoint action of $G$. The curvature $\b\Omega$ of $\varpi$ takes values in the sections of the adjoint tractor bundle. If $\doubleR^n$ is the defining representation of $G$, then the bundle $\P\times_H \doubleR^n$ is the \emph{standard tractor bundle}.  
  
  However, as mentioned in our introduction, initially the standard tractor bundle was devised for conformal (and projective) manifolds and constructed via prolongation of a defining differential equation. A procedure deemed at the time more  explicit than the associated bundle construction, facilitating calculations \cite{Curry-Gover2015} and  easier as a direct definition \cite{Bailey-et-al94}. This procedure we review briefly in the following section, so that the reader can compare with the derivation via the dressing field method in the next.

\subsection{Bottom-up construction via prolongation of the Almost Einstein equation} %%%%%%%%%%%%%%%%%%%%%%%%%%%%%%%
\label{Bottom-up construction via prolongation of the Almost Einstein equation} %%%%%%%%%%%%%%%%%%%%%%%%%%%%%%%%%%%

One starts with a $n$-dimensional conformal manifold $(\M, c)$ with $c$ the conformal class of the Levi-Civita connection.
Define the operator $A_{\mu\nu}:=\text{TF} \left( \nabla_\mu \nabla_\nu - P_{\mu\nu}\right)$, where TF means ``trace-free'' in the metric sense, $\nabla$ is the covariant derivative associated to a choice of metric $g \in c$ and $P_{\mu\nu}= -\tfrac{1}{n-2}\left( R_{\mu\nu} - \tfrac{R}{2(n-1)}g_{\mu\nu}\right)$ is the Schouten tensor. It is a computational exercise to show that $A_{\mu\nu}$ is a covariant operator on conformal $1$-densities $\s \in \E[1]$: $\h A_{\mu\nu} \circ z =z \circ A_{\mu\nu}$. That is, under the Weyl conformal rescaling of the metric $\h g=z^2g$, one has $\h \s=z\s$ and $\h A_{\mu\nu} \h \s=zA_{\mu\nu}\s$. Such a $1$-density $\s$ is often called a \emph{scale}, since it can be used to define a so-called conformal metric $\s^{-2}g$ representative of the conformal class $c$.\footnote{In a forthcoming note we will show how this move can be understood in  the light of the dressing field method.} The operator $A_{\mu\nu}$ is thus well-defined on $(\M, c)$. 

One then defines the so-called Almost Einstein (AE) equation on $(\M, c)$ as $A_{\mu\nu}\s=0$, explicitly
\begin{align}
\label{AE1}
\text{TF} \left( \nabla_\mu \nabla_\nu \s - P_{\mu\nu}\s \right)= \nabla_\mu \nabla_\nu \s - P_{\mu\nu} \s  - \tfrac{g_{\mu\nu}}{n} \left( \Delta\s - P \s \right)=0,
\end{align}
 with $\Delta:=g^{\mu\nu}\nabla_\mu\nabla_\nu$ and $P:=g^{\mu\nu}P_{\mu\nu}$. It is thus named because if $\s$ is a solution then the metric it determines is Einstein \cite{Bailey-et-al94, Curry-Gover2015}. 
This is the differential equation to be prolonged and recast as a system of first-order differential equations.
To do so one defines the intermediary variables $\ell_\nu=\nabla_\nu \s$ and $\rho=- \tfrac{1}{n} \left( \Delta\s - P \s \right)$, so that \eqref{AE1} an be recast as
\begin{align*}
\nabla_\mu \s -\ell_\mu=0, \qquad \nabla_\mu \ell_\nu -P_{\mu\nu} \s + g_{\mu\nu} \rho=0.
\end{align*}
One only has to find a constraint equation on $\rho$ to close the system. This is done by applying $\nabla$ on the second equation above and after some algebra, so that finally the second-order differential AE equation \eqref{AE1} is replaced by the linear system
\begin{align}
\label{AE2}
\nabla_\mu \s -\ell_\mu=0, \qquad \nabla_\mu \ell_\nu -P_{\mu\nu} \s + g_{\mu\nu} \rho=0, \qquad \nabla_\mu \rho + g^{\alpha\beta}P_{\mu \alpha} \ell_\beta=0.
\end{align}
This system can be rewritten as the action of a linear operator $\nabla_\mu^\T$ acting on the triplet $t=(\s, \ell_\nu, \rho) \in \doubleR^{n+2}$:
\begin{align}
\label{AE3}
\nabla_\mu^\T t =0, \qquad \Rightarrow \qquad \d_\mu\begin{pmatrix}[1.2]  \s \\ \ell_\nu \\ \rho \end{pmatrix} +  \begin{pmatrix}[1.2]  0 & -\delta^\alpha_\mu & 0 \\ -P_{\mu\nu} &  -{\Gamma^\alpha}_{\mu\nu} &  g_{\mu\nu} \\0 & g^{\alpha \beta}P_{\mu\beta} & 0 \end{pmatrix}\begin{pmatrix}[1.2]  \s \\ \ell_\alpha \\ \rho \end{pmatrix}=0.
\end{align}
Given the \emph{particular} definition of $(\s, \ell_\mu, \rho)$, under a Weyl rescaling of the metric one finds after some algebra and using the well known relations $\h{\Gamma^\alpha}_{\mu\nu}= {\Gamma^\alpha}_{\mu\nu} + \delta^\alpha_\mu \Upsilon_\nu + \delta^\alpha_\nu \Upsilon_\mu - g^{\alpha\beta}\Upsilon_\beta g_{\mu\nu}$ and $\h P_{\mu\nu}= P_{\mu\nu} + \nabla_\mu\Upsilon_\nu  - \Upsilon_\mu\Upsilon_\nu + \tfrac{1}{2} \Upsilon^2 g_{\mu\nu}$, where  $\Upsilon_\mu:=z\-\d_\mu z$ and $\Upsilon^2=g^{\alpha\beta}\Upsilon_\alpha\Upsilon_\beta$,
\hspace{-2mm}
\begin{align}
\h \s&=z\s, \notag\\[-6mm]
\h\ell_\mu&%=\h\nabla_\mu \h\s =\d_\mu \h \s
=z\left(\ell_\mu + \Upsilon_\mu \s\right), \qquad \qquad \qquad   \text{ Or in matrix form, }\qquad  \begin{pmatrix}[1.2]  \h\s \\ \h\ell_\mu \\ \h\rho \end{pmatrix} =  \begin{pmatrix}[1.2] z & 0 & 0 \\ z\Upsilon_\mu & z \1 & 0 \\ -z^{-1}\tfrac{1}{2}\Upsilon^2 & -z^{-1} g^{\nu\mu} \Upsilon_\nu & z^{-1}\end{pmatrix}\begin{pmatrix}[1.2]  \s \\ \ell_\mu \\ \rho \end{pmatrix}. \label{GTtractor}\\[-6mm]
\h\rho&%=-\tfrac{1}{2}\left( \h\Delta \h \s - \h P \h\s\right)=
=z^{-1} \left( \rho  - g^{\nu\mu} \Upsilon_\nu \ell_\mu - \tfrac{1}{2}\Upsilon^2 \s \right). \notag
\end{align}
 This, one may consider as a gauge transformation so that the \emph{generic} triplets $t=(\s, \ell_\mu, \rho)$ gauge-related by \eqref{GTtractor}, called \emph{tractors}, are considered as sections (or equivariant maps) of a vector bundle over $(\M, c)$ with fiber $\doubleR^{n+2}$: the so-called \emph{standard tractor bundle} $\T$. 
 
With still more algebra, one shows that this gauge-equivalence still holds for the triplet defined by \eqref{AE2},
\begin{align}
\label{GTtractor-connection}
\begin{pmatrix}[1.2]  \h{(\nabla_\mu \s -\ell_\mu)} \\ \h{(\nabla_\mu \ell_\nu -P_{\mu\nu} \s + g_{\mu\nu} \rho)} \\ \h{(\nabla_\mu \rho + g^{\alpha\beta}P_{\mu \alpha} \ell_\beta)} \end{pmatrix} =  \begin{pmatrix}[1.2] z & 0 & 0 \\ z\Upsilon_\nu & z \1 & 0 \\ -z^{-1}\tfrac{1}{2}\Upsilon^2 & -z^{-1} g^{\nu\alpha} \Upsilon_\alpha & z^{-1}\end{pmatrix}\begin{pmatrix}[1.2]   \nabla_\mu \s -\ell_\mu \\ \nabla_\mu \ell_\nu -P_{\mu\nu} \s + g_{\mu\nu} \rho \\ \nabla_\mu \rho + g^{\alpha\beta}P_{\mu \alpha} \ell_\beta \end{pmatrix}.
\end{align}
So the linear operator $\nabla_\mu^\T$  \eqref{AE3} defines a covariant derivative on $\T$ usually called the \emph{tractor connection}. A tractor satisfying $\nabla_\mu^\T t=0$ is said parallel. By construction, parallel tractors are in bijective correspondence with solutions of the AE equation. 
There is a well defined bilinear form on sections $t, t'\in \Gamma(\T)$ defined by 
\begin{align}
\label{metric_tractor}
\langle t, t'  \rangle=\rho\s' + \ell_\mu g^{\mu\nu} \ell'_\nu +\s \rho' = (\s, \ell_\mu, \rho) \begin{pmatrix} 0 & 0 & 1 \\ 0 & g^{\mu\nu}& 0 \\ 1 & 0 & 0 \end{pmatrix}\begin{pmatrix} \s' \\ \ell'_\nu \\ \rho' \end{pmatrix}=t^TGt', 
\end{align}
where $G$ is a $(r+1, s+1)$-metric on $\T$. Indeed it is invariant under Weyl rescaling $\langle \h t, \h {t'}\rangle=\langle t, t' \rangle$, as can be verified via \eqref{GTtractor}. One also checks via \eqref{AE3} that, like a Levi-Civita connection, the tractor connection preserves the metric thus defined since $\nabla_\T \langle t, t' \rangle = 2 \langle \nabla^\T t, t' \rangle$. 

The commutator of the tractor connection defines the \emph{tractor curvature}
\begin{align}
\label{tractor_curvature}
\left[\nabla_\mu^\T, \nabla_\lambda^\T\right]t=\Omega_{\mu\lambda} t=\begin{pmatrix}[1.2]  0 & 0 & 0 \\ -C_{\mu\lambda, \nu} &  {W^\alpha}_{\mu\lambda, \nu} &  0 \\0 & g^{\alpha \beta}C_{\mu\lambda, \beta} & 0 \end{pmatrix}\begin{pmatrix}[1.2]  \s \\ \ell_\alpha \\ \rho \end{pmatrix},
\end{align}
where $C_{\mu\lambda, \nu}=\nabla_\lambda P_{\mu\nu}$ is the Cotton tensor, and ${W^\alpha}_{\mu\lambda, \nu}$ is the Weyl tensor. 
From this one sees immediately that the tractor connection $\nabla_\mu^\T$ is flat if and only if $(\M, c)$ is conformally flat. 

We refer the reader to \cite{Bailey-et-al94, Curry-Gover2015} for the detailed calculations and further important considerations about tractors and their applications.
\medskip 

Thus is constructed the tractor bundle $\T$ endowed with the tractor connection $\nabla^\T$, bottom up from the AE equation on a conformal manifold $(\M, c)$.  The tractor calculus then provided is thought of as the analog for conformal manifolds of the Ricci tensorial calculus for Riemannian manifolds $(\M, g)$. This approach, while presenting the advantage of being explicit, involves a fair amount of computation in order to derive the basic objects and their transformation properties. In the next section we lay our case that these very objects can be recovered with much less computation, top-down from the conformal Cartan bundle and its Cartan connection via the dressing field method. By doing so, the nature of the tractors and tractor connection as gauge fields of the non-standard kind described in section \ref{The composite fields as  a new kind of gauge fields}  is made clear. 

\subsection{Top-down gauge theoretic approach via the Cartan bundle} %%%%%%%%%%%%%%%%%%%%%%%%%%%%%%%%%%%%%
\label{Top-down gauge theoretic approach via the Cartan bundle} %%%%%%%%%%%%%%%%%%%%%%%%%%%%%%%%%%%%%%

The description of the conformal Cartan geometry requires some defining and comments. Once this is done in the following subsection, the dressing field method is applied in the next.

\subsubsection{The Cartan bundle and its naturally associated vector bundle}  %%%%%%%%%%%%%%%%%%%%%%%%%%%%%%%%%%
\label{The Cartan bundle and its naturally associated vector bundle}  %%%%%%%%%%%%%%%%%%%%%%%%%%%%%%%%%%

The conformal Cartan geometry $(\P, \varpi)$ is said modeled on the Klein model $(G, H)$ where $G=PSO(r+1, s+1)=\left\{ M \in GL_{n+2} | M^T \Sigma M= \Sigma, \det{M}=1 \right\}/ \pm \id$ with $\Sigma=\begin{psmallmatrix}  0 & 0 & -1 \\ 0 & \eta & 0 \\ -1 & 0 & 0 \end{psmallmatrix}$, $\eta$ the flat metric of signature $(r, s)$, and $H$ is a parabolic subgroup such that the Homogeneous space $G/H \simeq (S^r \times S^s) / \mathbb{Z}^2$ is the conformal compactification of what we call with slight abuse Minkowski space, $(\doubleR^n, \eta)$. The structure group of the conformal Cartan bundle $\P(\M, H)$ comprises Lorentz, Weyl and conformal boost symmetries and is described as \cite{Cap-Slovak09, Sharpe}
\begin{align*}
 H = K_0\, K_1=\left\{ \begin{pmatrix} z &  0 & 0  \\  0  & S & 0 \\ 0 & 0 & z^{-1}  \end{pmatrix}\!  \begin{pmatrix} 1 & r & \tfrac{1}{2}rr^t \\ 0 & \1 & r^t \\  0 & 0 & 1\end{pmatrix}  \bigg|\ z\in W:=\doubleR^*_+,\ S\in SO(r, s), 
\ r\in \doubleR^{m*} \right\}.
\end{align*} 
Here ${}^t$ stands for the $\eta$-transposition, namely for the row vector $r$ one has $r^t = (r \eta^{-1})^T$ (the operation ${}^T\,$ being the usual matrix transposition), and $\doubleR^{m*}$ is the dual of $\doubleR^m$.  %The Weyl group of rescaling is noted $\mathsf{W}$. 
Clearly $K_0\simeq CO(r, s)$ via $(S, z) \rarrow zS$, and $K_1$ is the abelian group of conformal boosts. 
The corresponding Lie algebras  $(\LieG, \LieH)$ are graded \cite{Kobayashi}:  $[\LieG_i, \LieG_j] \subseteq \LieG_{i+j}$, $i,j=0,\pm 1$ with the abelian Lie subalgebras $[\LieG_{-1}, \LieG_{-1}] = 0 = [\LieG_1, \LieG_1]$. They decompose respectively as, $\LieG=\LieG_{-1}\oplus\LieG_0\oplus\LieG_1 \simeq \doubleR^m\oplus\co(r,s)\oplus\doubleR^{m*}$ and $\LieH=\LieG_0\oplus\LieG_1 \simeq \co(r,s)\oplus\doubleR^{m*}$. In matrix notation we have,
\begin{align*}
\mathfrak{g} = \left\{ 
\begin{pmatrix} \epsilon &  \iota & 0  \\  \tau  & v & \iota^t \\ 0 & \tau^t & -\epsilon  \end{pmatrix} \bigg|\ (v-\epsilon\1)\in \mathfrak{co}(r, s),\ \tau\in\mathbb{R}^m,\ \iota\in\mathbb{R}^{m*}  
\right\} 
\supset
\LieH = \left\{ \begin{pmatrix} \epsilon &  \iota & 0  \\  0  & v & \iota^t \\ 0 & 0 & -\epsilon  \end{pmatrix} \right\},
\end{align*} 
with the $\eta$-transposition $\tau^t = (\eta\tau)^T$ of the  column vector $\tau$.
The graded structure of the Lie algebras is automatically handled by the matrix commutator.

The Cartan bundle $\P$ is then endowed with the conformal Cartan connection, whose local representative on $\U \subset \M$ is $\varpi  \in \Lambda^1(\U , \LieG)$ with curvature $\b\Omega\in \Lambda^2(\U, \LieG)$. They have the matrix representation
\begin{align*}
\varpi =\begin{pmatrix} a & P & 0 \\ \theta & A & P^t \\0 & \theta^t & -a \end{pmatrix}, \qquad \text{and} \quad  \b\Omega=d\varpi+\varpi^2=\begin{pmatrix} f & C & 0 \\ \Theta & W & C^t \\0 & \Theta^t & -f \end{pmatrix}.
\end{align*}
The soldering part of $\varpi$ is $\theta=e\cdot dx$, i.e with indices $\theta^a:={e^a}_\mu dx^\mu$, with $e={e^a}_\mu$ the so-called vielbein or tetrad field.\footnote{ Notice that from now on we shall make use of  ``$\cdot$'' to denote Greek indices contractions, while Latin indices contraction is naturally understood from matrix multiplication. }
A metric $g$ of signature $(r, s)$ on $\M$ is induced from $\eta$ via  $\varpi$ according to $g(X, Y):=\eta\left( \theta(X), \theta(Y)\right)=\theta(X)^T\eta \theta(Y)$, or in a way more familiar to physicists $g:=e^T\eta e \rarrow g_{\mu\nu}={e_\mu}^a\eta_{ab} {e^b}_\nu$.

It should be noted that the gauge structure $(\P, \varpi)$ on $\M$ is not equivalent to a conformal class of metrics $c$ on it. As we show soon, the action of the local gauge group $\H_\text{\tiny{loc}}$ on $\varpi$ indeed induces a conformal class of metrics via its soldering part, but the degrees of freedom of $\varpi$ compensated for by the gauge symmetry $\H_\text{\tiny{loc}}$ still amounts to more than $\sfrac{n(n+1)}{2} -1=[c]$. 

But there is a way to make the Cartan geometry equivalent to a conformal manifold $(\M, c)$. In a way similar to the singling out of the Levi-Civita connection among all linear connections as the unique torsion-free and metric compatible connection, one can single out the so-called \emph{normal} conformal Cartan connection $\varpi_\text{\tiny{N}}$ as the unique one satisfying the constraints 
$\Theta=0$ (torsion free) and  ${W^a}_{bad}=0$.
 Together with the $\LieG_{-1}$-sector of the Bianchi identity $d\b\Omega+[\varpi, \b\Omega]=0$, these constraints imply $f=0$ (trace free), so that the curvature of the normal Cartan connection reduces to 
 $\b\Omega_\text{\tiny{N}}=\begin{psmallmatrix} 0 & C & 0  \\ 0 & W & C^t \\[0.5mm] 0 & 0 & 0 \end{psmallmatrix}$.
 From the normality condition ${W^a}_{bad}=0$  follows that $P$ 
 has components (in the $\theta$ basis of $\Omega^\bullet (\U)$)
 $P_{ab}=-\frac{1}{(n-2)} \left( R_{ab} - \frac{R}{2(n-1)}\eta_{ab} \right)$,
 where $R$ and $R_{ab}$ are the Ricci scalar and Ricci tensor associated with the $2$-form $R=dA+A^2$. In turn, from this  follows that $W= R + \theta P + P^t\theta^t$ is the well known Weyl $2$-form. By the way, in the gauge $a=0$, $C:=dP +P A=DP$ looks like the familiar Cotton $2$-form.

 The gauge structure $(\P, \varpi_\text{\tiny{N}})$ is  indeed equivalent to a conformal class of metric $c$ on $\M$. However, it would be hasty to then identify $A$ in $\varpi$ or $\varpi_\text{\tiny{N}}$ with the spin connection one is familiar with in physics, and by a way of consequence to take $R:=dA+A^2$ and $P$ as the Riemann and Schouten tensors. Indeed, contrary to expectations $A$ is invariant under Weyl rescaling and neither $R$ nor $P$ have the known Weyl transformations, see \eqref{GT_0} below. It turns out that one recovers the spin connection and the mentioned associated tensors only after  a dressing operation. See the next subsection. 
\medskip

The defining representation space for $G$ is $\doubleR^{n+2}$. It is obviously also a representation for $H$ so that one may form the vector bundle $E=\P \times_H \doubleR^{n+2}$ naturally associated to the Cartan bundle $\P(\M, H)$. Sections of $E$ are $H$-equivariant maps on $\P$ whose local expression is 
\begin{align*}
\vphi: \U \subset \M \rarrow \doubleR^{n+2}, \quad \text{given explicitely as column vectors }\quad \vphi=\begin{pmatrix} \rho\\[1mm] \ell \\ \s  \end{pmatrix}, \quad \text{ with }  \ell=\ell^a \in \doubleR^n, \text{ and } \rho, \s \in \doubleR.
\end{align*}
The covariant derivative induced by the Cartan connection is $D\vphi=d\vphi+\varpi\vphi$. The group metric $\Sigma$ naturally defines an invariant bilinear form on sections of $E$: given $\vphi, \vphi' \in \Gamma(E)$ one has
\begin{align*}
\langle \vphi, \vphi'  \rangle= \vphi^T \Sigma \vphi'=(\rho, \ell^T, \s) \begin{pmatrix}   0 & 0 & -1 \\ 0 & \eta & 0 \\ -1 & 0 & 0  \end{pmatrix} \begin{pmatrix} \rho' \\[1mm] \ell' \\ \s' \end{pmatrix}= -\s\rho' + \ell^T \eta \ell' - \rho\s'.
\end{align*} 
The covariant derivative $D$ naturally preserves this bilinear form since $\varpi$ is $\LieG$-valued: $D\Sigma=d\Sigma + \varpi^T \Sigma + \Sigma \varpi=0$.

$E$ would be called the \emph{standard tractor bundle} in the general terminology of \cite{Cap-Slovak09}. However its sections and covariant derivative thereof do not undergo  the defining Weyl transformation of a tractor as defined in section \ref{Bottom-up construction via prolongation of the Almost Einstein equation}.
Indeed an element $\gamma$ of the local gauge group $\H=\K_0\K_1$ (we now drop the subscript ``loc'') can be factorized as $\gamma=\gamma_0\gamma_1: \U \rarrow H=K_0\, K_1$ with $\gamma_0\in \K_0 := \left\{ \gamma :\U\rarrow K_0  \right\}$ and $\gamma_1 \in \K_1:= \left\{  \gamma :\U\rarrow K_1 \right\}$.
Accordingly, through simple matrix calculations, the gauge transformations of $\vphi$ w.r.t $\K_0$ and $\K_1$ are found to be
\begin{align}
\label{GT_vphi}
\vphi^{\gamma_0}= {\gamma_0}\-\vphi \quad \rarrow \quad \begin{pmatrix} \rho^{\gamma_0} \\[1mm] \ell^{\gamma_0} \\ \s^{\gamma_0}  \end{pmatrix}=\begin{pmatrix} z\- \rho  \\[1mm] S\- \ell \\ z\s  \end{pmatrix}, \qquad \text{ and } \qquad 
\vphi^{\gamma_1}= {\gamma_1}\-\vphi \quad \rarrow \quad \begin{pmatrix} \rho^{\gamma_1} \\[1mm] \ell^{\gamma_1} \\ \s^{\gamma_1}  \end{pmatrix}=\begin{pmatrix} \rho -r\ell + \tfrac{\s}{2}rr^t \\[1mm] \ell^a -r^t\s \\ \s  \end{pmatrix}. 
\end{align}
 The same goes for $D\vphi^{\gamma_0}$ and $D\vphi^{\gamma_1}$. In the first relation put $S=\1$, compare with \eqref{GTtractor} and notice the difference. It is clear that as it stands, $E$ is not the standard tractor bundle $\T$ as previously defined. 
As for the Cartan connection, its  gauge transformation  w.r.t $\K_0$ is
\begin{align}
\label{GT_0}
\varpi^{\gamma_0}& =\gamma_0\-\varpi\gamma_0 + \gamma_0\-d\gamma_0 , \\[2mm]
\begin{pmatrix}
a^{\gamma_0} & P^{\gamma_0} & 0 \\ {\theta^{}}^{\gamma_0} & {A^{}}^{\gamma_0} & (P^{\gamma_0})^t \\  0 & ({\theta^{}}^{\gamma_0})^t & - a^{\gamma_0} 
\end{pmatrix}
&= \begin{pmatrix} a+ z\-dz & z\-P S &  0 \\ S\-\theta z & S\-AS +S\-dS & S\-P^t z\-  \\  0  & z\theta^t S &  -a +zdz\-  \end{pmatrix},  \notag
\end{align}
 and w.r.t $\K_1$ it reads
\begin{align}
\label{GT_1}
\varpi^{\gamma_1}&= \gamma_1\-\varpi\gamma_1 + \gamma_1\-d\gamma_1 ,\\[2mm]
\begin{pmatrix}
a^{\gamma_1} & P^{\gamma_1} & 0 \\ {\theta^{}}^{\gamma_1} & {A^{}}^{\gamma_1} & (P^{\gamma_1})^t \\  0 & ({\theta^{}}^{\gamma_1})^t & -a^{\gamma_1} 
\end{pmatrix}
&= \begin{pmatrix} a-r\theta &\ ar - r\theta r +P -rA +\frac{1}{2}rr^t\theta^t +dr &  0 \\ \theta  & \theta r + A - r^t\theta^t &\ \theta \frac{1}{2}rr^t + Ar^t -r^t\theta^t r^t + P^t + r^t a + dr^t\\  0  & \theta^t  &  \theta^t r^t-a  \end{pmatrix}. \notag
\end{align}
It is clear from the transformation of the soldering part, that the metric induced by $\varpi^{\gamma_0}$ is $z^2g$. Thus the action of $\H$ on $\varpi$ induces a conformal class of metric $c$ on $\M$. 
\medskip

Now that we have the necessary familiarity with the conformal Cartan bundle, its Cartan connection and its naturally associated vector bundle, we are ready to apply the dressing field approach.

\subsubsection{Tractors from gauge symmetry reduction via dressing}  %%%%%%%%%%%%%%%%%%%%%%%%%%%%%%%%%%
\label{Tractors from gauge symmetry reduction via dressing}  %%%%%%%%%%%%%%%%%%%%%%%%%%%%%%%%%%

A detailed  analysis of the dressing field method applied to the conformal Cartan bundle has been given in \cite{FLM2015_II}. For the benefit of the reader we reproduce here relevant pieces of information, but in a more clear and systematic way based on section \ref{Reduction of gauge symmetries: the dressing field method}. In doing so we also correct few misprints in the results of the mentioned paper. 

Given the decomposition $H=K_0K_1$ we first aim at erasing the conformal boost gauge symmetry $\K_1$ through a dressing field. The most natural choice would be 
\begin{align*}
 u_1 :\U\rarrow K_1, \qquad \text{ that is } \quad u_1=\begin{pmatrix}  1 & q & \tfrac{1}{2}qq^t \\ 0 & \1 & q^t \\ 0 & 0 & 1 \end{pmatrix}.
\end{align*}
No such field jumps out, but it turns out that we may find one via the ``gauge-like'' constraint that requires that the trace of the $CO(r, s)$ part of the composite field $\varpi^{u_1}$ vanishes, explicitly: $\Sigma(\varpi^{u_1}):=\Tr(A^{u_1} - a^{u_1})=-na^{u_1}=0$. This gives the equation $a-q\theta=0$, which once solved for $q$ gives $q_a=a_\mu {e^\mu}_a$, or in index free notation $q=a\cdot e\-$.\footnote{Beware of the fact that in this index free notation $a$ is the set of components of the $1$-form $a$. This should be clear from the context.}
 
 Using \eqref{GT_1} one finds that $q^{\gamma_1}=a^{\gamma_1}\cdot (e^{\gamma_1})\-=(a-re)\cdot e\-=q-r$. This is an abelian dressing transformation which as two consequences. First one checks easily that the constraint $\Sigma(\varpi^{u_1})=0$ is $\K_1$-invariant. From our general discussion in section \ref{Local aspects and Physics} it follows that $u_1$ is a dressing field. And indeed, from $q^{\gamma_1}=q-r$ we find that 
 \begin{align*}
    \begin{pmatrix}  1 & q^{\gamma_1} & \tfrac{1}{2}q^{\gamma_1}{q^{\gamma_1}}^t \\ 0 & \1 & {q^{\gamma_1}}^t \\ 0 & 0 & 1 \end{pmatrix}= \begin{pmatrix}  1 & -r & \tfrac{1}{2}rr^t \\ 0 & \1 & -r^t \\ 0 & 0 & 1 \end{pmatrix}\begin{pmatrix}  1 & q & \tfrac{1}{2}qq^t \\ 0 & \1 & q^t \\ 0 & 0 & 1 \end{pmatrix},
    \quad \text{ that is indeed } \quad u_1^{\gamma_1}=\gamma_1\- u_1.
 \end{align*}
With this $\K_1$-dressing field we can apply - the local version of - proposition \ref{P1} and form the $\K_1$-invariant composite fields
\begin{align}
\label{CompFields_1}
\varpi_1:&=\varpi^{u_1}=u_1 \- \varpi u_1 + u_1\-du_1=\begin{pmatrix} 0 & P_1 & 0 \\ \theta & A_1 & P_1^t \\0 & \theta^t & 0 \end{pmatrix}, \qquad \b\Omega_1:=\b\Omega^{u_1}=u_1\-\b\Omega u_1=d\varpi_1+\varpi_1^2=\begin{pmatrix} f_1 & C_1 & 0 \\ \Theta & W_1 & C_1^t \\0 & \Theta^t & -f_1 \end{pmatrix}, \notag \\[1mm]
%& \qquad \qquad \qquad \qquad \qquad \qquad \qquad \qquad  \quad  \text{ in the normal case } \quad \b\Omega_{\text{\tiny N},1}
%=\b\Omega_{\text{\tiny N}}^{u_1}=u_1\-\b\Omega_{\text{\tiny N}} u_1
%=d\varpi_{\text{\tiny N},1}+\varpi_{\text{\tiny N},1}^2=\begin{pmatrix} 0 & C_1 & 0 \\ 0 & W_1 & C_1^t \\0 & 0 & 0 \end{pmatrix}, \notag \\[1mm]
\vphi_1:&=u_1\-\vphi=\begin{pmatrix} \rho_1\\[1mm] \ell_1 \\ \s  \end{pmatrix}, \qquad \text{and}\quad D_1\vphi_1=d\vphi_1+\varpi_1 \vphi_1=\begin{pmatrix} d\rho_1 +P_1 \ell_1\\[1mm] d\ell_1+A_1\ell_1 + \theta \rho_1 + P_1^t \s \\ d\s+\theta^t \ell_1  \end{pmatrix}=\begin{pmatrix} \nabla\rho_1 +P_1 \ell_1\\ \nabla\ell_1 + \theta \rho_1 + P_1^t \s \\ \nabla\s+\theta^t \ell_1  \end{pmatrix}
\end{align}
As is usual ${D_1}^2 \vphi_1 = \b\Omega_1 \vphi_1$. We  notice that $f_1=P_1\w\theta$ is the antisymmetric part of the tensor $P_1$.  

The claim is twofold. First, we assert that $\vphi_1$ is a tractor and that the covariant derivative $D_1$ induced from the dressed Cartan connection $\varpi_1$ is a ``generalized'' tractor connection, both written in an orthonormal basis (latin indices). Second, the composite fields \eqref{CompFields_1} are gauge fields of a non-standard kind - such as described in section \ref{The composite fields as  a new kind of gauge fields} - w.r.t Weyl symmetry, but genuine gauge fields - according to section \ref {The composite fields as genuine gauge fields} - w.r.t Lorentz symmetry. Both assertions are supported by the analysis of the residual gauge transformations of these composite fields. 

\paragraph{Residual gauge symmetries} Being by construction $\K_1$-invariant, the composite fields \eqref{CompFields_1} are expected to display a $\K_0\simeq CO(r, s)$-residual gauge symmetry. This group breaks down as a direct product of the Lorentz and Weyl group, $K_0=\mathsf{SO}(r, s) \times \mathsf W$. We then focus on Weyl symmetry first, then only bring our attention to Lorentz symmetry. 
\medskip

The residual transformation of the composite fields under the Weyl gauge group $\W:=\left\{ Z:\U \rarrow \mathsf W\ |\  Z^{Z'}=Z  \right\}$, $Z=\gamma_{0|S=\1}$, are inherited from that of the dressing field $u_1$. Using \eqref{GT_0} to compute $q^Z=a^Z\cdot (e^Z)\-$, one easily finds that 
\begin{align}
\label{WeylGT_u_1}
&u_1^Z=Z\- u_1 C(z)\quad  \text{ where the map }\quad  C: W \rarrow K_1 \mathsf W \subset H \quad \text{ is defined by }\\[1mm]
&C(z):=k_1(z)Z =\begin{pmatrix} 1 & \Upsilon \cdot e\- & \tfrac{1}{2} \Upsilon ^2 \\ 0 & \1 & (\Upsilon\cdot e\-)^t \\ 0 & 0 & 1 \end{pmatrix} \begin{pmatrix} z & 0 & 0 \\ 0 & \1 & 0 \\ 0 & 0 & z\- \end{pmatrix}= \begin{pmatrix}[1.2] z & \Upsilon \cdot e\- &  \tfrac{z\-}{2} \Upsilon ^2 \\ 0 & \1 & z\-(\Upsilon\cdot e\-)^t \\ 0 & 0 & z\-  \end{pmatrix}=\begin{pmatrix}[1.2] z & \Upsilon_a & \tfrac{z\-}{2} \Upsilon ^2 \\ 0 & \1 & z\- \eta^{ab}\Upsilon_b \\ 0 & 0 & z\-  \end{pmatrix}. \notag
\end{align}
 To make the notation explicit $\Upsilon=\Upsilon_\mu=z\-\d_\mu z$, so $\Upsilon\cdot e\- = \Upsilon_\mu {e^\mu}_a=: \Upsilon_a$, and $\Upsilon^2=\Upsilon_a \eta^{ab} \Upsilon_b$. Elements of type $C(z)$ have been called ``tractor gauge transformations'' in \cite{Gover-Shaukat-Waldron09, Gover-Shaukat-Waldron09-2}, which may be deemed inaccurate since contrary to  elements of a genuine gauge group, they do not form a group: $C(z)C(z')\neq C(zz')$. 
 
 Actually \eqref{WeylGT_u_1} is a local instance of Proposition \ref{P4} with $C$ a  $1$-$\alpha$-cocycle satisfying Proposition \ref{alpha_cocycle}. Indeed one can check that  $C(zz')=C(z'z)=C(z')\ {Z'}\- C(z) Z'$, which is the defining property of an abelian $1$-$\alpha$-cocycle. Furthermore, under a further $\W$-gauge transformation and due to $e^Z=ze$, one has $k_1(z)^{Z'}= {Z'}\- k_1(z)Z'$, which implies $C(z)^{Z'}= {Z'}\- C(z) Z'$. So if $u_1$ undergoes a a further $\W$-gauge transformation we have
 \begin{align*}
 \left(u_1^Z\right)^{Z'}=\left(Z^{Z'}\right)\- u_1^{Z'} C(z)^{Z'}=Z\-\ {Z'}\- u_1C(z')\ {Z'}\- C(z) Z'= (ZZ')\- u_1 C(zz'). 
 \end{align*}
 All this implies that the composite fields \eqref{CompFields_1} are indeed instances of gauge fields of the new kind described in section \ref{The composite fields as  a new kind of gauge fields}. As a consequence, by Proposition \ref{P5} we have that their residual $\W$-gauge transformations  are
\begin{align}
\varpi_1^Z&=C(z)\- \varpi_1 C(z) + C(z)\-dC(z)= \begin{pmatrix}[1.2] 0 & z\-\big( P_1 + \nabla(\Upsilon\!\cdot\! e\-) - (\Upsilon\!\cdot\! e\-) \theta (\Upsilon\!\cdot \!e\-)+ \tfrac{1}{2} \Upsilon^2 \theta^t \big)  &  0  \\  z\theta &  A_1 + \theta(\Upsilon\!\cdot \!e\-) - (\Upsilon\!\cdot \!e\-)^t\theta^t & * \\0 & z\theta^t & 0 \end{pmatrix},  \label{varpi_1_Z}    \\[1mm]                     
\b\Omega_1^Z&=C(z)\- \b\Omega_1 C(z)=\begin{pmatrix}[1.2]  f_1 \!-\! (\Upsilon\!\cdot\! e\-)\Theta  &   z\-\big( C_1 \!-\! (\Upsilon\!\cdot\! e\-) (W_1 \!-\! f_1) \!-\! (\Upsilon\!\cdot\! e\-)\Theta(\Upsilon\!\cdot\! e\-) \!+\! \tfrac{1}{2} \Upsilon^2\Theta^t  \big) & 0  \\  z\Theta & W_1 \!+\!\Theta (\Upsilon\!\cdot\! e\-) \!-\! (\Upsilon\!\cdot\! e\-)^t\Theta^t & * \\ 0 & z\Theta^t & * \end{pmatrix},           \label{Omega_1_Z}  \\[1mm]                    
\vphi_1^Z&=C(z)\- \vphi_1 = \begin{pmatrix} z\-\left( \rho_1 -(\Upsilon\cdot e\-)\ell_1 + \tfrac{\s}{2} \Upsilon^2 \right) \\[1mm] \ell_1 - (\Upsilon\cdot e\-)^t \s \\ z\s \end{pmatrix}, \qquad \text {and }\quad (D_1\vphi_1)^Z=D_1^Z\vphi_1^Z= C(z)\- D_1\vphi_1.        \label{Tractor_TConnection_1}
\end{align}

Several elements should be highlighted. First, in \eqref{varpi_1_Z} notice that now the Lorentz part $A_1$ of the composite field $\varpi_1$ indeed exhibits the known Weyl transformation for the spin connection, and  $P_1$ transforms as the genuine Schouten tensor in an orthonormal basis. But actually they reduce to these only when one restricts to the dressing of the normal Cartan connection, in which case 
\begin{align*}
\b\Omega_{\text{\tiny N},1}=d\varpi_{\text{\tiny N},1}+\varpi_{\text{\tiny N},1}^2=\begin{pmatrix} 0 & C_1 & 0 \\ 0 & W_1 & C_1^t \\0 & 0 & 0 \end{pmatrix}, \quad 
\text{ and one has } \quad \b\Omega_{\text{\tiny N},1}^Z=C(z)\- \b\Omega_{\text{\tiny N},1}C(z)=\begin{pmatrix} 0  &   z\-\left(C_1 - (\Upsilon\cdot e\-) W_1\right)  & 0  \\ 0 & W_1 & * \\ 0 & 0 & * \end{pmatrix}.
\end{align*}
We then see that $C_1=\nabla P_1$ transforms as - and therefore is - the Cotton tensor, while $W_1$ is the invariant Weyl tensor. 

Then, most importantly for our concern, the first relation in \eqref{Tractor_TConnection_1} is the vielbein version of \eqref{GTtractor} so that the dressed section $\vphi_1$ is indeed a tractor field, section of a $C$-vector bundle that we denote $E_1\!=E^{u_1}\!=\P \times_{C( W)}\doubleR^{n+2}$. 
The invariant bilinear form on $E$ defined by the group metric $\Sigma$  is also defined on $E_1$: $\left\langle \vphi_1, \vphi'_1  \right\rangle= \vphi_1^T \Sigma \vphi'_1$. Indeed since $C(z) \in K_1\mathsf W \subset H$, we have $\left\langle \vphi_1^Z, {\vphi'_1}^Z \right\rangle = \left\langle C(z)\-\vphi_1, C(z)\- \vphi'_1 \right\rangle = \vphi_1^T (C(z)\-)^T \Sigma C(z)\- \vphi_1'=\vphi_1^T\Sigma \vphi'_1 = \left\langle \vphi_1, \vphi_1' \right\rangle$.

What's more, $D_1:=d + \varpi_1$ in \eqref{CompFields_1} is a vielbein formulated generalization of the tractor connection \eqref{GTtractor-connection}. But then the term ``connection'', while not inaccurate, could hide the fact that $\varpi_1$ is no more a geometric connection w.r.t Weyl symmetry. So we shall prefer to call $D_1$ a  generalized tractor \emph{covariant derivative}. 
The standard tractor covariant derivative \eqref{AE3} is recovered by restriction to the dressing of the normal Cartan connection: $D_{\text{\tiny N},1}\vphi_1=d\vphi_1 + \varpi_{\text{\tiny N},1} \vphi_1$. Then $D_{\text{\tiny N},1}^2\vphi_1=\b\Omega_{\text{\tiny N},1} \vphi_1$ recovers \eqref{tractor_curvature}.  We note that $\varpi_1$ being $\LieG$-valued, $D_1\Sigma=0$  and $D_1$ preserves the bilinear form $\langle\ \!, \rangle$. 

In short, by erasing via dressing the $\K_1$-gauge symmetry from the conformal Cartan gauge structure $\left( (\P, \varpi), E\right)$ over $\M$, we have already recovered top-down the tractor bundle and tractor covariant derivative in the orthonormal frame formulation, $(E_1, D_{\text{\tiny N},1})$, as a special case of the $C$-vector bundle endowed with a covariant derivative $(E_1, D_1)$. In this scheme they appear as instances of gauge fields of the new kind described in section \ref{The composite fields as a new kind of gauge fields}. 
\medskip

But our analysis of the residual gauge symmetry is not complete yet, since we need to address the Lorentz residual symmetry. Again, the residual transformation of the composite fields \eqref{CompFields_1} under the Lorentz gauge group $\SO:=\left\{ \mathsf S:\U \rarrow \mathsf{SO}(r, s)\ | {\mathsf S}^{\mathsf S'}={\mathsf S'}\- \mathsf S \mathsf S'  \right\}$, $\mathsf{S}=\gamma_{0|z=1}$, is inherited from that of the dressing field $u_1$. Using \eqref{GT_0} to compute $q^{ \mathsf S}=a^{ \mathsf S}\cdot (e^{ \mathsf S})\-=qS$, one easily finds that $u_1^\mathsf S=\mathsf S\- u_1 \mathsf S$. This is a local instance of Proposition \ref{Prop2}, which then allows to conclude that the composites fields \eqref{CompFields_1} are \emph{genuine} gauge fields w.r.t Lorentz gauge symmetry. Hence, from Corollary \ref{Cor3} follows that their residual $\SO$-gauge transformations are
\begin{align}
\label{CompFields_1_S}
&\varpi_1^\mathsf S={\mathsf S}\-  \varpi_1 \mathsf S  + { \mathsf S}\- d \mathsf S= \begin{pmatrix} 0 & P_1 S &  0 \\ S\-\theta  & S\-A_1S +S\-dS & S\-P^t   \\  0  & \theta^t S &  0 \end{pmatrix}, \qquad \b\Omega_1^\mathsf S={ \mathsf S}\- \b\Omega_1  \mathsf S= \begin{pmatrix} f_1 & C_1 S &  0 \\ S\-\Theta  & S\-WS & S\-C_1^t   \\  0  & \Theta^t S &  -f_1  \end{pmatrix},   \notag \\[1mm]
&\vphi_1^\mathsf S= {\mathsf S}\- \vphi_1= \begin{pmatrix}  \rho_1 \\[1mm] S\- \ell_1  \\ \s \end{pmatrix},\qquad \text{ and } \qquad (D_1\vphi_1)^\mathsf S=D_ 1^\mathsf S \vphi_1^\mathsf S = {\mathsf S}\- D_1\vphi_1.
\end{align}
This is to be fully expected since we concluded that e.g $A_1$, $R_1$, $P_1$ are the usual spin connection, Riemann and Schouten tensors in an orthonormal frame. 
We notice in particular that $\vphi_1$ behaves as a standard section of a $\mathsf{SO}$-associated bundle. We should then refine our notation for the bundle $E_1$ and denote it $E_1=\P\times_{ C(\mathsf W)\ \! \mathsf{SO}} \doubleR^{n+2}$. 

The actions of $\SO$ and $\W$ on the composite fields $\chi_1$ are compatible and commutative. Indeed, we have first that $\sS^\W=\sS$ so that on the one hand: $\left(\chi_1^\SO\right)^\W=\left(\chi_1^\mathsf S\right)^\W=\left( \chi_1^\W \right)^{\mathsf S^\W}=\left( \chi_1^{C(z)}\right)^\mathsf S=\chi_1^{C(z)\mathsf S}$. But then we also have $C(z)^\SO={\mathsf S}\- C(z)\mathsf S$, so on the other hand we get: $\left( \chi_1^\W \right)^\SO=\left( \chi_1^{C(z)}\right)^\SO=\left( \chi_1^\SO\right)^{C(z)^\SO}=\left(\chi_1^\mathsf S \right)^{{\mathsf S}\- C(z)\mathsf S}=\chi_1^{C(z)\mathsf S}$.
\smallskip

So, as the considerations at the end of section \ref{The composite fields as genuine gauge fields} make clear, the fact that the composite fields \eqref{CompFields_1} are genuine $\SO$-gauge fields satisfying \eqref{CompFields_1_S} entitles us to ask if a further dressing operation aiming at erasing Lorentz symmetry is possible. The answer is yes. 

\paragraph{Further dressing operation: erasing Lorentz symmetry and residual Weyl symmetry} In order not to spoil the $\K_1$-invariance obtained from the first dressing $u_1$, a dressing $\u$ for the Lorentz gauge symmetry  should be $\K_1$-invariant. So it would be defined by the relations:  $\u^\mathsf S={\mathsf s}\-\u$ and $\u^{\gamma_1}=\u$. In a more usual language, it is about switching from orthonormal frames to 
holonomic frames, so the vielbein $e={e^a}_\mu$ in the soldering form $\theta=e\cdot dx$ is a natural candidate. And as a matter of fact we see from \eqref{GT_0} and \eqref{GT_1} that $e^S=S\- e$ and $e^{\gamma_1}=e$. Then we can form the $\SO$-dressing field
\begin{align*}
\u=\begin{pmatrix} 1 & 0 & 0 \\ 0 & e & 0 \\ 0 & 0 & 1 \end{pmatrix}\quad  \text{ which satisfies } \quad \begin{pmatrix} 1 & 0 & 0 \\ 0 & e^S & 0 \\ 0 & 0 & 1 \end{pmatrix}=\begin{pmatrix} 1 & 0 & 0 \\ 0 & S\- & 0 \\ 0 & 0 & 1 \end{pmatrix}\begin{pmatrix} 1 & 0 & 0 \\ 0 & e & 0 \\ 0 & 0 & 1 \end{pmatrix}\quad\text{ that is } \quad \u^\mathsf S= {\mathsf S}\- \u, \text{ and } \u^{\gamma_1}=\u.
\end{align*}
With this $\SO$-dressing field we can apply again - the local version of - proposition \ref{P1} on the composite fields $\chi_1$ \eqref{CompFields_1} and form the $\K_1$- and $\SO$-invariant composite fields $\chi_\l:=\chi_1^{\u}$:
\begin{align}
\label{CompFields_L}
\varpi_\l:&=\varpi_1^{\u} = \u\- \varpi_1 \u + \u\-d\u
=\begin{pmatrix} 0 & \sP & 0 \\ dx & \Gamma & g\-\!\cdot\!\sP \\0 & g\!\cdot\! dx & 0 \end{pmatrix}
=\begin{pmatrix} 0 & \sP_{\mu\nu} & 0 \\ \delta^\rho_\mu & {\Gamma^\rho}_{\mu\nu} & g^{\rho\alpha}\sP_{\mu\alpha} \\0 & g_{\mu\nu} & 0 \end{pmatrix} dx^\mu, \\[2mm]
 \b\Omega_\l:&=\b\Omega_1^{\u}=\u\-\b\Omega_1 \u=d\varpi_\l+\varpi_\l^2
 =\begin{pmatrix} f_\l & \sC & 0 \\ \sT & \sW & g\-\!\cdot\!\sC \\0 & g\!\cdot\!\sT & -f_\l \end{pmatrix}
 =\frac{1}{2}\begin{pmatrix}[1.2] \sP_{[\mu\lambda]} & \sC_{\nu, \mu\lambda,} & 0 \\ {\sT^\rho}_{\mu\lambda} & {\sW^\rho}_{\nu, \mu\lambda} & g^{\rho\alpha}\sC_{\alpha, \mu\lambda} \\0 & g_{\nu\alpha}{\sT^{\alpha}}_{\mu\lambda} & -\sP_{[\mu\lambda]} \end{pmatrix} dx^\mu \w dx^\lambda, \notag \\[2mm]
\vphi_\l:&=\u\-\vphi_1=\begin{pmatrix} \rho_\l \\[1mm] \ell_\l \\ \s  \end{pmatrix}, 
\quad \text{and}\quad D_\l\vphi_\l=d\vphi_\l+\varpi_\l \vphi_\l=\begin{pmatrix}[1.1] d\rho_\l +\sP \ell_\l \\ d\ell_\l+ \Gamma \ell_\l + \rho_\l dx + g\-\!\cdot\!\sP \s \\ d\s+g\!\cdot\! dx \ell_\l  \end{pmatrix}
=\begin{pmatrix}[1.1] \nabla\rho_\l +\sP \ell_\l \\ \nabla\ell_\l +  \rho_\l dx + g\-\!\cdot\!\sP \s \\ \nabla\s+g\!\cdot\! dx \ell_\l   \end{pmatrix}. \notag
\end{align}
As is usual ${D_\l}^2 \vphi_\l = \b\Omega_\l \vphi_\l$. 
The claim is now that $\vphi_\l$ is a tractor and that the covariant derivative $D_\l$ induced from the dressed Cartan connection $\varpi_\l$ is a generalized tractor covariant derivative, the usual one being induced by $\varpi_{\n, \l}$. Furthermore the composite fields \eqref{CompFields_L} are  also gauge field of a non-standard kind w.r.t Weyl residual symmetry. This we now substantiate.
\medskip

Notice that as discussed at the end of section \ref{Residual gauge symmetry} the final composite fields $\chi_\l$ could have been obtained from the gauge fields $\chi$ in one step with the $\SO\K_1$-dressing field $u:=u_1\u$. We have indeed on the one hand $u^{\gamma_1}= u_1^{\gamma_1} \u^{\gamma_1}=\gamma_1\-u_1  \u=\gamma_1\- u$, and on the other hand $u^\mathsf S = u_1^\mathsf S\, \u^\mathsf S={\mathsf S}\-u_1 \mathsf S {\mathsf S}\-\u={\mathsf S}\-u$. 

Then the $\W$-residual gauge transformations of the composite fields $\chi_\l$ depend on that of the dressing $u=u_1\u$. We already have \eqref{WeylGT_u_1} for $u_1$, and since $e^Z=ze$ we define  $\u^Z= \t Z \u=\u\t Z$ with $\t Z=\begin{psmallmatrix} 1 & 0 & 0 \\ 0 & z & 0 \\ 0 & 0 & 1  \end{psmallmatrix}$.
From this it is easily found that
\begin{align}
\label{WeylGT_u_L}
&u^Z=Z\- u \b C(z) \quad \text{ where the map } \quad \b C:W \rarrow GL_{n+2} \supset H, \quad \text{ is defined by }\\[1mm]
&\b C(z)= \b k_1(z)  \b Z = \begin{pmatrix}[1.2] 1 & \Upsilon & \tfrac{1}{2} \Upsilon^2 \\ 0 & \1 & g\-\!\cdot\!\Upsilon \\ 0 & 0 & 1 \end{pmatrix} \begin{pmatrix}[1.2] z & 0 & 0 \\ 0 & z & 0 \\ 0 & 0 & z\- \end{pmatrix} =\begin{pmatrix}[1.2] z & z\Upsilon & \tfrac{z\-}{2} \Upsilon^2 \\ 0 & z \1 & z\- g\-\!\cdot\!\Upsilon \\ 0 & 0 & z\- \end{pmatrix}
= \begin{pmatrix}[1.2] z & z\Upsilon_\mu & \tfrac{z\-}{2} \Upsilon^2 \\ 0 & z \1 & z\- g^{\mu\alpha}\Upsilon_\alpha \\ 0 & 0 & z\- \end{pmatrix}.\notag
\end{align}
with $\b k_1(z):={\u}\- k_1(z)\u $ and $\b Z:=Z\t Z$. Also $\Upsilon^2=\Upsilon_\alpha g^{\alpha\beta} \Upsilon_\beta$. 
This is again a local instance of Proposition \ref{P4} with $\b C$ a  $1$-$\alpha$-cocycle satisfying Proposition \ref{alpha_cocycle}. Indeed one can check that  $\b C(zz')=\b C(z'z)=\b C(z')\ {\b Z}^{\prime-1}\b C(z) {\b Z}'$, which is the defining property of an abelian $1$-$\alpha$-cocycle. Furthermore, under a further $\W$-gauge transformation and  one has $\b k_1(z)^{Z'}= {\b Z}^{\prime -1}\b k_1(z) {\b Z}'$, which implies $\b C(z)^{Z'}= {\b Z}^{\prime-1} \b C(z) {\b Z}'$. So if $u$ undergoes a a further $\W$-gauge transformation we have
 \begin{align*}
 \left(u^Z\right)^{Z'}=\left(Z^{Z'}\right)\- u^{Z'} \b C(z)^{Z'}= Z\-\ {Z'}\- u \b C(z')\ {\b Z}^{\prime-1} \b C(z) {\b Z}'= (ZZ')\- u \b C(zz'). 
 \end{align*}
 All this implies that the composite fields \eqref{CompFields_L} are indeed instances of gauge fields of the new kind described in section \ref{The composite fields as  a new kind of gauge fields}. As a consequence, by Proposition \ref{P5} we have that their residual $\W$-gauge transformations  are
\begin{align}
\varpi_\l^Z&=\b C(z)\- \varpi_\l \b C(z) + \b C(z)\-d\b C(z)=
 \begin{pmatrix}[1.2] 0 & \sP + \nabla\Upsilon - \Upsilon \!\cdot\! dx \Upsilon + \tfrac{1}{2} \Upsilon^2 g\!\cdot\!dx & 0 \\ 
dx & \Gamma + z\- dz \delta + \Upsilon dx - g\- \!\!\cdot\! \Upsilon\  g\!\cdot\!dx  \hspace{1mm} & z^{-2} g\- \!\cdot\! \left(*\right) \\ 
0 &\quad z^2 g\!\cdot\!dx & 0 
\end{pmatrix},  \label{varpi_L_Z}    \\[1mm]                     
\b\Omega_\l^Z&=\b C(z)\- \b\Omega_\l \b C(z)= 
\begin{pmatrix}[1.2] f_\l - \Upsilon \!\cdot\! \sT &\ \ \sC - \Upsilon \!\cdot\! \sW + (f_\l - \Upsilon \!\cdot\! \sT)\Upsilon + \tfrac{1}{2} \Upsilon^2  g\!\cdot\!\sT & 0\\
\sT & \sW + \sT\Upsilon - g\- \!\cdot\! \Upsilon\ g\!\cdot\!\sT &\ z^{-2} g\-\!\cdot\!\left(*\right) \\
0 & z^2g\!\cdot\!\sT & *\end{pmatrix} ,           \label{Omega_L_Z}  \\[1mm]                    
\vphi_\l^Z&=\b C(z)\- \vphi_\l = \begin{pmatrix} z\-\left( \rho_\l -\Upsilon\cdot\ell_\l + \tfrac{\s}{2} \Upsilon^2 \right) \\[1mm] z\-\left( \ell_\l - g\-\!\cdot\!\Upsilon \s \right) \\ z\s \end{pmatrix},\qquad \text {and }\quad (D_\l\vphi_\l)^Z=D_\l^Z\vphi_\l^Z= \b C(z)\- D_\l\vphi_\l.        \label{Tractor_TConnection_Z}
\end{align}

For completeness we make some comments paralleling those of the previous situation. First, if in \eqref{varpi_L_Z} $\Gamma$  exhibits the known Weyl transformation for the Levi-Civita connection and  $\sP$ transforms as the genuine Schouten tensor, they actually reduces to these standard objects only if one one restricts to the dressing of the normal Cartan connection, in which case
\begin{align*}
\b\Omega_{\n, \l}=d\varpi_{\n, \l}+\varpi_{\n, \l}^2=\begin{pmatrix} 0 & \sC & 0 \\ 0 & \sW & g\-\!\cdot\!\sC \\0 & 0 & 0 \end{pmatrix}, \quad 
\text{ and one has } \quad \b\Omega_{\n, \l}^Z=\b C(z)\- \b\Omega_{\n, \l}\b C(z)=\begin{pmatrix} 0  &   \sC - \Upsilon\cdot \sW & 0  \\ 0 & \sW & z^{-2} g\- \!\cdot\! \left(*\right) \\ 0 & 0 & 0 \end{pmatrix}.
\end{align*}
We then see that $\sC=\nabla \sP$ is the Cotton tensor while $\sW$ is the invariant Weyl tensor.

Then, most importantly for our concern, the first relation in \eqref{Tractor_TConnection_1} is the $g$-transposed of \eqref{GTtractor}, so that the dressed section $\vphi_\l$ is indeed a tractor, section of a $\b C$-vector bundle that we denote $E_\l\!=E^{u_\l}=\P \times_{\b C( W)} \doubleR^{n+2}$. 

Furthermore, out of the group metric $\Sigma$ one can form the non-invariant metric  $G:=\u^T \Sigma \u= \begin{psmallmatrix} 0 & 0 & -1 \\ 0 & g & 0 \\ -1 & 0 & 0 \end{psmallmatrix}$ which induces an invariant bilinear form on $E_\l$ defined by 
\begin{align*}
\left\langle \vphi_\l, \vphi_\l' \right\rangle_{\text{\tiny{G}}}:= \vphi_\l^T G \vphi_\l' = -\s \rho_\l' + \ell_\l\!\cdot\! g\!\cdot\! \ell_\l' - \rho_\l \s'.
\end{align*}
This is indeed the bilinear form on tractor \eqref{metric_tractor}. Its invariance is most directly proven in our framework. Indeed $G^Z={\t Z}^2G$, so that 
\begin{align*}
{\b C(z)}^{-1T}\ G^Z\  {\b C(z)}\-&= \u^T k_1(z)^{-1T}\u^{-1T}Z\- {\t Z}\- \left( {\t Z}^2 \u^T \Sigma \u \right)\ {\t Z}\- Z\- \u\-k_1(z)\- \u,\\
								&=\u^T \left(k_1(z)Z\right)^{-1T}\  \Sigma\  \left(k_1(z)Z\right)\- \u=\u^T\Sigma \u= G.
\end{align*}
This simple calculation allows to quickly conclude: 
\begin{align*}
\left\langle \vphi_\l^Z, {\vphi_\l'}^Z \right\rangle_{\text{\tiny{$G^Z$}}}:=\left\langle {\b C(z)}\-\vphi_\l, {\b C(z)}\-\vphi_\l' \right\rangle_{\text{\tiny{$G^Z$}}}=\vphi_\l^T{\b C(z)}^{-1T}\ G^Z\  {\b C(z)}\- \vphi_\l' = \vphi_\l^T G \vphi_\l'= \left\langle \vphi_\l, \vphi_\l' \right\rangle_{\text{\tiny{G}}}.
\end{align*}

What's more, the fourth equation in  \eqref{Tractor_TConnection_Z} reproduces \eqref{GTtractor-connection}. So
  $D_\l:=d + \varpi_\l$ in \eqref{CompFields_L} is a generalization of the tractor connection \eqref{AE3}. But   $\varpi_\l$ being no more a geometric connection w.r.t Weyl symmetry due to \eqref{WeylGT_u_L}, instance of Proposition \ref{P4}, we refer to $D_\l$ as a generalized tractor \emph{covariant derivative}. 
The standard tractor covariant derivative is recovered by restriction to the dressing of the normal Cartan connection: $D_{\n, \l}\vphi_\l=d\vphi_\l + \varpi_{\n, \l} \vphi_\l$. Then $D_{\n, \l}^2\vphi_\l\b\Omega_{\n, \l} \vphi_\l$ recovers \eqref{tractor_curvature}. Note that $D_\l G=dG -\varpi_\l^TG- G \varpi_\l=-\u^T\left(  \varpi_1^T \Sigma + \Sigma \varpi_1\right) \u=0$, so $D_\l$ preserve the bilinear form $\left\langle\ \! , \right\rangle_{\text{\tiny{G}}}$. 
\bigskip

To conclude this section, let us sum-up what has been done. By erasing via dressing the $\K_1$ and $\SO$ gauge symmetries from the conformal Cartan gauge structure $\left( (\P, \varpi), E\right)$ over $\M$, we have recovered top-down the tractor bundle and tractor covariant derivative, $(\T, \nabla^\T)=(E_\l, D_{\n, \l})$, as a special case of the $\b C$-vector bundle with covariant derivative $(E_\l, D_\l)$. In the next section we provides the BRST treatment of the two symmetry reductions. 
 
\subsection{BRST treatment}%%%%%%%%%%%%%%%%%%%%%%%%%%%%%%%
\label{BRST treatment}%%%%%%%%%%%%%%%%%%%%%%%%%%%%%%%

The gauge group of the initial Cartan geometry is $\H$, so the associated ghost $v\in$ Lie$\H$ splits along the grading of $\LieH$,
\begin{align}
v=v_0+v_\iota=v_\epsilon+v_\ss+v_\iota=\begin{psmallmatrix} \epsilon & 0 & 0 \\ 0 & 0 & 0 \\ 0 & 0 & -\epsilon \end{psmallmatrix} + \begin{psmallmatrix} 0 & 0 & 0 \\ 0 & s & 0 \\ 0 & 0 & 0 \end{psmallmatrix}+\begin{psmallmatrix} 0 & \iota & 0 \\ 0 & 0 & \iota^t \\ 0 & 0 & 0 \end{psmallmatrix}.
\end{align} 
The BRST operator splits accordingly as $s=s_0+s_1= s_\ww+s_\l+s_1$. Then the BRST algebra for the gauge fields $\chi=\{ \varpi, \b\Omega, \vphi \}$ is
\begin{align}
s\varpi=-Dv=-dv -[\varpi, v], \qquad s\b\Omega=[\b\Omega, v], \qquad s\vphi=-v\vphi \qquad \text{and} \qquad sv=-v^2,
\end{align}
with the first and third relations in particular reproducing the infinitesimal versions of \eqref{GT_0}, \eqref{GT_1} and \eqref{GT_vphi}. Denote this initial algebra $\mathsf{BRST}$.
As the general discussion of section \ref{Application to the BRST framework} showed, the dressing approach modifies it. 

\paragraph{First dressing} We know from this  general discussion that the composite fields $\chi_1=\{ \varpi_1,\b\Omega_1, \vphi_1 \}$ satisfy a modified BRST algebra formally similar but with composite ghost $v_1:=u_1\- v u_1 + u_1\-su_1$. The inhomogeneous term can be found explicitly from the finite gauge transformations of $u_1$. Writing the linearizations $\gamma_1\simeq\1 + v_\iota$ and $\sS \simeq\1 + v_\ss$, the BRST actions of $\K_1$ and $\SO$ are easily found to be
\begin{align*}
u_1^{\gamma_1}=\gamma_1\-u_1 \quad \rarrow \quad s_1u_1=-v_\iota u_1  \qquad \text{and} \qquad u_1^\sS=\sS\- u_1 \sS \quad \rarrow \quad s_\l u_1=[u_1, v_\ss].
\end{align*}
This shows that the Lorentz sector gives an instance of the general result \eqref{NewGhost2}. Now, defining the linearizations $Z\simeq\1 + v_\epsilon$ and $k_1(z)\simeq\1 + k_1(\epsilon)$, so that $C(z)=k_1(z)Z\simeq\1+c(\epsilon)=\1 + k_1(\epsilon)+v_\epsilon$, the BRST action of $\W$ is
\begin{align*}
u_1^Z=Z\-u_1 C(z) \quad \rarrow \quad s_\ww u_1=-v_\epsilon u_1 + u_1 c(\epsilon).
\end{align*}
This shows that the Weyl sector gives an instance of the general result \eqref{NewGhost3}. We then get the composite ghost
\begin{align}
v_1:&=u_1\- (v_\epsilon+v_\ss+v_\iota)  u_1 + u_1\- (s_\ww+s_\l +s_1) u_1,  \notag \\
	&=u_1\- (v_\epsilon+v_\ss+v_\iota)  u_1 + u_1\-  \big( -v_\epsilon u_1 + u_1 c(\epsilon) \ + \ [u_1, v_\ss] \ - \ v_\iota u_1 \big), \notag\\
	&=c(\epsilon) + v_\ss=\begin{pmatrix} \epsilon & \d\epsilon\!\cdot\!e\- & 0 \\ 0 & s & (\d\epsilon\!\cdot\!e\-)^t \\ 0 & 0 & -\epsilon \end{pmatrix}.
\end{align} 
We see that the ghost of conformal boosts $\iota$ has disappeared from this new ghost. This means that $s_1 \chi_1=0$, which reflects the $\K_1$-gauge invariance of the composite fields $\chi_1$. The composite ghost $v_1$ only depends on $v_\ss$ and $\epsilon$, it encodes the residual $\K_0$-gauge symmetry.
The BRST algebra for the composite fields $\chi_1$ is then explicitly
\begin{align*}
s\varpi_1&=-Dv_1=-dv_1 -[\varpi_1, v_1]%=\begin{pmatrix}[1.2] 0 & - d(\d\epsilon\!\cdot\! e\-) - (\d\epsilon\!\cdot\! e\-) A_1 - \epsilon P_1 & 0 \\ \epsilon \theta & - \theta (\d\epsilon\!\cdot\! e\-) - (\d\epsilon\!\cdot\! e\-)^t\theta^t & * \\ 0 & \epsilon \theta^t & 0 \end{pmatrix}
=\begin{pmatrix}[1.2] 0 & - \nabla(\d\epsilon\!\cdot\! e\-) -  P_1(s-\epsilon) & 0 \\
-(s-\epsilon) \theta & -\nabla s- \theta (\d\epsilon\!\cdot\! e\-) - (\d\epsilon\!\cdot\! e\-)^t\theta^t & * \\
0 &\theta^t(s+\epsilon) & 0 \end{pmatrix},
\end{align*}
where $\nabla(\d\epsilon\!\cdot\!e\-)=d(\d\epsilon\!\cdot\! e\-) + (\d\epsilon\!\cdot\! e\-) A_1$, and $\nabla s=ds + [A, s]$, 
\begin{align*}
 &s\b\Omega_1=[\b\Omega_1, v_1]=\begin{pmatrix}[1.2] -(\d\epsilon\!\cdot\!e\-)\Theta &  C_1(s-\epsilon) - (\d\epsilon\!\cdot\! e\-) \left( W_1 -f_1 \right) & 0 \\
-(s-\epsilon) \Theta & [W_1, s] + \Theta(\d\epsilon\!\cdot\!e\-) -  (\d\epsilon\!\cdot\!e\-)^t \Theta^t & * \\ 0 &  \Theta^t(s+\epsilon)  & * \end{pmatrix}, \\[1mm]
&\text{in the normal case } s\b\Omega_{\n, 1}=\begin{pmatrix}[1.2] 0 &  C_1(s\!-\!\epsilon)\!-\!(\d\epsilon\!\cdot\! e\-)  W_1  & 0 \\ 0 & [W_1, s]  & * \\ 0 &  0 & 0 \end{pmatrix},\\[1mm]
&s\vphi_1=-v_1\vphi_1 =\begin{pmatrix} -\epsilon \rho_1- (\d\epsilon\!\cdot\!e\-)\ell_1 \\[1mm]  -s\ell_1 - (\d\epsilon\!\cdot\!e\-)^t\s \\ \epsilon \s \end{pmatrix}=\begin{pmatrix}[1.2] -\epsilon \rho_1- \d_a\epsilon\ \ell_1^a \\[1mm] -{s^a}_b \ell_1^b - \eta^{ab} \d_b\epsilon\ \s \\ \epsilon \s \end{pmatrix}, %\\&
\quad \text{and} \qquad sv_1=-{v_1}^2=\begin{pmatrix} 0 & (\d\epsilon\!\cdot\!e\-) s & 0 \\ 0 & s^2 & * \\ 0 & 0 & 0  \end{pmatrix}.
\end{align*}
Denote this algebra $\mathsf{BRST}_{\ww, \l}$. Since $v_1=c(\epsilon)+v_\ss$, it splits naturally as a Lorentz and a Weyl subalgebras, $s=s_\ww+s_\l$. The Lorentz sector $(s_\l, v_\ss)$, obtained by setting $\epsilon=0$, shows the composites fields $\chi_1$ to be genuine Lorentz gauge fields (compare with \eqref{CompFields_1_S}). While the Weyl sector $(s_\ww, c(\epsilon))$, obtained by setting $s=0$, shows $\chi_1$ to be non-standard Weyl gauge fields (compare with  \eqref{varpi_1_Z}-\eqref{Tractor_TConnection_1}).

\paragraph{Second dressing} Now in the final step, we further modifies $\mathsf{BRST}_{\ww, \l}$ through the dressing $\u$. The composite fields $\chi_\l$ then satisfy a BRST algebra with composite ghost $v_\ww=\u\- v_1 \u + \u\-s\u$. 
Again the inhomogeneous term can be found explicitly from the finite gauge transformations of $u_\l$. Writing the linearization $\t Z\simeq\1 + \t v_\epsilon$ we have
\begin{align*}
\u^\sS=\sS\- \u \quad \rarrow \quad s_\l\u=-v_\ss \u, \qquad \u^Z=\t Z \u \quad \rarrow  \quad s_\ww \u= \t v_\epsilon \u \qquad \text{ and } \qquad \u^{\gamma_1}=\u \quad \rarrow \quad s_\iota \u=0.
\end{align*}
So we get the final composite ghost
\begin{align}
\label{final_ghost1} 
v_\ww&=\u\- (c(\epsilon) + v_\ss) \u + \u\-(s_\ww + s_\l + s_\iota) \u, \notag\\
	&=\u\- (c(\epsilon) + v_\ss) \u + \u\- \left( \t v_\epsilon \u - v_\ss \u \right), \notag\\
	&= \u\- c(\epsilon) \u + \t v_\epsilon= \begin{pmatrix}  \epsilon &  \d\epsilon & 0 \\ 0 & \epsilon \1 & g\-\!\!\cdot\!\d\epsilon \\ 0 & 0 & -\epsilon \end{pmatrix}.
\end{align}

This final composite ghost is also obtained in one step from dressing $v$ with the dressing $u:=\u u_1$ which satisfy
\begin{align*}
u^{\gamma_1}=\gamma_1 u \quad \rarrow \quad s_1 u= -v_\iota u \qquad \text{and} \qquad \qquad u^\sS=\sS\- u \quad \rarrow \quad s_\l u=-v_\ss u.
\end{align*}
Also, defining the linearization $\b Z:=Z \t Z \simeq \1 + \b v_\epsilon=\1 + v_\epsilon+\t v_\epsilon$ and $\b k_1(z):=\u\- k_1(z)\u \simeq \1 + \b k_1(\epsilon)=\1 + \u\- k_1(\epsilon) \u$, we get  $\b C(z)=\b k_1(z)\b Z\simeq\1+\b c(\epsilon)=\1 + \b k_1(\epsilon)+\b v_\epsilon$. Then we have
\begin{align*}
u^Z=Z\- u \b C(z) \qquad \rarrow \qquad s_\ww u = -v_\epsilon u + u \b c(\epsilon)  .
\end{align*}
The final composite ghost if then clearly an instance of \eqref{NewGhost3}:
\begin{align}
\label{final_ghost2}
v_\ww&=u\- v u + u\- su, \notag \\
   &=u\-(v_\epsilon + v_\ss+v_\iota) u + u\-(s_\ww + s_\l + s_1) u, \notag \\
   &=u\-(v_\epsilon + v_\ss+v_\iota) u + u\- \big( -v_\epsilon u + u \b c(\epsilon) - v_\ss u- v_\iota u    \big) =\b c(\epsilon).
\end{align}
Which proves the assertion since $\b c(\epsilon) =\b k_1(\epsilon)  + \b v_\epsilon=\u\- k_1(\epsilon) \u + v_\epsilon + \t v_\epsilon=\u\-c(\epsilon)\u + \t v_\epsilon$.

 From the explicit form \eqref{final_ghost1} we see first that both conformal boosts ghost $\iota$ and the Lorentz ghost $s$ has disappeared, meaning that $s_1 \chi_\l=0$ and $s_\l\chi_\l=0$. This reflects the fact that the composite fields $\chi_\l$ are $\K_1$ and $\SO$-invariant. Second, from the derivation \eqref{final_ghost2} of the final ghost $v_\l$ in particular, we see that they are Weyl gauge fields of a non-standard kind. They satisfy the algebra $\mathsf{BRST}_\ww$: 
\begin{align*}
s_\ww\varpi_\l&=-D\b c(\epsilon)=-d\b c(\epsilon) -[\varpi_\l, \b c(\epsilon)]=
\begin{pmatrix}[1.2]  0 & - \nabla \d\epsilon
& 0 \\ 0 &\ -d\epsilon \delta - dx \d\epsilon - g\-\!\!\cdot\!\d\epsilon\  g\!\cdot\!dx & \ \
-g\- \!\cdot\!\nabla \d\epsilon - 2\epsilon g\-\!\!\cdot\! \sP \\ 
0 & 2\epsilon g\!\cdot\!dx  &  0\end{pmatrix} 
\end{align*}
where $-\nabla\d\epsilon=-d\d\epsilon - \d\epsilon\!\cdot\!\Gamma=\left( \nabla_\mu (\d_\nu\epsilon) - \d_\alpha\epsilon\ {\Gamma^\alpha}_{\mu\nu} \right)dx^\mu$, 
\begin{align*}
& s_\ww\b\Omega_\l=[\b\Omega_\l, \b c(\epsilon)]=
 \begin{pmatrix}[1.2] -\d\epsilon\!\cdot\!\sT &   - \d\epsilon\!\cdot\!\left( \sW -f_\l \right) & 0 \\
0 &  \sT\d\epsilon -  g\-\!\!\cdot\!\d\epsilon\ g\!\cdot\! \sT\ & \ -2\epsilon g\-\!\!\cdot\!\sC + g\-\!\!\cdot\! (\sW +f_\l)\!\cdot\!\d\epsilon\\
 0 &  2\epsilon g\!\cdot\!\sT  & * \end{pmatrix}, %\\
 \end{align*}
in the normal case  $\quad s_\ww\b\Omega_{\n, \l}=\begin{pmatrix}[1.2] 0 &   - \d\epsilon\!\cdot\!\sW & 0 \\
0 &  0 & \ -2\epsilon g\-\!\!\cdot\!\sC + g\-\!\!\cdot\!\sW\!\cdot\!\d\epsilon\\
 0 &  0  & 0 \end{pmatrix}.
 $  So this ``pure gauge'' part of $\mathsf{BRST}_\ww$ reproduces the infinitesimal Weyl  transformations of the Linear/Levi-Civita connection and of the metric, Schouten, (generalized) Cotton and Weyl tensors. Compare with \eqref{varpi_L_Z}-\eqref{Omega_L_Z}.
 Finally the part of $\mathsf{BRST}_\ww$ concerning the tractor and ghost fields is
 \begin{align*}
&s_\ww\vphi_\l=-v_\ww\vphi_\l =
\begin{pmatrix} -\epsilon \rho_\l- \d\epsilon\!\cdot\!\ell_\l \\[1mm]  -\epsilon\ell_\l - g\-\!\!\cdot\!\d\epsilon\ \s \\ \epsilon \s \end{pmatrix}=\begin{pmatrix}[1.2] -\epsilon \rho_\l- \d_\nu\epsilon\ \ell_\l^\nu \\[1mm] - \epsilon \ell_\l^\mu - g^{\mu\nu} \d_\nu\epsilon\ \s \\ \epsilon \s \end{pmatrix} 
\qquad \text{and} \qquad s_\ww v_\ww=-{v_\ww}^2=
\begin{pmatrix} 0 & 0  & 0 \\ 0 & 0 & -2\epsilon g\-\!\!\cdot\!\d\epsilon \\ 0 & 0 & 0  \end{pmatrix}.
\end{align*}
The last relation reflects essentially that $s_\ww \epsilon=0$, that is the fact that $\W$ is an abelian group. The first relation reproduces the infinitesimal transformation law of a tractor field, to be compared with  \eqref{Tractor_TConnection_Z}.

\section{Conclusion} %%%%%%%%%%%%%%%%%%%%%%%%%%%%%%%%%%%%%
\label{Conclusion} %%%%%%%%%%%%%%%%%%%%%%%%%%%%%%%%%%%%%%

Tractors and twistors are frameworks devised to deal with conformal calculus on manifolds. Whereas it has been noticed that both are vector bundles associated to the conformal Cartan principal bundle endowed with its normal Cartan connection, it is often deemed more direct and intuitive to produce them, bottom-up, from the prolongation of defining differential equations, the Almost Einstein and Twistor equations respectively. In this paper  we have proposed a straightforward and top-down gauge theoretic construction of tractors via the dressing field method of gauge symmetries reduction. 

 The starting point was the conformal Cartan gauge structure $\left ((\P, \varpi), E \right)$ over $\M$ with gauge symmetry given by the gauge group $\H$ - comprising Weyl $\W$, Lorentz $\SO$ and conformal boosts $\K_1$  groups - acting on gauge variables $\chi=\{\varpi, \b\Omega, \vphi, D\vphi\}$ which are the conformal Cartan connection, its curvature, a section of the naturally associated $\doubleR^{n+2}$-vector bundle and its covariant derivative.  

Applying the dressing field approach we showed that  $\K_1$-invariant composite fields $\chi_1$ could be constructed thanks to a dressing field $u_1$ built out of parts of the Cartan connection $\varpi$.
In particular the dressed section $\vphi_1 \in E_1$ was shown to be indeed a tractor and $D_1=d + \varpi_1$  a generalized tractor connection. The usual one is induced by the dressed normal conformal Cartan connection: $D_{\n, 1}=d +\varpi_{\n, 1}$. 
 Furthermore we stressed that, while the composite fields $\chi_1$ are genuine gauge fields w.r.t the residual Lorentz gauge symmetry $\SO$, the latter are gauge fields of a non-standard kind w.r.t the residual Weyl gauge symmetry $\W$. Such non-standard gauge fields resulting from the dressing field method, implement the gauge principle of physics in a satisfactory way but are not of the same geometric nature than the fields usually underlying gauge theories. The trator bundle with connection $(E_1, D_{\n, 1})$, as a restriction of $(E_1, D_1)$, is then seen to be an instance of non-standard gauge structure over $\M$. 

A further dressing, in the form of the vielbein, allowed to further reduce the Lorentz $\SO$-gauge symmetry so as to produce the non-standard composite $\W$-gauge fields $\chi_\l$. These are essentially the same as $\chi_1$ but written in holonomic frames rather than orthonormal frames. This allows to recover the form of the tractor bundles and tractor connection as usually derived, $(\T, \nabla^\T)=(E_\l, D_{\n, \l})$, as a restriction of the non-standard gauge structure $(E_L, D_L)$. 

The initial conformal gauge structure is encoded infinitesimally in the initial $\mathsf{BRST}$ algebra satisfied by the gauge variables $\chi$. As a general result the dressing field approach modifies the BRST algebra of a gauge structure. We then provided the new algebra $\mathsf{BRST}_1$ satisfied by the composite fields $\chi_1$, as well as the algebra $\mathsf{BRST}_\ww$ satisfied by the composited fields $\chi_\l$. 
\medskip

Tractor calculus was originally devised mainly for conformal geometry, but also for projective geometry, see \cite{Bailey-et-al94}. For the latter case also the construction is via prolongation of a differential equation. The gauge theoretic method that we advocate here ought to apply with equal felicity to projective tractors, the latter being  recovered from dressing of the projective Cartan bundle. Standard results concerning the projective Cartan connection \cite{Kobayashi-Nagano1964} should be obtained as well. This will be the subject of a future paper. For more immediate concern, in a  companion paper we will deal with the application of the dressing approach to twistors, which can be seen as derived from the spinor bundle associated to the conformal Cartan bundle $(\P, \varpi)$.

\section*{Acknowledgement}  %%%%%%%%%%%%%%%%%%%%%%%%%%%%%%%%%%%%%%%%%%

We wish to thank Serge \textsc{Lazzarini} and Thierry \textsc{Masson} (CPT Marseille) for their encouragements and  for supporting discussions while this work was under completion. A special thank is due to Thierry Masson for bringing to our attention the notion of $1$-$\alpha$-cocycle.

%%%%%%%%%%%%%%%%         BIBLIO             %%%%%%%%%%%%%
{
%\Huge
%\huge
%\LARGE
%\Large
%\large
%\normalsize (default)
\small
%\footnotesize
%\scriptsize
%\tiny
 \bibliography{Biblio}
}

\end{document}